\newcommand{\C}{\mathbf{C}}
\newcommand{\N}{\mathbb{N}}
\newcommand{\Q}{\mathbb{Q}}
\newcommand{\Z}{\mathbb{Z}}
\DeclareMathOperator{\Dim}{Dim}
\newcommand{\dimFS}[1]{\dim_{\textup{FS}}^{#1}}
\newcommand{\DimFS}[1]{\Dim_{\textup{FS}}^{#1}}
\newcommand{\mh}{\textup{MH}}
\newcommand{\strong}{\textup{str}}
\newtheorem{theorem}{Theorem}
\numberwithin{theorem}{section}
\newtheorem{corollary}[theorem]{Corollary}
\newtheorem{lemma}[theorem]{Lemma}
\newtheorem{observation}[theorem]{Observation}
\theoremstyle{remark}
\newtheorem{claim}[theorem]{Claim}
\theoremstyle{definition}
\newtheorem{definition}[theorem]{Definition}
\newenvironment{claimproof}{\emph{Proof.}}{\hfill$\triangleleft$}
\title{Multi-Head Finite-State Dimension}
\author{Xiang Huang\footnote{This research was supported in part by U.S. Department of Energy Grant DE-SC0024278.}}
\affil{University of Illinois Springfield}
\author{Xiaoyuan Li\footnote{This research was supported in part by National Science Foundation Grants 1545028 and 1900716.}}
\author{Jack H. Lutz\footnote{This research was supported in part by National Science Foundation Grants 1545028 and 1900716.}}
\affil{Iowa State University}
\author{Neil Lutz}
\affil{Swarthmore College}
\date{\vspace{-3em}}
\begin{document}
\maketitle

\begin{abstract}
We introduce multi-head finite-state dimension, a generalization of finite-state dimension in which a group of finite-state agents (the heads) with oblivious, one-way movement rules, each reporting only one symbol at a time, enable their leader to bet on subsequent symbols in an infinite data stream. In aggregate, such a scheme constitutes an $h$-head finite state gambler whose maximum achievable growth rate of capital in this task, quantified using betting strategies called gales, determines the multi-head finite-state dimension of the sequence.  The 1-head case is equivalent to finite-state dimension as defined by Dai, Lathrop, Lutz and Mayordomo (2004). In our main theorem, we prove a strict hierarchy as the number of heads increases, giving an explicit sequence family that separates, for each positive integer $h$, the earning power of $h$-head finite-state gamblers from that of $(h+1)$-head finite-state gamblers. We prove that multi-head finite-state dimension is stable under finite unions but that the corresponding quantity for any fixed number $h>1$ of heads---the $h$-head finite-state predimension---lacks this stability property.

\end{abstract}

\section{Introduction}\label{sec: introduction}

    This paper introduces \emph{multi-head finite-state dimension} and \emph{multi-head finite-state strong dimension}, two new tools for algorithmic information theory. Our main theorem is a hierarchy result establishing that each additional head strengthens these multi-head dimensions.
    
    The primary historical steps leading to our work are as follows.
    \begin{enumerate}[(i)]
        \item Hausdorff~\cite{Haus19} introduced \emph{Hausdorff dimension}, which became a central tool in classical geometric measure theory. (Note the term ``classical'' here does not mean ``old,'' but rather not involving the theory of computation or related aspects of mathematical logic.)
        \item Tricot~\cite{Tric82} and Sullivan~\cite{Sull84} introduced \emph{packing dimension}, a kind of dual and upper bound of Hausdorff dimension.
        \item J. Lutz~\cite{DCC,DISS} proved a \emph{gale characterization} of Hausdorff dimension that enables \emph{effectivizations} of Hausdorff dimension at various levels of computability and computational complexity.
        \item Dai, Lathrop, J. Lutz, and Mayordomo~\cite{DDLM04} showed that Hausdorff dimension can even be effectivized at the finite-state level.
        \item Athreya, Hitchcock, J. Lutz, and Mayordomo~\cite{AHLM07} proved that the above effectivizations can also be carried out for packing dimension, thus yielding \emph{strong dimensions} at each of these levels.
    \end{enumerate}
    All the above have yielded many applications in algorithmic information theory, computational complexity, geometric measure theory, analytic number theory, and other areas. In particular, the finite-state dimension $\dimFS{}(S)$ and the finite-state strong dimension $\DimFS{}(S)$ of infinite sequences $S$ over a finite alphabet $\Sigma$ have had useful interactions with Borel normal numbers and uniform distribution theory~\cite{Buge2012,ClaRay2025,DoLuNa2007,GLLM2014,KozShe2021,LutMay2021,Lope2006,NanPul2025,NanPul2024,NanVan2016,NanVis2020}, Shannon entropy and Kullback-Leibler divergence~\cite{BoHiVi2005,HitVin2006,HLMS2021,KozShe2019,KozShe2021,DFRD}, and efficient prediction of data streams~\cite{Druc2013,ForLut2005,Hitc2003}. The references here are representative but far from exhaustive. The \href{https://www.eecs.uwyo.edu/~jhitchco/bib/dim/}{Effective Fractal Dimension Bibliography} maintained by Hitchcock is more comprehensive.

    The present paper was motivated by a long-standing open question and inspired by an ingenious proof. The long-standing open question is whether the finite-state setting is the only level of effectivity in which randomness is equivalent to full dimension. Bourke, Hitchcock, and Vinodchandran~\cite{BoHiVi2005} noted that the equivalence of finite-state randomness and finite-state full dimension follows from a theorem of Schnorr and Stimm~\cite{SchSti72}. In contrast, there is no such equivalence for algorithmic randomness and dimension, computable randomness and dimension, or resource-bounded randomness and dimension; in all these settings, there exist non-random sequences with full dimension.

    The search for other settings where this equivalence holds has focused on notions of randomness and dimension that are ``highly effective,'' defined using gamblers that are computationally weaker than polynomial-time algorithms. Examples include the settings of pushdown gamblers, Lempel-Ziv gamblers, and perhaps logspace gamblers~\cite{AlMaMo2017,AMMP2008,DotNic2007,Lope2006,Nich2004,MaMoPe2011}, but it is not yet known whether randomness and full dimension are equivalent in those settings. Here our objective is to investigate a setting that is as close as possible to, but not quite as effective as, the setting of finite-state gamblers.

    The ingenious proof that inspired our choice of a setting to investigate is due to Becher, Carton, and Heiber~\cite{BCH2018}. Translated into the original gambling (i.e., gale) formulation of finite-state dimension~\cite{DDLM04}, Becher, Carton, and Heiber proved that there is an infinite binary sequence $S$ with the following two properties.
    \begin{enumerate}[(1)]
        \item No finite-state gambler $G$ can win unbounded money betting on the successive bits of $S$.
        \item There is a gambler $\widehat{G}$ that wins money exponentially quickly betting on the successive bits of $S$. This gambler $\widehat{G}$ is ``almost'' finite-state, except that it is equipped with an additional \emph{probe to the past}, so that at any time, $\widehat{G}$ knows not only the last bit of the string scanned so far, but also the ``middle'' bit of this string (adjusted appropriately for the fact that this string may have even or odd length).
    \end{enumerate}
            
    Taken together, items (1) and (2) above imply that equipping a finite-state gambler with a ``probe to the past'' as above can make it more powerful than any finite-state gambler. This is intuitively unsurprising, because implementing such a probe requires access to an unbounded amount of past data.

    In a line of research over sixty years old~\cite{RabSco1959,Rose1966}, a device like $\widehat{G}$ is a \emph{two-head finite automaton} whose first head is the ``leading'' (scanning) head of $\widehat{G}$ and whose second head is what we called above a ``probe to the past.'' The \emph{multi-head finite automata} of this literature (surveyed nicely by Holzer, Kutrib, and Malcher~\cite{HKM09}) are typically used to recognize sets of finite strings rather than to compress or gamble on infinite data streams, but the present paper should be regarded as a continuation of this literature as well as a further development of algorithmic dimensions.

    As noted above, the goal here is to investigate a setting that is as close as possible to, but not quite as effective as, the setting of finite-state gamblers.  Accordingly, we require the trailing heads in our multi-head finite-state gamblers to be as weak as possible: They can only move forward (or hold still), and they are \emph{oblivious} in the sense that their movements do not depend on the data that they are scanning. (We are using the terminology of oblivious Turing machines here~\cite{PipFis79}. In the terminology of~\cite{HKM09}, our trailing heads are \emph{data-independent}.) We trust that other, less restrictive variants of our multi-head model will be investigated in due time.

    Section~\ref{sec:mfsg} below defines, for each positive integer $h$, the notion of an \emph{$h$-head finite-state gambler} (\emph{$h$-FSG}) in a straightforward way such that a 1-FSG is a finite-state gambler as in~\cite{DDLM04} and the gambler $\widehat{G}$ above is a 2-FSG. Section~\ref{sec:mfsd} then uses section~\ref{sec:mfsg} to define, for each positive integer $h$ and each sequence $S\in\Sigma^\omega$, the \emph{$h$-head finite-state predimension} $\dimFS{(h)}(S)$ and the \emph{$h$-head finite-state strong predimension} $\DimFS{(h)}(S)$ in such a way that $\dimFS{(1)}(S)=\dimFS{}(S)$ is the finite-state dimension of $S$ as in~\cite{DDLM04} and $\DimFS{(1)}(S)=\DimFS{}(S)$ is the finite-state strong dimension of $S$ as in~\cite{AHLM07}. It is immediately evident that each of these predimensions is, for fixed $S$, a nonincreasing function of $h$, leading immediately to the definitions of the \emph{multi-head finite-state dimension}
    \[\dimFS{\mh}(S)=\inf_{h\in\Z^+}\dimFS{(h)}(S)\]
    and the \emph{multi-head finite-state strong dimension}
    \[\DimFS{\mh}(S)=\inf_{h\in\Z^+}\DimFS{(h)}(S).\]

    We note that our model is definitionally robust, in that N. Lutz~\cite{Lutz26} has very recently shown that our multi-head finite-state predimensions and dimensions are equivalent to natural notions of multi-head finite-state \emph{compressibility}.

    We prove the main theorem of this paper, a hierarchy theorem, in section~\ref{sec:hierarchy}. Recalling that, for each $h\in\Z^+$ and each $S\in\Sigma^\omega$,
    \[\dimFS{(h+1)}(S)\leq\dimFS{(h)}(S)\quad\text{and}\quad \DimFS{(h+1)}(S)\leq\DimFS{(h)}(S),\]
    our main theorem says that each of these inequalities is strict for some sequences $S$. 

    This hierarchy theorem is analogous to Yao and Rivest's nontrivial 1978 proof that, in the context of language recognition, ``$k+1$ heads are better than $k$''~\cite{YR78} and the subsequent line of hierarchy theorems for the recognition power of multi-head finite automata in two-way~\cite{monien1980two, holzer2002multi}  or data-independent~\cite{holzer2002multi,holzer1998data,duris2012note,vdurivs2020tight} settings.
    
    For each positive integer $h$, we define a function $F_{h+1}:\Sigma^\omega\to\Sigma^\omega$, where every sequence $F_{h+1}(S)$ has a structural pattern that is easily exploited by an $(h+1)$-FSG, but each of those $h+1$ heads is essential for this task. We prove that if $R$ is a Martin-L\"of random sequence, then no $h$-FSG can gain any significant advantage from the patterns in $F_{h+1}(R)$. Our proof relies on partitioning $F_{h+1}(R)$ into a family of strings of exponentially increasing length, where each string has high conditional Kolmogorov complexity given the information that can be accessed by any $(h-1)$ heads while the leading head reads that string.

    Each fractal dimension $\texttt{dim}$ (we use this font generically in this paragraph) assigns a dimension $\texttt{dim}(E)$ to each \emph{subset} $E$ of some metric space, which is $\Sigma^\omega$ in this paper. If $\texttt{dim}$ is one of the effective dimensions of algorithmic information theory, one then sets $\texttt{dim}(S)=\texttt{dim}(\{S\})$ for each $S\in\Sigma^\omega$, as we do here. (This would be senseless for \emph{classical} fractal dimensions, because there all singletons have dimension 0.) In general, each fractal dimension $\texttt{dim}$
    has the \emph{stability} property that, for all sets $E$ and $F$,
    \[\texttt{dim}(E\cup F)=\max\{\texttt{dim}(E),\texttt{dim}(F)\}.\]
    This distinguishes fractal dimensions, for example, from measures.

    The finite-state dimensions $\dimFS{}$ and $\DimFS{}$ are stable in the above sense. However, we prove in section~\ref{sec:instability} that, for each $h\geq 2$, the $h$-head finite-state predimensions $\dimFS{(h)}$ and $\DimFS{(h)}$ are \emph{not} stable. We prove in section~\ref{sec:stability} that the multi-head finite-state dimensions $\dimFS{\mh}$ and $\DimFS{\mh}$ \emph{are} stable. This is why we use the ``predimension'' terminology.

\section{Preliminaries}\label{sec:preliminaries}

\paragraph{String and Sequence Notation}
Given a finite alphabet $\Sigma$ of symbols, the sequence space $\Sigma^\omega$ is the set of all infinite sequences over $\Sigma$; when $\Sigma=\{0,1\}$, this is the Cantor space $\C$. For $i\in\N$, we write $S[i]$ for the symbol at position $i$ in a sequence $S\in\Sigma^\omega$; the leftmost symbol of $S$ is $S[0]$. For $i,j\in\N$ with $i\leq j$, we write $S[i..j]$ for the string $S[i]\cdots S[j]\in\Sigma^{j-i+1}$.

\paragraph{\texorpdfstring{$s$}{s}-Gales and Martingales}
Following the standard template for effectivizations of fractal dimensions~\cite{DCC,DISS,AHLM07}, we will base our dimension notions on the success of functions called gales.
\begin{definition}
    For $s \in [0, \infty)$, an \emph{$s$-gale} is a function $d:\Sigma^* \to [0,\infty)$ that satisfies the condition
        \[d(w) = |\Sigma|^{-s}\sum_{b \in \Sigma}d(wb)\]
        for all $w \in \Sigma^*$. A \emph{martingale} is a $1$-gale.
\end{definition}
Intuitively, the value of a gale is interpreted as the capital of a gambler betting on successive symbols in a sequence. The number $s$ parameterizes the favorability of the betting environment. A martingale, where $d(w)$ is the average, over all symbols $b$, of $d(wb)$, corresponds to a perfectly fair betting environment, in which a maximally conservative gambler who places equal bets on all possible one-symbol extensions of $w$ can maintain constant capital. Parameters $s<1$ correspond to less favorable betting environments where ``the house takes a cut.'' In these environments, the gambler would need to exploit some knowledge about the sequence---or be lucky---to maintain constant capital. For a gale to \emph{succeed}, though, we require that the gambler's capital is unbounded, not merely constant.
\begin{definition}
    Let $d$ be an $s$-gale.
    \begin{enumerate}
        \item We say that $d$ \emph{succeeds} on a sequence $S \in \Sigma^{\omega}$ if 
        $\limsup_{n \rightarrow \infty} d(S[0..n-1]) = \infty$.
        The \emph{success set} of $d$ is $S^{\infty}[d]=\{S \in \Sigma^{\omega} \mid d \text{ succeeds on } S\}$.
        
        \item We say that $d$ \emph{succeeds strongly} on a sequence $S \in \Sigma^{\omega}$ if $\liminf_{n \rightarrow \infty} d(S[0..n-1]) = \infty$.
        The \emph{strong success set} of $d$ is $S_{\strong}^{\infty}[d]=\{S \in \Sigma^{\omega} \mid d \text{ succeeds strongly on } S\}$.
    \end{enumerate}
\end{definition}
In general, succeeding on a sequence becomes more difficult as the sequence becomes more complex, as the computational resources available to the gale become more restricted, and as the value $s$ decreases.

\paragraph{Kolmogorov Complexity}
While the success of gales quantifies the predictability of sequences, Kolmogorov complexity quantifies the compressibility of strings. These are closely related concepts, and several of our proofs rely on bounding the compressibility of parts of a sequence to derive bounds on the sequence's overall predictability. Our proofs will use the basic properties of Kolmogorov complexity stated below. Further details and background on Kolmogorov complexity can be found in~\cite{LiVit19}.

Let $U$ be a fixed, universal prefix Turing machine. For strings $\sigma,\tau\in\{0,1\}^*$, the (prefix) \emph{conditional Kolmogorov complexity} of $\sigma$ given $\tau$ is
\[K(\sigma\mid \tau)=\min\{|\pi|\mid \pi\in\{0,1\}^*\text{ and }U(\pi,\tau)=\sigma\},\]
that is, the minimum length of a \emph{program} $\pi$ for $\sigma$ given $\tau$ as an auxiliary input. The \emph{Kolmogorov complexity} of $\sigma$ is $K(\sigma)=K(\sigma\mid\lambda)$,
where $\lambda$ denotes the empty string.
\begin{itemize}
\item \emph{Computable functions do not add information:} For every computable function $f$ and all $\sigma,\tau\in\{0,1\}^*$,
\begin{equation}\label{eq:dpi}
	K(f(\sigma)\mid \tau)\leq K(\sigma\mid\tau)+O(1).
\end{equation}
\item \emph{Symmetry of information:} For all $\sigma,\tau\in\{0,1\}^*$,
\begin{equation}\label{eq:soi}
	K(\sigma,\tau)=K(\sigma\mid \tau,K(\tau))+K(\tau)+O(1).
\end{equation}
\end{itemize}
A sequence $R\in\C$ is \emph{Martin-L\"of random} if there is some constant $c\in\N$ such that, for all $n\in\N$, we have $K(R[0..n-1])\geq n-c$; in this sense, such sequences are \emph{incompressible}. We will construct sequences based on Martin-L\"of random sequences, and our constructed sequences will inherit some of this incompressibility.

\section{Multi-Head Finite-State Gamblers}\label{sec:mfsg}

We now define multi-head finite-state gamblers. These are a generalization of the 1-account finite-state gamblers defined in~\cite{DDLM04}; our definition here is self-contained. We first define the formal components of a gambler, then describe its movements and operation.

The gambler's state space will be the Cartesian product of two sets, $T$ and $Q$. The $T$ component of the state will govern the heads' movements, while the $Q$ component will govern its betting behavior. We will place a requirement on the transition function $\delta$ that it can be decomposed into separate transition functions $\delta_T$ and $\delta_Q$ that act on $T$ and $Q$, respectively, where $\delta_T$ does not depend on the symbols read. This has the effect of isolating the $T$ component of the gambler's state from the tape contents, which makes the head movements data-independent.
\begin{definition}\label{mhg}
Let $h \in \Z^+$. An \emph{$h$-head finite-state gambler} (\emph{$h$-FSG}) is a 7-tuple
\[G = (T \times Q, \Sigma, \delta, \mu, \beta, (t_0, q_0), c_0),\]
where
\begin{itemize}
    \item $T$ and $Q$ are nonempty finite sets, and $T\times Q$ is the set of \emph{states};
    \item $\Sigma$ is the finite alphabet, satisfying $|\Sigma| \geq 2$;
    \item $\delta: T \times Q \times \Sigma^h \rightarrow T \times Q$ is the \emph{transition function}, with the following property: there are functions $\delta_T:T\to T$ and $\delta_Q:Q\times\Sigma^h\to Q$ such that, for all $t\in T$, $q\in Q$, and $\vec{\sigma}\in\Sigma^h$,
    \[\delta(t,q,\vec{\sigma})= (\delta_T(t), \delta_Q(q,\vec{\sigma}));\]
    \item $\mu: T \rightarrow \{0,1\}^{h-1}$ is the \emph{movement function};
    \item $\beta: Q \rightarrow \Delta_\mathbb{Q}(\Sigma)$ is the \emph{betting function}, where $\Delta_\Q(\Sigma)$ denotes all rational-valued discrete probability distributions on $\Sigma$;
    \item $(t_0, q_0) \in T \times Q$ is the \emph{initial state}; and
    \item $c_0$ is the \emph{initial capital}.
\end{itemize}
\end{definition}

\paragraph*{Head Movements and State Transitions}
Intuitively, we regard $G$ as having $h$ \emph{heads}: a \emph{leading head} which advances through the input sequence in discrete steps, and $h-1$ \emph{trailing heads}, some subset of which advance in each step. The positions of the heads are governed by the \emph{positional states}---the members of $T$---and the function $\mu$, which dictates, in each positional state, which of the trailing heads advance. More formally, for $r\in\N$, let $\delta_T^{r}:T\to T$ denote $r$ iterated applications of $\delta_T$. Then the \emph{positional vector} of $G$ after $n$ steps, which tells us the positions of all trailing heads, is determined by the function $\pi:\N\to\N^{h-1}$ defined by $\pi(0)=(0,\ldots,0)$ and, for all $n\in\N$, recursively by $\pi(n+1)=\pi(n)+\mu\left(\delta_T^{n}(t_0)\right)$.

For $n\in\N$ and $S\in\Sigma^\omega$, let $(\pi_1,\ldots,\pi_{h-1})=\pi(n)$ and $\vec\sigma_n=(S[\pi_1],\ldots,S[\pi_{h-1}],S[n])$.
Then the state sequence $(t_0,q_0),(t_1,q_1),(t_2,q_2),\ldots$ of $G$ on input sequence $S\in\Sigma^\omega$ is given, for all $n\in\N$, by
\begin{equation}\label{eq:trajectory}
    (t_{n+1},q_{n+1})=\delta(t_n,q_n,\vec\sigma_n).
\end{equation}

Note that the head movements are \emph{oblivious} in the sense of being data-independent, which, together with the finite-state condition, implies that each of them moves at an essentially constant rate, in the following sense.

\begin{observation}\label{obs:speed}
    For each trailing head $i$, there is a constant $\sigma_i\in[0,1]$, which we call the \emph{speed} of head $i$, such that whenever the leading head is at position $n\in\N$, the position of head $i$ is between $\sigma_i n - |T|$ and $\sigma_i n + |T|$, inclusive.
\end{observation}
\begin{proof}
    As $T$ is finite, repeated application of $\delta_T$ to $t_0$ must eventually result in a cycle, which will thereafter be repeated forever. Let $L$ be the set of positional states in this cycle.

    Let $D$ be the set of positional states that appeared before entering the cycle. Note that $D$ and $L$ are disjoint. After $n$ steps, the cycle has been repeated $(n - |D|)/|L|$ times. For each trailing head $i$, let $\sigma_i\in[0,1]$ be such that head $i$ advances $\sigma_i|L|$ times during the cycle. Then the position of head $i$ after $n$ steps is bounded below by $\sigma_i |L| \left\lfloor \frac{n - |D|}{|L|}\right\rfloor \geq \sigma_i n - |D| - |L| \geq \sigma_i n - |T|$
    and above by $|D| + \sigma_i |L| \left\lceil\frac{n - |D|}{|L|}\right\rceil \leq \sigma_i n + |D| + |L| \leq \sigma_i n + |T|$.
\end{proof}

\section{Multi-Head Finite-State Dimensions}\label{sec:mfsd}
We now define $h$-head finite-state (strong) predimension---which generalize the finite-state dimension of~\cite{DDLM04} and the finite-state strong dimension of~\cite{AHLM07}---and multi-head finite-state (strong) dimension. Definitions~\ref{hydra martingale}--\ref{hdim} continue to follow the standard template for effectivizing fractal dimensions~\cite{DCC,DISS,AHLM07}, differing from the corresponding definitions in~\cite{DDLM04,AHLM07} only in that the gamblers now have $h$ heads.

Briefly, the approach is to define the martingale and $s$-gales induced by the betting function of a specific gambler, then consider the range of parameters $s$ for which it is possible for a gambler's $s$-gale to (strongly) succeed on a given set. Recalling that $s$ parameterizes the favorability of the betting environment, this means we are quantifying how unfavorable the betting environment can become before it becomes impossible for an $h$-FSG to succeed on the set. Intuitively, a gambler can succeed in a more unfavorable environment if the set only contains simple, predictable sequences, so such sets will tend to have lower predimension or dimension.

\begin{definition}\label{hydra martingale}
    The martingale of an $h$-FSG $G$ is the function $d_G: \Sigma^* \rightarrow [0, \infty)$ defined recursively by $d_G(\lambda) = c_0$ and
    $d_G(wb) = |\Sigma| d_G(w)\beta(q_{|w|})(b)$,
    where $q_{|w|}$ is defined as in~\eqref{eq:trajectory}.
\end{definition}

By rescaling this martingale, we derive corresponding $s$-gales.

\begin{definition}\label{fsg s-gale}
    For $s \in [0, \infty)$, the $s$-gale of an $h$-FSG $G$ is the function
    $d_G^{(s)}: \Sigma^* \rightarrow [0, \infty)$ defined, for all $w \in \Sigma^*$, by
    $d_G^{(s)}(w) = |\Sigma|^{(s-1)|w|}d_G(w)$.
\end{definition}

Given a set of sequences, we may then consider the range of parameters $s$ for which $h$-FSGs with (strongly) successful $s$-gales exist.

\begin{definition}\label{mathcalG}
    Let $X \subseteq \Sigma^{\omega}$.
    \begin{enumerate}
        \item $\mathcal{G}_h(X)=\left\{s \in [0, \infty)\:\middle|\;\exists h\text{-FSG }G\text{ with }X \subseteq S^{\infty}\big[d_G^{(s)}\big]\right\}$.
        \item $\mathcal{G}^{\strong}_h(X)=\left\{s \in [0, \infty)\:\middle|\;\exists h\text{-FSG }G\text{ with }X \subseteq S_{\strong}^{\infty}\big[d_G^{(s)}\big]\right\}$.
    \end{enumerate}
\end{definition}

The infima of these ranges then give us our predimensions; the $h$-head finite-state (strong) predimension of a set $X$ is the infimum of the parameters $s$ such that an $h$-FSG can have an $s$-gale that (strongly) succeeds on $X$.

\begin{definition}\label{hdim}
    Let $h\in \Z^+$ and $X\subseteq \Sigma^{\omega}$.
    \begin{enumerate}
        \item The \emph{$h$-head finite-state predimension} of $X$ is
        $\dimFS{(h)}(X) = \inf \mathcal{G}_{h}(X)$.
        \item The \emph{$h$-head finite-state strong predimension} of $X$ is 
        $\DimFS{(h)}(X) = \inf \mathcal{G}_{h}^{\strong}(X)$.
    \end{enumerate}
\end{definition}

Note that $\dimFS{(1)}$ is the finite-state dimension $\dimFS{}$ defined in~\cite{DDLM04} and $\DimFS{(1)}$ is the finite-state strong dimension $\DimFS{}$ defined in~\cite{AHLM07}.

\begin{definition}\label{mhdim}
    Let $X\subseteq\Sigma^{\omega}$.
    \begin{enumerate}
        \item The \emph{multi-head finite-state dimension} of $X$ is $\dimFS{\mh}(X) = \inf_{h\in\Z^+}\dimFS{(h)}(X)$.
        \item The \emph{multi-head finite-state strong dimension} of $X$ is $\DimFS{\mh}(X) = \inf_{h\in\Z^+}\DimFS{(h)}(X)$.
    \end{enumerate}
\end{definition}

Both $h$-head finite-state predimension and multi-head finite-state dimension are monotone, in the following sense.
\begin{observation}\label{obs:mhmonotone}
    Let $X, Y \subseteq \Sigma^{\omega}$.
    \begin{enumerate}
        \item For all $h\in \Z^+$, if $X \subseteq Y$, then $\dimFS{(h)}(X) \le \dimFS{(h)}(Y)$ and $\DimFS{(h)}(X) \le \DimFS{(h)}(Y)$.
        \item If $X \subseteq Y$, then $\dimFS{\mh}(X) \le \dimFS{\mh}(Y)$ and $\DimFS{\mh}(X) \le \DimFS{\mh}(Y)$.
    \end{enumerate}
\end{observation}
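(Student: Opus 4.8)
The plan is to read both parts straight off the definitions, obtaining part~2 as an immediate consequence of part~1. The only thing to keep track of is that the success sets $S^{\infty}[d]$ and $S_{\strong}^{\infty}[d]$ are fixed subsets of $\Sigma^\omega$ determined by the $s$-gale $d$ alone, so the existential quantifier structure in Definition~\ref{mathcalG} passes through an inclusion $X\subseteq Y$ cleanly, without having to touch the gambler at all.

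For part~1, fix $h\in\Z^+$ and suppose $X\subseteq Y$. I would show $\mathcal{G}_h(Y)\subseteq\mathcal{G}_h(X)$: if $s\in\mathcal{G}_h(Y)$, then by definition there is an $h$-FSG $G$ with $Y\subseteq S^{\infty}\big[d_G^{(s)}\big]$; since $X\subseteq Y$, the \emph{same} gambler $G$ gives $X\subseteq S^{\infty}\big[d_G^{(s)}\big]$, so $s\in\mathcal{G}_h(X)$. Passing to infima reverses the inclusion, yielding $\dimFS{(h)}(X)=\inf\mathcal{G}_h(X)\leq\inf\mathcal{G}_h(Y)=\dimFS{(h)}(Y)$. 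The strong case is the same argument verbatim with $\mathcal{G}^{\strong}_h$ and $S_{\strong}^{\infty}$ in place of $\mathcal{G}_h$ and $S^{\infty}$, giving $\DimFS{(h)}(X)\leq\DimFS{(h)}(Y)$.

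For part~2, apply part~1 at every $h$: for each $h\in\Z^+$ we have $\dimFS{(h)}(X)\leq\dimFS{(h)}(Y)$, and taking the infimum over $h$ of both sides (the infimum of a pointwise-smaller family being no larger) gives $\dimFS{\mh}(X)=\inf_{h\in\Z^+}\dimFS{(h)}(X)\leq\inf_{h\in\Z^+}\dimFS{(h)}(Y)=\dimFS{\mh}(Y)$, and likewise $\DimFS{\mh}(X)\leq\DimFS{\mh}(Y)$ from the strong half of part~1. There is no substantive obstacle here; the only point worth making explicit is the order-reversal when moving from set inclusion to infimum in part~1, which is why I carry out that step at the level of the sets $\mathcal{G}_h(\cdot)$ rather than manipulating the dimensions directly.
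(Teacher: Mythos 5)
Your proposal is correct and matches the paper's proof essentially verbatim: both establish $\mathcal{G}_h(Y)\subseteq\mathcal{G}_h(X)$ by reusing the same gambler under the inclusion $X\subseteq Y$, pass to infima for part~1, and obtain part~2 by taking the infimum over $h$. No issues.
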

\begin{proof}
    Let $h\in\Z^+$ and assume $X\subseteq Y$. By Definition~\ref{mathcalG}, for each $s\in\mathcal{G}_h(Y)$ there is some $h$-FSG $G$ such that $X\subseteq Y\subseteq S^\infty\big[d_G^{(s)}\big]$, so $s\in\mathcal{G}_h(X)$. Hence, $\mathcal{G}_h(Y) \subseteq \mathcal{G}_h(X)$, and therefore $\dimFS{(h)}(X) \le \dimFS{(h)}(Y)$, by Definition~\ref{hdim}. $\DimFS{(h)}(X) \le \DimFS{(h)}(Y)$ holds by the same argument.
      
    By Definition~\ref{mhdim}, the second statement follows immediately from the first.
\end{proof}

\begin{definition}
    The \emph{$h$-head finite-state predimension}, \emph{$h$-head finite-state strong predimension}, \emph{multi-head finite-state dimension}, and \emph{multi-head finite-state strong dimension} of an individual sequence $S\in\Sigma^\omega$ are the $h$-head finite-state predimension, $h$-head finite-state strong predimension, multi-head finite-state dimension, and multi-head finite-state strong dimension, respectively, of the singleton $\{S\}$.
\end{definition}

\section{A Hierarchy of Predimensions}\label{sec:hierarchy}
In this section, we prove our main theorem, a strict hierarchy on predimensions, showing that additional heads enhance the predictive power of multi-head finite-state gamblers.
\begin{theorem}\label{thm:main}
    For each $h\in\Z^+$, there is a sequence $Y$ such that
    \[\dimFS{(h+1)}(Y)<\dimFS{(h)}(Y)\quad\text{and}\quad\DimFS{(h+1)}(Y)<\DimFS{(h)}(Y).\]
\end{theorem}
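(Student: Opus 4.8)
The plan is to construct, for each $h\in\Z^+$, an explicit function $F_{h+1}:\C\to\C$ and let $Y=F_{h+1}(R)$ for a Martin-L\"of random sequence $R$. The sequence $F_{h+1}(R)$ should be organized into consecutive \emph{blocks} $B_1,B_2,\ldots$ of exponentially increasing length, where the $k$-th block is built so that an $(h+1)$-FSG can use its $h$ trailing heads, positioned at carefully chosen fractional speeds, to ``look up'' the contents of block $B_k$ from earlier portions of the stream (or from a sparse, heavily-padded copy of $R$ embedded near the front of each block), and thereby bet on $B_k$ with near-perfect accuracy, while the padding dilutes the random content enough that the overall dimension is some value $\alpha<1$. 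Concretely I would interleave, within each block, (a) a ``data'' segment that is essentially a fresh chunk of $R$, and (b) $h$ nested ``echo'' segments, each a slowed-down copy of the data segment at a distinct rational speed, arranged so that when the leading head reads the final echo the $h$ trailing heads (moving at the $h$ distinct speeds dictated by $\mu$) are simultaneously scanning the $h$ copies needed to reconstruct the predicted symbol. The design must guarantee $\dimFS{(h+1)}(F_{h+1}(R))\le\alpha$ and $\DimFS{(h+1)}(F_{h+1}(R))\le\alpha$; the strong-dimension bound is obtained by making the exploitable structure pervasive rather than merely occurring infinitely often.

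The lower bound is the substantive half: I must show $\dimFS{(h)}(F_{h+1}(R))\ge\beta$ for some $\beta>\alpha$ (and the same for $\DimFS{(h)}$). The strategy, following the hint in the introduction, is an incompressibility argument. Fix any $h$-FSG $G$. By Observation~\ref{obs:speed} each of its $h-1$ trailing heads moves at an essentially constant speed, so while the leading head scans a block $B_k$, the information the automaton can have ``pre-read'' with its trailing heads is confined to $h-1$ bounded-width windows elsewhere in the stream, whose contents are determined by $O(\log|B_k|)$ bits (the head positions, offset by at most $|T|$, plus the automaton's current state). The construction of $F_{h+1}$ must ensure that for the relevant sub-blocks, the string the leading head is about to bet on has conditional Kolmogorov complexity close to its length \emph{given} everything those $h-1$ windows can supply — i.e., the $(h+1)$-head echo structure is arranged so that \emph{no} $h-1$ of the speeds suffice to pin down the predicted content, using the randomness of $R$ and symmetry of information to rule out any computable recovery. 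Then a standard gale argument (a $\beta$-gale that is bounded on a block of high conditional complexity cannot multiply its capital on that block by more than a subexponential factor, for $\beta$ below the normalized complexity rate) shows $G$'s $s$-gale fails for every $s<\beta$, and summing the losses across the exponentially growing blocks — where the later, longer blocks dominate — yields the bound for both $\liminf$ and $\limsup$ versions.

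The main obstacle I anticipate is the geometric bookkeeping that makes the Kolmogorov-complexity lower bound actually go through for \emph{every} $h$-FSG simultaneously: I need a single block design whose ``prediction targets'' are unrecoverable from any $(h-1)$-tuple of constant-speed windows, no matter which $h-1$ speeds an adversarial $G$ picks and no matter how $G$'s finitely many states encode auxiliary information. This is exactly where Yao–Rivest-style arguments get delicate, and it is why the paper flags its proof as ``much more involved'' than the language-recognition hierarchy. I would handle it by choosing the $h+1$ speeds in the construction to be $\{0,\tfrac1{h+1},\tfrac2{h+1},\ldots,1\}$ (or a similarly rigid arithmetic progression) and proving a combinatorial lemma: any $h-1$ constant-speed windows, when aligned against the nested echo layout, jointly leave a positive-density set of positions in the target sub-block whose symbols are a fresh, unqueried piece of $R$; feeding this into symmetry of information gives the conditional-complexity gap with only an $O(\log|B_k|)$ additive loss, which is negligible against block length. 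A secondary technical point is verifying the transition function of the winning $(h+1)$-FSG respects the required product form $\delta(t,q,\vec\sigma)=(\delta_T(t),\delta_Q(q,\vec\sigma))$ with oblivious movement $\mu$, which constrains the echo speeds to be rational and fixed in advance — consistent with the construction above, but requiring care so that the heads are exactly synchronized on block boundaries despite the $\pm|T|$ slack in Observation~\ref{obs:speed}, which I would absorb by inserting short neutral ``alignment'' gadgets between blocks.
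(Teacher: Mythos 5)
Your overall strategy matches the paper's: apply a structured transformation to a Martin-L\"of random $R$, show an $(h+1)$-FSG with trailing heads at designated rational speeds exploits the structure to get dimension below $1$, and show via conditional Kolmogorov complexity plus a counting/gale argument, iterated over exponentially growing prefixes, that no $h$-FSG can do better than dimension $1$. The aggregation step you describe is essentially the paper's Lemma~\ref{lem:dbound}, and your ``high conditional complexity $\Rightarrow$ bounded capital growth'' step is Lemma~\ref{lem:dtok}. But there is a genuine gap at the heart of your construction. You define each ``echo'' segment to be a slowed-down \emph{copy} of the data segment. A copy of earlier data can be predicted by a \emph{single} trailing head placed at the right constant speed: while the leading head traverses the final echo, one trailing head tracking the data segment (or any earlier echo) already determines every predicted symbol. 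So a $2$-FSG would succeed on your $Y$, the $h$-head predimension would drop below $1$ as well, and no separation between $h$ and $h+1$ heads follows. You assert the property you need---``no $h-1$ of the speeds suffice to pin down the predicted content''---but the echo-copy layout provides no mechanism enforcing it. The paper enforces it by making each predictable bit the \emph{parity of $h$ bits} located at the prime-ratio positions $qp_1,\dots,qp_h$ (Definition~\ref{functionp1}): since missing any one summand of a parity of otherwise-fresh random bits leaves the parity unpredictable, all $h$ trailing heads are genuinely necessary, and the prime structure is what makes the bookkeeping (the sets $V_j$, their closures, Lemma~\ref{lem:disjoint}, and the computable-recovery Claim~\ref{Claim:comp}) go through.

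Two secondary points. First, your proposed combinatorial lemma (``any $h-1$ constant-speed windows leave a positive-density set of positions in the target sub-block whose symbols are a fresh, unqueried piece of $R$'') is not the right statement even for a corrected construction: the predicted positions are \emph{not} fresh $R$-bits---they are determined by earlier bits---so what you must show is that their \emph{source} bits avoid the trailing-head windows and that those sources are computably recoverable from the target together with the unwindowed prefix, so that symmetry of information transfers the randomness of $R$ into a lower bound on $K(\text{target}\mid\text{window contents})$; this is exactly the role of Lemmas~\ref{lem:disjoint} and~\ref{lem:3inequ}. (Relatedly, the window \emph{contents} are linearly many bits, not $O(\log|B_k|)$ bits; only the window locations and the machine state are cheap to specify.) Second, lowering the $(h+1)$-head dimension by padding is both unnecessary and risky: padding is exploitable by a $1$-FSG, so it depresses the $h$-head value too and forces you into a quantitative comparison; the paper instead fills all non-parity positions with fresh bits of $R$, so that $\dimFS{(h)}(Y)=\DimFS{(h)}(Y)=1$ exactly while $\DimFS{(h+1)}(Y)\le 1-\tfrac{1}{p_{h+1}}$, giving the separation cleanly.
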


We will prove this by defining an explicit family of functions on sequences, which we will then apply to Martin-L\"of random sequences. The goal, for each $h\in\Z^+$, is to define sequences in which a constant fraction of the bits depend on exactly $h$ well-spaced bits from earlier in the sequence, so that $h$ trailing heads are necessary and sufficient to access the relevant earlier bits.

\begin{definition}
    For each $i\in\Z^+$, let $p_i$ denote the $i$\textsuperscript{th} prime number: $p_1=2$, $p_2=3$, etc. For each $x\in\mathbb{N}$, define the \emph{multiplicity} of $p_i$ in $x$ to be $\nu_i(x)=\max\big\{k\in\mathbb{N}\;\big|\;p_i^k\mid x\big\}$.
\end{definition}

\begin{definition}\label{functionp1} 
For each $h \in \Z^+$, define the function $F_{h+1}: \C \rightarrow \C$ as follows. For all $S \in \C$, let $F_{h+1}(S)[0]=S[0]$ and, for all $q \in \N$ and $0 \le r < p_{h+1}$, let
\[
    F_{h+1}(S)[qp_{h+1}+r]= 
    \begin{cases}
        S[q(p_{h+1}-1) + r]& \text{if } r>0\\
        \underset{k=1}{\overset{h}{\bigoplus}} F_{h+1}(S)[qp_k]& \text{if }r=0, 
    \end{cases}
\]
where $\oplus$ is the parity operation.
\end{definition}
For example, in the sequence $F_3(S)$, four out of every five bits are ``new'' bits taken directly from $S$, and the fifth bit, at some index $5q$, is given by the parity of the bits at indices $2q$ and $3q$ in $F_3(S)$. Those bits themselves might come directly from $S$ or, if $q$ is a multiple of 5, the recursive dependence may continue. Note that an index in $S$ may be reachable by an even number of paths in the recursion tree whose root is at $5q$, in which case cancellation makes $F_3(S)[5q]$ insensitive to the bit at that index. In Figure~\ref{fig:f3} we show an example of the depth-2 recursive calculation of $F_3(S)[150]$, in which $S[29]$ cancels itself out.

\tikzset{every picture/.style={line width=0.75pt}}    

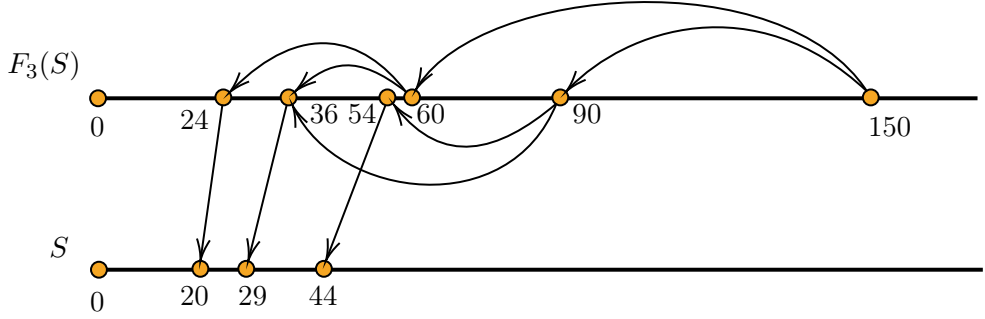
\begin{figure}
    \tikzset{every picture/.style={line width=0.75pt}} 

\begin{tikzpicture}[x=0.75pt,y=0.75pt,yscale=-1,xscale=1]

\draw [line width=1.5]    (99.55,108.86) -- (545.33,108.86) ;
\draw [line width=1.5]    (105.78,195.38) -- (548,195.38) ;
\draw  [fill={rgb, 255:red, 245; green, 166; blue, 35 }  ,fill opacity=1 ] (172.52,194.93) .. controls (172.52,197.15) and (174.31,198.94) .. (176.52,198.94) .. controls (178.73,198.94) and (180.52,197.15) .. (180.52,194.93) .. controls (180.52,192.71) and (178.73,190.92) .. (176.52,190.92) .. controls (174.31,190.92) and (172.52,192.71) .. (172.52,194.93) -- cycle ;
\draw  [fill={rgb, 255:red, 245; green, 166; blue, 35 }  ,fill opacity=1 ] (487.5,108.41) .. controls (487.5,110.63) and (489.29,112.43) .. (491.5,112.43) .. controls (493.71,112.43) and (495.5,110.63) .. (495.5,108.41) .. controls (495.5,106.2) and (493.71,104.4) .. (491.5,104.4) .. controls (489.29,104.4) and (487.5,106.2) .. (487.5,108.41) -- cycle ;
\draw  [fill={rgb, 255:red, 245; green, 166; blue, 35 }  ,fill opacity=1 ] (97.78,108.86) .. controls (97.78,111.07) and (99.57,112.87) .. (101.78,112.87) .. controls (103.99,112.87) and (105.78,111.07) .. (105.78,108.86) .. controls (105.78,106.64) and (103.99,104.84) .. (101.78,104.84) .. controls (99.57,104.84) and (97.78,106.64) .. (97.78,108.86) -- cycle ;
\draw  [fill={rgb, 255:red, 245; green, 166; blue, 35 }  ,fill opacity=1 ] (330.9,108.41) .. controls (330.9,110.63) and (332.69,112.43) .. (334.9,112.43) .. controls (337.11,112.43) and (338.9,110.63) .. (338.9,108.41) .. controls (338.9,106.2) and (337.11,104.4) .. (334.9,104.4) .. controls (332.69,104.4) and (330.9,106.2) .. (330.9,108.41) -- cycle ;
\draw  [fill={rgb, 255:red, 245; green, 166; blue, 35 }  ,fill opacity=1 ] (256.15,108.41) .. controls (256.15,110.63) and (257.95,112.43) .. (260.16,112.43) .. controls (262.37,112.43) and (264.16,110.63) .. (264.16,108.41) .. controls (264.16,106.2) and (262.37,104.4) .. (260.16,104.4) .. controls (257.95,104.4) and (256.15,106.2) .. (256.15,108.41) -- cycle ;
\draw  [fill={rgb, 255:red, 245; green, 166; blue, 35 }  ,fill opacity=1 ] (243.7,108.41) .. controls (243.7,110.63) and (245.49,112.43) .. (247.7,112.43) .. controls (249.91,112.43) and (251.71,110.63) .. (251.71,108.41) .. controls (251.71,106.2) and (249.91,104.4) .. (247.7,104.4) .. controls (245.49,104.4) and (243.7,106.2) .. (243.7,108.41) -- cycle ;
\draw  [fill={rgb, 255:red, 245; green, 166; blue, 35 }  ,fill opacity=1 ] (193.87,108.41) .. controls (193.87,110.63) and (195.66,112.43) .. (197.87,112.43) .. controls (200.09,112.43) and (201.88,110.63) .. (201.88,108.41) .. controls (201.88,106.2) and (200.09,104.4) .. (197.87,104.4) .. controls (195.66,104.4) and (193.87,106.2) .. (193.87,108.41) -- cycle ;
\draw  [fill={rgb, 255:red, 245; green, 166; blue, 35 }  ,fill opacity=1 ] (160.95,108.41) .. controls (160.95,110.63) and (162.74,112.43) .. (164.95,112.43) .. controls (167.16,112.43) and (168.96,110.63) .. (168.96,108.41) .. controls (168.96,106.2) and (167.16,104.4) .. (164.95,104.4) .. controls (162.74,104.4) and (160.95,106.2) .. (160.95,108.41) -- cycle ;
\draw  [fill={rgb, 255:red, 245; green, 166; blue, 35 }  ,fill opacity=1 ] (149.38,194.93) .. controls (149.38,197.15) and (151.17,198.94) .. (153.39,198.94) .. controls (155.6,198.94) and (157.39,197.15) .. (157.39,194.93) .. controls (157.39,192.71) and (155.6,190.92) .. (153.39,190.92) .. controls (151.17,190.92) and (149.38,192.71) .. (149.38,194.93) -- cycle ;
\draw  [fill={rgb, 255:red, 245; green, 166; blue, 35 }  ,fill opacity=1 ] (211.67,194.93) .. controls (211.67,197.15) and (213.46,198.94) .. (215.67,198.94) .. controls (217.88,198.94) and (219.67,197.15) .. (219.67,194.93) .. controls (219.67,192.71) and (217.88,190.92) .. (215.67,190.92) .. controls (213.46,190.92) and (211.67,192.71) .. (211.67,194.93) -- cycle ;
\draw  [fill={rgb, 255:red, 245; green, 166; blue, 35 }  ,fill opacity=1 ] (98.22,195.38) .. controls (98.22,197.59) and (100.01,199.39) .. (102.22,199.39) .. controls (104.44,199.39) and (106.23,197.59) .. (106.23,195.38) .. controls (106.23,193.16) and (104.44,191.36) .. (102.22,191.36) .. controls (100.01,191.36) and (98.22,193.16) .. (98.22,195.38) -- cycle ;
\draw    (263.56,102.11) .. controls (296.13,52.27) and (446.66,39.51) .. (489.61,105.4) ;
\draw [shift={(262.16,104.41)}, rotate = 299.58] [color={rgb, 255:red, 0; green, 0; blue, 0 }  ][line width=0.75]    (10.93,-3.29) .. controls (6.95,-1.4) and (3.31,-0.3) .. (0,0) .. controls (3.31,0.3) and (6.95,1.4) .. (10.93,3.29)   ;
\draw    (339.6,103.6) .. controls (368.13,74.15) and (432.93,52.08) .. (489.61,105.4) ;
\draw [shift={(337.9,105.41)}, rotate = 312.29] [color={rgb, 255:red, 0; green, 0; blue, 0 }  ][line width=0.75]    (10.93,-3.29) .. controls (6.95,-1.4) and (3.31,-0.3) .. (0,0) .. controls (3.31,0.3) and (6.95,1.4) .. (10.93,3.29)   ;
\draw    (169.59,104.37) .. controls (204.38,69.4) and (235.35,77.11) .. (257,105.5) ;
\draw [shift={(168,106)}, rotate = 313.89] [color={rgb, 255:red, 0; green, 0; blue, 0 }  ][line width=0.75]    (10.93,-3.29) .. controls (6.95,-1.4) and (3.31,-0.3) .. (0,0) .. controls (3.31,0.3) and (6.95,1.4) .. (10.93,3.29)   ;
\draw    (203.01,103.73) .. controls (218.94,90.17) and (232.06,86.06) .. (257,105.5) ;
\draw [shift={(201.27,105.25)}, rotate = 318.61] [color={rgb, 255:red, 0; green, 0; blue, 0 }  ][line width=0.75]    (10.93,-3.29) .. controls (6.95,-1.4) and (3.31,-0.3) .. (0,0) .. controls (3.31,0.3) and (6.95,1.4) .. (10.93,3.29)   ;
\draw    (200.86,114.52) .. controls (223.39,159.94) and (308.98,171.18) .. (333,111.46) ;
\draw [shift={(199.87,112.43)}, rotate = 66.1] [color={rgb, 255:red, 0; green, 0; blue, 0 }  ][line width=0.75]    (10.93,-3.29) .. controls (6.95,-1.4) and (3.31,-0.3) .. (0,0) .. controls (3.31,0.3) and (6.95,1.4) .. (10.93,3.29)   ;
\draw    (252.33,113.83) .. controls (275.52,143.05) and (303.13,136.92) .. (333,111.46) ;
\draw [shift={(250.91,112)}, rotate = 53.1] [color={rgb, 255:red, 0; green, 0; blue, 0 }  ][line width=0.75]    (10.93,-3.29) .. controls (6.95,-1.4) and (3.31,-0.3) .. (0,0) .. controls (3.31,0.3) and (6.95,1.4) .. (10.93,3.29)   ;
\draw    (153.66,188.95) -- (164.51,111.98) ;
\draw [shift={(153.39,190.93)}, rotate = 278.02] [color={rgb, 255:red, 0; green, 0; blue, 0 }  ][line width=0.75]    (10.93,-3.29) .. controls (6.95,-1.4) and (3.31,-0.3) .. (0,0) .. controls (3.31,0.3) and (6.95,1.4) .. (10.93,3.29)   ;
\draw    (178,188.99) -- (196.87,112.43) ;
\draw [shift={(177.52,190.93)}, rotate = 283.85] [color={rgb, 255:red, 0; green, 0; blue, 0 }  ][line width=0.75]    (10.93,-3.29) .. controls (6.95,-1.4) and (3.31,-0.3) .. (0,0) .. controls (3.31,0.3) and (6.95,1.4) .. (10.93,3.29)   ;
\draw    (217.38,189.06) -- (246.7,112.43) ;
\draw [shift={(216.67,190.93)}, rotate = 290.93] [color={rgb, 255:red, 0; green, 0; blue, 0 }  ][line width=0.75]    (10.93,-3.29) .. controls (6.95,-1.4) and (3.31,-0.3) .. (0,0) .. controls (3.31,0.3) and (6.95,1.4) .. (10.93,3.29)   ;

\draw (488.68,116.86) node [anchor=north west][inner sep=0.75pt]   [align=left] {150};
\draw (54.64,83.8) node [anchor=north west][inner sep=0.75pt]   [align=left] {$\displaystyle F_{3}( S)$};
\draw (75.81,177.46) node [anchor=north west][inner sep=0.75pt]   [align=left] {$\displaystyle S$};
\draw (96.28,116.86) node [anchor=north west][inner sep=0.75pt]   [align=left] {0};
\draw (96.28,205.16) node [anchor=north west][inner sep=0.75pt]   [align=left] {0};
\draw (339.64,110.17) node [anchor=north west][inner sep=0.75pt]   [align=left] {90};
\draw (260.89,110.17) node [anchor=north west][inner sep=0.75pt]   [align=left] {60};
\draw (226.19,110.62) node [anchor=north west][inner sep=0.75pt]   [align=left] {54};
\draw (207.06,110.17) node [anchor=north west][inner sep=0.75pt]   [align=left] {36};
\draw (141.66,113.29) node [anchor=north west][inner sep=0.75pt]   [align=left] {24};
\draw (141.66,202.49) node [anchor=north west][inner sep=0.75pt]   [align=left] {20};
\draw (171.02,202.49) node [anchor=north west][inner sep=0.75pt]   [align=left] {29};
\draw (206.61,202.49) node [anchor=north west][inner sep=0.75pt]   [align=left] {44};

\end{tikzpicture}
    \caption{$F_3(S)[150]=S[20]\oplus S[29] \oplus S[29]\oplus S[44]=S[20]\oplus S[44]$.}
    \label{fig:f3}
\end{figure}

\begin{lemma}\label{lem:dim}
   For all $S\in\C$ and $h \in \mathbb{Z}^+$,
   \[\dimFS{(h+1)}(F_{h+1}(S)) \le \DimFS{(h+1)}(F_{h+1}(S))\le 1-\frac{1}{p_{h+1}}.\]
\end{lemma}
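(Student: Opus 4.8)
The plan is to establish the two inequalities separately. The first, $\dimFS{(h+1)}(F_{h+1}(S))\le\DimFS{(h+1)}(F_{h+1}(S))$, is immediate from the definitions: strong success of an $s$-gale implies success, so $\mathcal{G}^{\strong}_{h+1}(X)\subseteq\mathcal{G}_{h+1}(X)$ for every $X\subseteq\C$, and taking infima gives $\dimFS{(h+1)}(X)\le\DimFS{(h+1)}(X)$. For the second inequality it suffices, by the infimum defining $\DimFS{(h+1)}$, to show that for every rational $\epsilon$ with $0<\epsilon<1/p_{h+1}$ the value $s=1-\frac{1}{p_{h+1}}+\epsilon$ lies in $\mathcal{G}^{\strong}_{h+1}(\{F_{h+1}(S)\})$. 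I will exhibit a single $(h+1)$-FSG $G$ --- depending only on $h$ and $s$, not on $S$ --- whose $s$-gale $d_G^{(s)}$ succeeds strongly on $F_{h+1}(S)$ for every $S\in\C$.

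The gambler is built around the identity, read off from Definition~\ref{functionp1}, that for every $q\ge1$ the symbol $F_{h+1}(S)[qp_{h+1}]$ is the parity $\bigoplus_{k=1}^{h}F_{h+1}(S)[qp_k]$ of $h$ earlier symbols. To make those $h$ symbols available with oblivious heads, I take the positional-state set $T=\{0,1,\dots,p_{h+1}-1\}$ with $\delta_T$ the successor modulo $p_{h+1}$ and $t_0=0$, and I define $\mu(t)=(\mu_1(t),\dots,\mu_h(t))$ by letting $\mu_k(t)=1$ iff $t<p_k$. Then trailing head $k$ moves $p_k$ times per period and never moves from positional state $p_{h+1}-1$; summing $\mu_k$ over the first $qp_{h+1}-1$ steps ($q-1$ complete periods contributing $p_k$ each, plus a partial period of length $p_{h+1}-1$ contributing $\min(p_k,p_{h+1}-1)=p_k$ more) shows that head $k$ is exactly at position $qp_k$ at the step when the leading head is at position $qp_{h+1}-1$. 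On the betting side I take $Q=\{0,1,\dots,p_{h+1}-1\}\times\{0,1,\bot\}$, where $\delta_Q$ increments the first coordinate modulo $p_{h+1}$ (independently of the scanned symbols, so that it always records the leading-head position modulo $p_{h+1}$) and sets the second coordinate to the parity of the $h$ trailing-head symbols of $\vec\sigma$ exactly when it leaves a state whose first coordinate is $p_{h+1}-1$, and to $\bot$ otherwise. Finally $\beta$ bets all capital on $b$ from any state $(0,b)$ with $b\in\{0,1\}$ --- which, as one checks from the synchronization above, is precisely the configuration one transition before the leading head reads $F_{h+1}(S)[qp_{h+1}]$ for some $q\ge1$, with $b=\bigoplus_{k=1}^{h}F_{h+1}(S)[qp_k]=F_{h+1}(S)[qp_{h+1}]$ --- and bets uniformly in every other state, in particular on the anomalous symbol at position $0$, where the parity recursion of Definition~\ref{functionp1} is short-circuited.

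It then remains to bound $d_G^{(s)}$, which is a routine computation. Since $|\Sigma|=2$, reading the $n$-th symbol multiplies the $s$-gale by $2^{s}$ when $n=qp_{h+1}$ for some $q\ge1$ (the all-in bet is correct) and by $2^{s-1}$ otherwise. The first $Qp_{h+1}$ symbols contain exactly $Q-1$ of the former, so $d_G^{(s)}(F_{h+1}(S)[0..Qp_{h+1}-1])=c_0\,2^{s(Q-1)+(s-1)(Qp_{h+1}-Q+1)}=c_0\,2^{\epsilon Q p_{h+1}-1}$, which tends to infinity with $Q$; and within any block of $p_{h+1}$ consecutive positions the gale is multiplied by at most $p_{h+1}$ factors each at least $2^{s-1}$, so $d_G^{(s)}(F_{h+1}(S)[0..n-1])$ is always at least a fixed positive constant times the last such block value. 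Hence $\liminf_{n}d_G^{(s)}(F_{h+1}(S)[0..n-1])=\infty$, i.e.\ $d_G^{(s)}$ succeeds strongly on $F_{h+1}(S)$, completing the argument. I expect the only real obstacle to be the synchronization in the middle paragraph: engineering the data-independent, one-way head motion and the finite-state bookkeeping so that all $h$ relevant past symbols arrive in a single transition exactly one step before the bet that uses them, while correctly treating the boundary symbol at position $0$. (Note $s=1-\frac1{p_{h+1}}$ itself cannot be used: there the displayed exponent collapses to the constant $-1$ and the gale stays bounded, which is why the bound is stated with an infimum rather than attained.)
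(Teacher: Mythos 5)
Your proof is correct and takes essentially the same approach as the paper: an explicit $(h+1)$-FSG whose trailing head $k$ advances $p_k$ times per period of $p_{h+1}$ so that the bits $F_{h+1}(S)[qp_1],\ldots,F_{h+1}(S)[qp_h]$ are available when betting on position $qp_{h+1}$, predicting that bit with certainty and betting uniformly elsewhere, followed by the same gale computation. The only difference is bookkeeping detail: you align all trailing heads to deliver the parity operands in a single transition, whereas the paper staggers the heads and accumulates the XOR across the period---both are fine.
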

\begin{proof}[Proof sketch]
    For each $h\in\Z^+$, we construct an $(h+1)$-head finite-state gambler $G$ such that, for all $\epsilon>0$ and all $S\in\C$, the sequence $F_{h+1}(S)$ is in the success set of the $(1-\frac{1}{p_{h+1}}+\epsilon)$-gale of $G$. In each round of $p_{h+1}$ steps, the $i$\textsuperscript{th} head of $G$ advances $p_i$ times, such that the trailing heads point to the bits referenced in the next parity calculation. This means that $G$ can predict one out of every $p_{h+1}$ bits with certainty, and it bets uniformly on all other bits. This corresponds to its $(1-\frac{1}{p_{h+1}}+\epsilon)$-gale amassing unbounded capital.

    A complete proof is in Appendix~\ref{app:hierarchy}.
\end{proof} 

Fix any positive integer $h$ and any Martin-L\"of random sequence $R\in\C$, and let $Y=F_{h+1}(R)$. To prove Theorem~\ref{thm:main}, it suffices to show that the $h$-head finite-state predimension of $Y$ is 1. To do this, our high-level plan is to show that for every $h$-FSG and each sufficiently long prefix $Y[0..n]$ of $Y$, there is a long suffix $Y[b..n]$ that has high conditional Kolmogorov complexity given the information the trailing heads might read while the leading head reads $Y[b..n]$. This high conditional complexity will bound the rate of growth of the $h$-FSG's martingale on that suffix, and we will apply this algorithmic information-theoretic argument iteratively to bound the martingale's overall rate of growth.

\medskip

Let $G$ be any $h$-FSG over the alphabet $\{0,1\}$. Let $\sigma_1,\ldots,\sigma_{h-1}$ be the trailing head speeds guaranteed by Observation~\ref{obs:speed}, and let $c=|T|$. Without loss of generality, assume $0<\sigma_1< \ldots<\sigma_{h-1}<1$.

We now introduce some additional notation that will be helpful in carefully analyzing the recursive structure of the sequence $Y$. Let $n\in\N$. For every set $A\subseteq[0,n]$, define the string $Y[A]\in\{0,1\}^{n+1}$ by
\[Y[A][i]=\begin{cases}Y[i]&\text{if }i\in A\\0&\text{otherwise,}\end{cases}\]
for all $i\in\{0,\ldots,n\}$. Note that $Y[A]=Y[A\cap\N]$; we will only use integer indices of strings and sequences, but it will sometimes be convenient to consider the real intervals those integers belong to.

Let $b=\lfloor\gamma n\rfloor+1$, where $\gamma\in(0,1) \cap \Q$ is a parameter whose value will be defined in Lemma~\ref{lem:disjoint}. Define $u_h=Y[b..n]$,
\[U=\bigcup_{i=1}^{h-1}[\sigma_ib-c,\sigma_in+c],\]
and $u=Y[U]$. Intuitively, the string $u$ contains all bits that might be read by any trailing head while the leading head $h$ reads $u_h$.

Let $d\in\mathbb{N}$ be a constant that does not depend on $n$. For each $1\leq i\leq h$, define
\[V_i=\left\{(p_i/p_{h+1})^{\nu_{h+1}(m)}m \;\middle|\; m\in[b..n]\text{ and }\nu_{h+1}(m)\leq d\right\}\]
and its \emph{closure}
\[\overline{V}_i=\left\{m\cdot \frac{p_{h+1}^{a_1+\ldots+a_h}}{p_1^{a_1}\cdots p_h^{a_h}}\;\middle|\; m\in V_i\text{ and }a_1,\ldots,a_h\in\N\right\}.\] 

Note that $V_i=\overline{V}_i\cap\{m\mid \nu_{h+1}(m)=0\}$. Define $v_i=Y[V_i]$ and $\overline{v}_i=Y[\overline{V}_i]$.

We prove the following four lemmas in Appendix~\ref{app:hierarchy}. Informally, Lemma~\ref{lem:disjoint} shows that we can tune parameters such that $u$ excludes one of the $h$ regions relevant to calculating parity bits in the suffix $u_h$; Lemma~\ref{lem:3inequ} shows the algorithmic information consequence that $u_h$ has high complexity conditioned on $u$; Lemma~\ref{lem:dtok} shows that when this conditional Kolmogorov complexity is high, a gambler cannot win quickly while betting on the suffix $u_h$; and Lemma~\ref{lem:dbound} applies this reasoning inductively to show that a gambler cannot win much on any prefix of $Y$.
\begin{lemma}\label{lem:disjoint}
    There are constants (depending on $c$, $d$, $h$, and the trailing head speeds, but not on $n$) $\gamma\in(1/2,1)\cap \Q$, $n_0\in\N$, and $1\leq j\leq h$ such that if $n\geq n_0$, then $U\cap \overline{V}_j=\emptyset$.
\end{lemma}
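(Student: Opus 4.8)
\textit{Proof plan.} The plan is to show that $U$ and $\overline{V}_j$ are separated not by magnitude but by multiplicative structure, and to use the fact that an $h$-FSG has only $h-1$ trailing heads to select, by pigeonhole, a ``hard'' index $j$. Unwinding the definitions of $V_j$ and $\overline{V}_j$, every $x\in\overline{V}_j$ can be written as $x=m_0\cdot r$ with $m_0\in[b..n]$, $\nu_{h+1}(m_0)=:k_0\le d$, and $r=(p_j/p_{h+1})^{k_0}\prod_{l=1}^{h}(p_{h+1}/p_l)^{a_l}$ for some $a_1,\dots,a_h\in\N$. By Observation~\ref{obs:speed}, $U\subseteq\bigcup_{i=1}^{h-1}[\sigma_i b-c,\sigma_i n+c]$, which is empty when $h=1$, so the lemma is trivial in that case; assume $h\ge 2$. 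The first simplification: as soon as $\gamma>\sigma_{h-1}$ and $n$ is large enough that $b>\sigma_{h-1}n+c$, any $x\in U$ satisfies $x<b$, and since $m_0\ge b$ this forces $r=x/m_0<1$ in every representation $x=m_0 r$. So it suffices to rule out $x\in U\cap\overline{V}_j$, for every representation of which $r<1$.

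For such $x$, combining the factor $(p_j/p_{h+1})^{k_0}$ with $(p_{h+1}/p_j)^{a_j}$ shows that $r<1$ forces $a_j<k_0$, whence $r=(p_j/p_{h+1})^{\epsilon}\prod_{l\ne j}(p_{h+1}/p_l)^{a_l}$ with $1\le\epsilon:=k_0-a_j\le d$. Since every $p_{h+1}/p_l>1$ and $r<1$ yields $\prod_{l\ne j}(p_{h+1}/p_l)^{a_l}<(p_{h+1}/p_1)^{d}$, only finitely many tuples $(\epsilon,(a_l)_{l\ne j})$ are admissible; let $\mathcal{R}_j$ denote the resulting finite set of rationals. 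The structural heart of the argument: for each $r\in\mathcal{R}_j$ in lowest terms, its numerator is divisible by $p_j^{\epsilon}$ (with $\epsilon\ge 1$) but by no prime $p_l$ with $l\le h$ and $l\ne j$. Consequently $\sigma_i\in\mathcal{R}_j$ can hold only when $p_j$ is the \emph{unique} prime $\le p_h$ dividing the numerator of $\sigma_i$ in lowest terms, so each speed $\sigma_i$ lies in at most one of the sets $\mathcal{R}_1,\dots,\mathcal{R}_h$. As there are $h$ such sets and only $h-1$ speeds, some index $j^\ast$ satisfies $\sigma_i\notin\mathcal{R}_{j^\ast}$ for every $i$.

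Fix this $j^\ast$. Then $\mathcal{R}_{j^\ast}$ is a finite set disjoint from $\{\sigma_1,\dots,\sigma_{h-1}\}$, so $\eta:=\min_{i}\operatorname{dist}(\sigma_i,\mathcal{R}_{j^\ast})>0$. I would choose $\gamma\in(1/2,1)\cap\Q$ with $\gamma>\sigma_{h-1}$ and $\gamma$ close enough to $1$ that $[\sigma_i\gamma,\sigma_i/\gamma]\subseteq(\sigma_i-\eta/2,\sigma_i+\eta/2)$ for all $i$, and then $n_0$ large enough that $n\ge n_0$ implies $b>\sigma_{h-1}n+c$ and $c/(\gamma n)<\eta/2$. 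Now if $x=m_0 r\in U\cap\overline{V}_{j^\ast}$ with $r<1$ and $x\in[\sigma_i b-c,\sigma_i n+c]$, then $m_0\in[b..n]$ gives $r=x/m_0\in[\sigma_i\gamma-c/n,\ \sigma_i/\gamma+c/(\gamma n)]\subseteq(\sigma_i-\eta,\sigma_i+\eta)$, contradicting $r\in\mathcal{R}_{j^\ast}$ together with $\operatorname{dist}(\sigma_i,\mathcal{R}_{j^\ast})\ge\eta$. Hence $U\cap\overline{V}_{j^\ast}=\emptyset$ for all $n\ge n_0$, with $\gamma$, $n_0$, $j^\ast$ depending only on $c$, $d$, $h$, and the trailing head speeds.

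The step I expect to be the main obstacle is the $p$-adic bookkeeping in the second paragraph: correctly performing the case split on $a_j$ versus $k_0$, confirming that the admissible $r$'s form a genuinely finite set, and pinning down exactly which primes can divide their numerators. Once that structural fact is established, the pigeonhole over $j$ and the tuning of $\gamma$ and $n_0$ are routine.
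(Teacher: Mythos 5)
Your proof is correct, and it routes the key pigeonhole differently than the paper does. The paper likewise reduces everything to the finite set of contraction ratios lying in $(0,1)$ (its families $\overline{T}_1,\dots,\overline{T}_h$ are exactly your $\mathcal{R}_1,\dots,\mathcal{R}_h$) and likewise pushes $\gamma$ toward $1$, but it argues geometrically: taking $\zeta$ to be the minimum gap among all these ratios, it picks $\gamma$ so that the stretched intervals $[\tau\gamma n,\tau n]$ are pairwise separated by at least $\zeta n/2$ while each interval $[\sigma_i b-c,\sigma_i n+c]$ of $U$ is shorter than that, so each of the $h-1$ intervals of $U$ meets at most one ratio's interval, and a pigeonhole over the ratios that get hit leaves some $\overline{V}_j$ untouched. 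That final pigeonhole tacitly needs exactly the fact you prove explicitly---that the families for distinct $j$ are pairwise disjoint, because $p_j$ is the unique prime $\le p_h$ dividing the numerator of any ratio in the $j$th family---and the paper never states this. Your variant pigeonholes instead over which families contain one of the $h-1$ speeds, which buys two things: the index $j^\ast$ is fixed once and for all from $d$, $h$, and the speeds (the paper's argument, as written, selects $j$ for each $n$, which is weaker than the quantifier order the lemma statement and its later invocations require), and the quantitative tuning collapses to the single margin $\eta=\min_i\operatorname{dist}(\sigma_i,\mathcal{R}_{j^\ast})$ rather than the global gap $\zeta$. Your $p$-adic bookkeeping checks out: $r<1$ indeed forces $a_j<k_0$, the admissible ratios form a finite set independent of $n$ and $\gamma$, elements of $\overline{V}_{j^\ast}$ with ratio $\ge 1$ are at least $b$ and hence outside $U$ once $b>\sigma_{h-1}n+c$, and the trap $r\in(\sigma_i-\eta,\sigma_i+\eta)$ against $\operatorname{dist}(\sigma_i,\mathcal{R}_{j^\ast})\ge\eta$ closes the argument.
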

\begin{lemma}\label{lem:3inequ}
Fix $\gamma$ and $j$ as in Lemma~\ref{lem:disjoint}, let $n\in\N$ be sufficiently large, and let $b=\lfloor\gamma n\rfloor+1$. Then $K(u_h\mid u)\ge (n-b)(1-p_{h+1}^{-d-1}) - O(\log n)$.
\end{lemma}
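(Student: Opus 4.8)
The plan is to treat $Y = F_{h+1}(R)$ as an $\mathbb{F}_2$-linear image of the random string $R$ and to bound $K(u_h\mid u)$ from below by a rank computation. Call a position $x$ \emph{new} if $p_{h+1}\nmid x$ (or $x=0$), and let $\operatorname{new}(x)$ denote the position of $R$ from which $Y[x]$ is directly copied, so $Y[x]=R[\operatorname{new}(x)]$. Unwinding Definition~\ref{functionp1} shows that for any $x$ with $\nu_{h+1}(x)=k$ one has $Y[x]=\bigoplus_{w\in\{1,\dots,h\}^{k}}R[\operatorname{new}(\phi(x,w))]$, where the \emph{leaves} $\phi(x,w)$ of $x$ are obtained from $x$ by $k$ times dividing by $p_{h+1}$ and multiplying by $p_{w_1},\dots,p_{w_k}$ in turn. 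Thus $Y[0..n]=L(R[0..N])$ for an $\mathbb{F}_2$-linear map $L$ computable from $n$ (with $N\le n$), and $u_h$, $u$ are the subtuples of this image indexed by $[b..n]$ and by $U$. Since $R$ is Martin-L\"of random, any $\mathbb{F}_2$-linear image $M(R[0..N])$ with $M$ computable from $n$ satisfies $\operatorname{rank}M-O(\log n)\le K(M(R[0..N]))\le\operatorname{rank}M+O(\log n)$; the lower bound follows by writing $R[0..N]$ as the sum of its components along a complement of $\ker M$ and along $\ker M$ and applying~\eqref{eq:soi}. Writing $\mathcal{L}_A=\{\,x\mapsto Y[x]\mid x\in A\,\}$ for the corresponding sets of linear forms on $\mathbb{F}_2^{N+1}$, the symmetry of information~\eqref{eq:soi} then gives
\[
K(u_h\mid u)\ \ge\ K(u_h,u)-K(u)-O(\log n)\ \ge\ \operatorname{rank}\bigl(\mathcal{L}_{[b..n]}\cup\mathcal{L}_U\bigr)-\operatorname{rank}\mathcal{L}_U-O(\log n),
\]
so it suffices to bound this rank difference below by $(n-b)(1-p_{h+1}^{-d-1})-O(1)$.

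This is where Lemma~\ref{lem:disjoint} enters. A short computation shows $\overline{V}_j=\{\,x\mid\text{some leaf }\phi(x,w)\text{ of }x\text{ lies in }V_j\,\}$, so $U\cap\overline{V}_j=\emptyset$ says exactly that no form in $\mathcal{L}_U$ involves any coordinate $\operatorname{new}(\ell)$ with $\ell\in V_j$. Let $P$ be the projection of $\mathbb{F}_2^{N+1}$ onto the coordinates indexed by $\{\operatorname{new}(\ell):\ell\in V_j\}$. Then $P$ annihilates $\mathcal{L}_U$, and since $\mathcal{L}_U\subseteq\mathcal{L}_{[b..n]}\cup\mathcal{L}_U$, splitting the span of the latter family along $\ker P$ yields $\operatorname{rank}(\mathcal{L}_{[b..n]}\cup\mathcal{L}_U)\ge\operatorname{rank}\mathcal{L}_U+\operatorname{rank}P(\mathcal{L}_{[b..n]})$. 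So the problem reduces to showing that the projected forms $\{P(Y[m]):m\in[b..n]\}$ have rank at least $(n-b)(1-p_{h+1}^{-d-1})-O(1)$; I would do this by showing that the subfamily $\{P(Y[m]):m\in[b..n],\ \nu_{h+1}(m)\le d\}$ is in fact a basis of the coordinate space indexed by $\{\operatorname{new}(\ell):\ell\in V_j\}$.

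Two of the three ingredients of that basis claim are routine. First, for $m$ with $\nu_{h+1}(m)=k\le d$ the ``$j$-corner'' leaf $\psi(m)=(p_j/p_{h+1})^{k}m$ occurs with multiplicity exactly $1$ among the leaves of $m$ (unique factorization), lies in $V_j$, and lies in no form of $\mathcal{L}_U$, so $P(Y[m])$ has $\operatorname{new}(\psi(m))$ as a privileged coordinate. Second, I would check, by inspecting the proof of Lemma~\ref{lem:disjoint}, that $\gamma$ can be taken $>p_h/p_{h+1}$: this forces $m\mapsto\psi(m)$ to be injective on $\{m\in[b..n]:\nu_{h+1}(m)\le d\}$, because a collision would require two members of $[b..n]$ whose ratio is a positive power of $p_{h+1}/p_j>1/\gamma\ge n/b$, which is impossible. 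Injectivity then gives $|V_j|=|\{m\in[b..n]:\nu_{h+1}(m)\le d\}|\ge (n-b+1)-\bigl((n-b)p_{h+1}^{-d-1}+1\bigr)=(n-b)(1-p_{h+1}^{-d-1})$, which bounds the target dimension correctly and also makes the number of projected forms equal to that dimension.

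The remaining ingredient --- linear independence of those projected forms --- is the step I expect to be the main obstacle. The natural approach is to order the indices $m$ (say by the value $\psi(m)$, with a secondary key such as $\nu_{h+1}(m)$ or the largest leaf of $m$) and the coordinates $\operatorname{new}(\ell)$ ($\ell\in V_j$) compatibly, so that the matrix $\bigl(P(Y[m])\bigr)_m$ becomes triangular with the privileged coordinates on the diagonal. Proving such triangularity amounts to showing that for the chosen order every other $V_j$-leaf of $m$ indexes a strictly later coordinate, which is a delicate combinatorial analysis of which leaves of the recursion tree of $m$ can re-enter $V_j$, and is precisely where the prime-indexed head speeds and the particular $j$ and $\gamma$ produced by Lemma~\ref{lem:disjoint} are essential. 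Granting it, the displayed chain of inequalities gives $K(u_h\mid u)\ge|V_j|-O(\log n)\ge(n-b)(1-p_{h+1}^{-d-1})-O(\log n)$, as required.
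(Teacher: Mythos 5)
Your linear-algebraic framing is coherent, and several of its pieces are fine: the bound $K(M(R[0..N]))\geq\operatorname{rank}M-O(\log n)$ for $\mathbb{F}_2$-linear maps computable from $n$, the symmetry-of-information step, the splitting $\operatorname{rank}(\mathcal{L}_{[b..n]}\cup\mathcal{L}_U)\geq\operatorname{rank}\mathcal{L}_U+\operatorname{rank}P(\mathcal{L}_{[b..n]})$ once $P$ annihilates $\mathcal{L}_U$ (which is exactly where $U\cap\overline{V}_j=\emptyset$ enters), and the count of $|V_j|$ (modulo the injectivity of $m\mapsto\psi(m)$, for which you invoke an extra condition $\gamma>p_h/p_{h+1}$ that Lemma~\ref{lem:disjoint} does not state, though its proof can be adjusted to provide it). But there is a genuine gap at precisely the point you flag yourself: the full rank of the projected family $\{P(Y[m]) : m\in[b..n],\ \nu_{h+1}(m)\leq d\}$ is only ``granted,'' and the triangularity scheme you sketch is not carried out. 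Having a distinct privileged coordinate $\operatorname{new}(\psi(m))$ in each row is not by itself enough: a non-corner leaf of $m$ that lands in $V_j$ is itself the privileged coordinate $\psi(m')$ of some other row $m'$, so you must exhibit an ordering under which all such off-diagonal entries point the same way, and no such ordering is produced or verified. That combinatorial fact is the real content of the lemma, not a routine remainder, so as written the argument is incomplete.

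For comparison, the paper proves exactly this missing ingredient, but packaged as a computability statement rather than a rank statement: with $w=Y[W]$ and $W=\{0,\ldots,b-1\}\setminus\overline{V}_j$, Claim~\ref{Claim:comp} shows by induction on the $p_j$-adic valuation that $(u_h,w)\mapsto v_j$ is computable, using the rearranged parity identity~\eqref{eq:recover} to express $Y[s\cdot p_j^e]$ through $Y[s\cdot p_{h+1}^e]$ (an index in $[b,n]$, hence readable from $u_h$) together with terms whose indices are shown to lie in $[b,n]\cup W\cup V_j^{(k)}$. In your language this establishes, by induction on $\nu_j(\ell)$, that every unit vector $e_{\operatorname{new}(\ell)}$ with $\ell\in V_j$ lies in the span of $P(\mathcal{L}_{[b..n]})$ modulo the coordinates available to $w$ --- no global triangular ordering is needed. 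The paper then reaches the stated bound through the chain \eqref{geq:1st}--\eqref{geq:3rd}, using the intermediate string $w$ and the incompressibility of $R[A]$ given $R[B]$, in place of your rank bookkeeping. Both routes could work, but yours is unfinished at the one step where the paper does the real work; to complete it you would either have to prove your triangularity claim or, more simply, import the inductive recovery argument of Claim~\ref{Claim:comp} as the proof that $P(\mathcal{L}_{[b..n]})$ has rank $|V_j|$.
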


\begin{lemma}\label{lem:dtok}
    Let $h\in\Z^+$, $G$ an $h$-FSG, $\varepsilon>0$ a rational constant, and $n\in\N$ sufficiently large. Define $b$, $u$, and $u_h$ as above. Let $x=u_h[b..n]$ and $y=Y[0..b-1]$. If
    $K(u_h\mid u)\geq (n-b)(1-\frac{\varepsilon}{2})$,
    then
    \begin{equation}\label{eq:nowin}
        \max_{w\sqsubseteq x}\frac{d_G(yw)}{d_G(y)}\leq 2^{\varepsilon\cdot(n-b)}.
    \end{equation}
\end{lemma} 

\begin{lemma}\label{lem:dbound}
    Let $Y = F_{h+1}(R)$, where $R$ is a Martin-L\"of random sequence and $h \in \mathbb{Z}^+$. For all $\epsilon \in \Q \cap (0,1)$ and every $h$-head finite-state gambler $G$, the set
    $\big\{d_G^{(1-2\epsilon)}(Y[0..n-1])\;\big|\; n \in \N \big\}$
    is bounded by a constant.
\end{lemma}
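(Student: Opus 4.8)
\textbf{Proof proposal for Lemma~\ref{lem:dbound}.}

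The plan is to iterate Lemmas~\ref{lem:3inequ} and~\ref{lem:dtok}, obtaining a per-segment bound on the growth of $d_G$ and then telescoping it down to constant-length prefixes. Fix $\epsilon$ and an $h$-FSG $G$. Choose the constant $d$ (otherwise free in the setup) large enough that $p_{h+1}^{-d-1}\le\epsilon/4$; this determines $\gamma$, $n_0$, and $j$ via Lemma~\ref{lem:disjoint}. Given a prefix length $N$, set $n=N-1$, $b=\lfloor\gamma n\rfloor+1$, and write $\phi(N)=b$. Because $n-b\ge(1-\gamma)n-O(1)=\Omega(n)$, the estimate $K(u_h\mid u)\ge(n-b)(1-p_{h+1}^{-d-1})-O(\log n)$ of Lemma~\ref{lem:3inequ} yields $K(u_h\mid u)\ge(n-b)(1-\epsilon/2)$ once $N$ is large enough for the $\Omega(n)$ slack to absorb the $O(\log n)$ term. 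Lemma~\ref{lem:dtok}, applied with $\varepsilon=\epsilon$ and $w=x=Y[b..n]$, then gives
\[\frac{d_G(Y[0..N-1])}{d_G(Y[0..\phi(N)-1])}\le 2^{\epsilon(n-b)}\le 2^{\epsilon(N-\phi(N))}.\]

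Now iterate. Fix a threshold $N^*$ large enough that the displayed inequality is valid for all $N\ge N^*$ and that $\phi(N)<N$ for all $N\ge N^*$ (possible since $\gamma<1$). Given $N\ge N^*$, form the strictly decreasing chain $N_0=N$, $N_{k+1}=\phi(N_k)$, stopping at the first index $K$ with $N_K<N^*$. Multiplying the per-segment bounds and telescoping the exponents,
\[d_G(Y[0..N-1])\le d_G(Y[0..N_K-1])\prod_{k=0}^{K-1}2^{\epsilon(N_k-N_{k+1})}=d_G(Y[0..N_K-1])\,2^{\epsilon(N_0-N_K)}\le C_0\,2^{\epsilon N},\]
where $C_0=\max\{\,d_G(Y[0..m-1])\mid 0\le m\le N^*\,\}$ is a constant not depending on $N$; the same bound holds trivially when $N<N^*$. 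Hence
\[d_G^{(1-2\epsilon)}(Y[0..N-1])=2^{-2\epsilon N}d_G(Y[0..N-1])\le C_0\,2^{-\epsilon N}\le C_0,\]
so the set in question is bounded by $C_0$.

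The crux---and the reason a naive unrolling of the recursive definition of $Y$ would not suffice---is that the segments $[N_{k+1},N_k)$ partition $[N_K,N)$, so the accumulated exponent telescopes to $\epsilon(N_0-N_K)\le\epsilon N$ rather than forming a geometric series $\epsilon N(1+\gamma+\gamma^2+\cdots)=\epsilon N/(1-\gamma)$; since Lemma~\ref{lem:disjoint} only guarantees $\gamma>1/2$, that series would have been too large to be beaten by the $2^{-2\epsilon N}$ discount. The only genuine estimate is verifying the hypothesis of Lemma~\ref{lem:dtok}, i.e.\ that the $O(\log n)$ error in Lemma~\ref{lem:3inequ} is dominated once $d$ is fixed and $N$ is large; the remaining points---absorbing finitely many short prefixes into $C_0$ and observing that the chain $N_0>N_1>\cdots$ has length $O(\log N)$---are routine bookkeeping.
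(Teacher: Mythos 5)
Your proposal is correct and takes essentially the same route as the paper: iterated application of Lemma~\ref{lem:dtok} along a chain of prefixes shrinking by roughly the factor $\gamma$, with the per-segment exponents telescoping and then being absorbed by the $2^{-2\epsilon n}$ discount in the $(1-2\epsilon)$-gale. The only differences are tactical: you build the chain backward from $N$ by iterating $b=\lfloor\gamma n\rfloor+1$, so the breakpoints exactly match the hypotheses of Lemmas~\ref{lem:3inequ} and~\ref{lem:dtok} and the exponent telescopes to $\epsilon N$ (and you make explicit the choice of $d$ with $p_{h+1}^{-d-1}\le\epsilon/4$), whereas the paper fixes forward breakpoints $N_k=\lfloor N_0/\gamma^k\rfloor$ and absorbs the resulting overshoot of at most $1/\gamma<2$ into the $2\epsilon$ slack.
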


With these lemmas, we prove Theorem~\ref{thm:main}. Letting $\epsilon$ approach 0 and applying Lemma~\ref{lem:dbound} shows that $\dimFS{(h)}(Y)=\DimFS{(h)}(Y)=1$,
while Lemma~\ref{lem:dim} shows that
\[\dimFS{(h+1)}(Y)\leq\DimFS{(h+1)}(Y)\leq 1-\frac{1}{p_{h+1}}.\]
Hence, for every $h \in \mathbb{Z}^+$, $Y= F_{h+1}(R)$ attests to a strict separation between $h$-head and $(h+1)$-head finite-state predimension, and between $h$-head and $(h+1)$-head finite-state strong predimension.

\section{Instability of \texorpdfstring{$h$}{h}-Head Finite-State Predimensions}\label{sec:instability}
We now show that unlike typical fractal dimension notions, $h$-head finite-state predimensions are not stable for fixed $h\geq 2$.
\begin{theorem}\label{thm:unstable}
    For each integer $h\geq 2$, $h$-head finite-state predimension and strong predimension are not stable under finite unions.
\end{theorem}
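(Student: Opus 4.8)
The goal is to exhibit, for each $h\geq 2$, two sets $X_0,X_1\subseteq\C$ (or a single pair that works for all the predimensions at once) such that
\[
\dimFS{(h)}(X_0\cup X_1)>\max\{\dimFS{(h)}(X_0),\dimFS{(h)}(X_1)\},
\]
and likewise for $\DimFS{(h)}$. The natural strategy is to re-use the hierarchy machinery from Section~\ref{sec:hierarchy}. Recall that $Y=F_{h+1}(R)$, for $R$ Martin-L\"of random, has $\dimFS{(h)}(Y)=\DimFS{(h)}(Y)=1$ but $\dimFS{(h+1)}(Y)\le 1-1/p_{h+1}$. The idea is to take two sequences $Y_0,Y_1$ that \emph{individually} have small $h$-head predimension, but whose union (as a two-element set) "confuses" every $h$-FSG into effectively needing $h+1$ heads, pushing the predimension of the union back up. Concretely, I would let $Y_0 = F_{h+1}(R_0)$ and $Y_1 = F_{h+2}(R_1)$ for independent Martin-L\"of random $R_0,R_1$, or — more cleanly — build $Y_0$ and $Y_1$ from disjoint "coordinate blocks" so that an $h$-FSG that does well on $Y_0$ is structurally prevented from doing well on $Y_1$ and vice versa. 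The key point to extract is: $\dimFS{(h)}(\{Y_0,Y_1\})$ is governed by the \emph{worst} sequence for whichever single gambler one picks, and no single $h$-FSG can be simultaneously good on both; whereas each of $\dimFS{(h)}(\{Y_0\})$ and $\dimFS{(h)}(\{Y_1\})$ is small because a tailored gambler exists for each.

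**Key steps, in order.** (1) Construct the pair. I would define $Y_0,Y_1$ so that each $Y_i$ has $\dimFS{(h)}(\{Y_i\})\le 1-\tfrac{1}{p_{h+1}}$ — for instance by having the structural "parity" bits of $Y_0$ at positions that are favorable for a gambler whose heads run at speeds $(1/p_{h+1}, 2/p_{h+1},\ldots,h/p_{h+1})$ read left-to-right, while $Y_1$ uses a different, incompatible speed profile (e.g. a different prime $p_{h+1}$ replaced by $p_{h+2}$, or the blocks interleaved in a shifted pattern). The requirement is that the single $h$-FSG from Lemma~\ref{lem:dim} (suitably adapted) succeeds on $Y_i$ at rate $\le 1-1/p_{h+1}$. (2) Lower-bound $\dimFS{(h)}(\{Y_0,Y_1\})$. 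Here I would run the argument of Lemmas~\ref{lem:disjoint}–\ref{lem:dbound} in a "diagonal" form: given \emph{any} single $h$-FSG $G$, one of the two sequences — say $Y_{i(G)}$ — is built from a random string that has high conditional Kolmogorov complexity given everything $G$'s $h-1$ trailing heads can access; so $G$'s martingale on $Y_{i(G)}$ grows at rate at least $1-o(1)$, whence $\dimFS{(h)}(S^\infty[d_G^{(s)}]\ni\{Y_0,Y_1\})$ forces $s\ge 1$. Taking $s\to 1$ shows $\dimFS{(h)}(\{Y_0,Y_1\})=1$, and the same for $\DimFS{(h)}$ using the strong-success version (which the Section~\ref{sec:hierarchy} proof already delivers, since its bound is on $\limsup$ of the martingale along all prefixes). (3) Conclude $1=\dimFS{(h)}(\{Y_0,Y_1\})>1-1/p_{h+1}\ge\max_i\dimFS{(h)}(\{Y_i\})$, and identically for strong predimension, establishing instability.

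**The main obstacle.** The delicate part is step (2): ensuring that a \emph{single} $h$-FSG cannot hedge between $Y_0$ and $Y_1$ — i.e., that for every $G$ there is $i(G)$ on which $G$ is genuinely bad. A gambler facing the two-element set $\{Y_0,Y_1\}$ need only succeed on each, but it is the \emph{same} $G$ in both cases, so its fixed head-speed profile $(\sigma_1,\ldots,\sigma_{h-1})$ must be wrong for at least one of the two sequences. Making this rigorous means designing $Y_0,Y_1$ so that the "good" speed profiles for the two are disjoint — more precisely, so that for \emph{any} speed profile, Lemma~\ref{lem:disjoint}'s disjointness conclusion $U\cap\overline{V}_j=\emptyset$ holds for at least one of $Y_0,Y_1$, yielding the high-complexity suffix and hence the rate-$1$ lower bound via Lemmas~\ref{lem:3inequ}, \ref{lem:dtok}, \ref{lem:dbound}. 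The cleanest route is probably to let $Y_0=F_{h+1}(R_0)$ and $Y_1$ be a "time-reversed" or differently-strided variant whose exploitable structure requires the trailing heads to move at speeds incomparable to those usable for $Y_0$; then a routine case analysis on where $G$'s speeds fall picks out $i(G)$. I expect the numerical bookkeeping — checking that the constants $\gamma, n_0, j$ can be chosen uniformly in the two cases — to be the most tedious but not conceptually hard part, since it parallels the already-established Lemma~\ref{lem:disjoint}.
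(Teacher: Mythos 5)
Your high-level strategy is the paper's: exhibit two sequences, each individually exploitable by an $h$-FSG at rate at most $1-\tfrac{1}{p_{h+1}}$, such that a pigeonhole on the fixed trailing-head speeds of any single $h$-FSG forces it to fail on at least one of the two, so the two-element set has predimension $1$. But there is a genuine gap: the one concrete pair you propose does not satisfy your own step (1). By Theorem~\ref{thm:main} (and monotonicity of $\dimFS{(h)}$ in $h$), both $F_{h+1}(R_0)$ and $F_{h+2}(R_1)$ have $h$-head predimension $1$ --- exploiting $F_{h+1}(S)$ requires $h$ trailing heads, i.e.\ $h+1$ heads total --- so with that pair $\max_i\dimFS{(h)}(\{Y_i\})=1$ and no instability is exhibited. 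Your fallback suggestions (``time-reversed'' variants, which clash with the one-way head movement, or ``differently-strided'' blocks) are left unspecified, and the speed profile you write down, $(1/p_{h+1},\dots,h/p_{h+1})$, lists $h$ speeds although an $h$-FSG has only $h-1$ trailing heads; so no construction meeting both requirements is actually produced.

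The missing idea is the specific design that makes the pigeonhole work. The paper defines two \emph{$h$-head-exploitable} variants applied to the \emph{same} random $R$: $F'_{h+1}$, whose parity bits are $\bigoplus_{k=1}^{h-1}F'_{h+1}(S)[qp_k]$, and $F''_{h+1}$, whose parity bits are $\bigoplus_{k=2}^{h}F''_{h+1}(S)[qp_k]$. Each uses only $h-1$ referenced positions per parity, so an $h$-FSG (with $h-1$ trailing heads, essentially the construction of Lemma~\ref{lem:dim} with one head omitted) succeeds on each singleton at rate $1-\tfrac{1}{p_{h+1}}$. The two index sets $\{1,\dots,h-1\}$ and $\{2,\dots,h\}$ jointly cover $\{1,\dots,h\}$, while Lemma~\ref{lem:disjoint} yields, for any speed profile, some $j\in\{1,\dots,h\}$ with $U\cap\overline{V}_j=\emptyset$; if $j\le h-1$ the Kolmogorov-complexity argument (Lemmas~\ref{lem:3inequxz}--\ref{lem:dboundxz}) bounds the gale on $F'_{h+1}(R)$, and if $j\ge 2$ it bounds the gale on $F''_{h+1}(R)$, so every case is covered and no single $h$-FSG hedges between the two. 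Without an explicit pair whose ``needed speed sets'' jointly exceed $h-1$ slots while each sequence separately needs only $h-1$, your step (2) cannot be carried out, so as written the argument does not go through.
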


We prove this using two variants of our function family $F_{h+1}$, called $F'_{h+1}$ and $F''_{h+1}$. Recall that the bit in one out of every $h+1$ positions in a sequence $F_{h+1}(S)$ can be computed as the parity of $h$ previous bits in the sequence. At those same positions in $F'_{h+1}(S)$ and $F''_{h+1}(S)$, the bit can be computed as the parity of only $h-1$ of the $h$ bits involved in the $F_{h+1}(S)$ calculation. In $F'_{h+1}(S)$, it is parity of the first $h-1$ of those $h$ bits, while in $F''_{h+1}(S)$, it is the parity of the last $h-1$ of those $h$ bits. Intuitively, these sequences together capture the full complexity of $F_{h+1}(S)$, but each of them individually is well-suited to using only $h-1$ trailing heads.

Formally, for each $h\in\Z^+$, define the functions $F'_{h+1}:\C\to\C$ and $F''_{h+1}:\C\to\C$ as follows. For all $S\in\C$, let $F'_{h+1}(S)[0]=F''_{h+1}(S)[0]=S[0]$ and, for all $q \in \N$ and $0 \le r < p_{h+1}$, let
    \[
    F'_{h+1}(S)[qp_{h+1}+r]= 
\begin{cases}
    S[q(p_{h+1}-1) + r]& \text{if } r>0\\
    \underset{k=1}{\overset{h-1}{\bigoplus}} F'_{h+1}(S)[qp_k] & \text{if }r=0 \\
    
\end{cases}
\] and 
\[
    F''_{h+1}(S)[qp_{h+1}+r]= 
\begin{cases}
    S[q(p_{h+1}-1) + r]& \text{if }r>0\\
    \underset{k=2}{\overset{h}{\bigoplus}} F''_{h+1}(S)[qp_k] & \text{if }r=0.  
\end{cases}
\]

For any fixed $h$ and $\epsilon>0$, we can construct a pair of $h$-FSGs whose $(1-\frac{1}{p_{h+1}}+\epsilon)$-gales succeed, for all $S\in\C$, on $F'_{h+1}(S)$ and on $F''_{h+1}(S)$, respectively, using constructions almost identical to the $(h+1)$-FSG that succeeds on $F_{h+1}(S)$ in Lemma~\ref{lem:dim}; essentially, we omit either the first or the last trailing head from that construction. This establishes, for all $S\in\C$, that the singletons $\{F'_{h+1}(S)\}$ and $\{F''_{h+1}(S)\}$ have both $h$-head finite-state predimension and strong predimension at most $1-\frac{1}{p_{h+1}}$.

Nevertheless, we show---using techniques very similar to those of our strict hierarchy result, Theorem~\ref{thm:main}---that no $h$-FSG has a $(1-2\epsilon)$-gale that succeeds on \emph{both} $F'_{h+1}(R)$ and $F''_{h+1}(R)$ when $R$ is any Martin-L\"of random sequence. This implies that the set $\{F'_{h+1}(R),F''_{h+1}(R)\}$ has $h$-head finite-state predimension and strong predimension 1, proving instability.

At a high level, the argument is that by the obliviousness of the head movements, the heads of any $h$-FSG must move identically on both these sequences, which means that for at least one of these sequences, at least one of the relevant $h-1$ bits in any parity calculation must be missing from the information the gambler can readily access. Much as in Lemmas~\ref{lem:3inequ},~\ref{lem:dtok}, and~\ref{lem:dbound}, we show that this missing information implies that at least one of the sequences can be broken into strings of high conditional Kolmogorov complexity given the trailing heads' information, which bounds the corresponding gale's rate of growth on each of these strings and thereby on the entire sequence. Our full proof appears in Appendix~\ref{app:instability}.

\section{Stability of Multi-Head Finite-State Dimension}\label{sec:stability}

In this section we prove that, notwithstanding the instability of $h$-head finite-state predimension and strong predimension, multi-head finite-state dimension and strong dimension are stable under finite unions. This is because, as we now prove, a multi-head finite-state gambler with sufficiently many heads can implement a martingale that \emph{approximately averages} the martingales of a pair of other multi-head finite-state gamblers, thereby succeeding wherever either of the pair would succeed, up to $\epsilon$ error. Maintaining this approximate average with finitely many states is delicate; a full proof of the following stability theorem appears in Appendix~\ref{app:stability}.
\begin{theorem}\label{thm:appd}
    Given $h_1, h_2 \in \Z^+$, for each $h_1$-FSG $G_1$, each $h_2$-FSG $G_2$ and each $\epsilon \in (0,1)$, there is a $(h_1+h_2-1)$-FSG $G$ such that for all $s \in [0,1]$,
    \[S^{\infty}\big[d_{G_1}^{(s)}\big] \cup S^{\infty}\big[d_{G_2}^{(s)}\big] \subseteq S^{\infty}\big[d_{G}^{(s+\epsilon)}\big]\quad\text{and}\quad S_{\strong}^{\infty}\big[d_{G_1}^{(s)}\big] \cup S_{\strong}^{\infty}\big[d_{G_2}^{(s)}\big] \subseteq S_{\strong}^{\infty}\big[d_{G}^{(s+\epsilon)}\big].\]
\end{theorem}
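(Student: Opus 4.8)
The plan is to build $G$ so that its martingale $d_G$ approximates the arithmetic mean $\tfrac12(d_{G_1}+d_{G_2})$. If $d_G$ were exactly this average, then $d_G^{(s)}=\tfrac12(d_{G_1}^{(s)}+d_{G_2}^{(s)})\geq\tfrac12\max\{d_{G_1}^{(s)},d_{G_2}^{(s)}\}$ on every prefix, which immediately gives both the $\limsup$ and $\liminf$ success-set inclusions with no $\epsilon$ loss. The obstacle is that a finite-state gambler has only rational betting values from a finite set $\Delta_\Q(\Sigma)$, and the exact average of two such distributions need not be realizable by a single fixed betting function — indeed, the ``mixing weights'' $d_{G_1}(w):d_{G_2}(w)$ drift continuously and cannot be tracked by finitely many states. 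This is exactly why the $\epsilon$ slack is needed and where the real work lies.

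The key steps, in order. First, observe that a single leading head suffices to feed both sub-gamblers: $G$ uses one shared leading head, and its trailing heads are the disjoint union of the $h_1-1$ trailing heads of $G_1$ and the $h_2-1$ trailing heads of $G_2$, for a total of $(h_1-1)+(h_2-1)+1=h_1+h_2-1$ heads. The positional-state component $T$ of $G$ is $T_1\times T_2$ with $\delta_T=(\delta_{T,1},\delta_{T,2})$ and $\mu$ concatenating the two movement functions; this makes $G$'s heads move exactly as $G_1$'s and $G_2$'s heads would, so on any input $G$ can internally reconstruct the symbol-vectors $\vec\sigma^{(1)}_n$ and $\vec\sigma^{(2)}_n$ that $G_1$ and $G_2$ respectively see. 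Second, $G$ simulates the $Q$-components $q^{(1)}_n$ and $q^{(2)}_n$ of both gamblers in its own state (state set roughly $T_1\times T_2\times Q_1\times Q_2$ together with the bookkeeping below), so at each step it knows both recommended bets $\beta_1(q^{(1)}_n)$ and $\beta_2(q^{(2)}_n)$. Third — the crux — $G$ must decide, from finite information, how to blend these two bets so that its capital stays within a bounded multiplicative factor of $\tfrac12(d_{G_1}+d_{G_2})$, which over $n$ steps degrades the gale parameter by only $\epsilon$, i.e. costs a factor $2^{o(n)}$, in fact a factor $2^{\epsilon n}$ suffices.

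For that third step I would use a \emph{discretized pot-splitting} scheme. Maintain in the finite state an approximation, to within a multiplicative factor $1+\eta$ (for a small rational $\eta$ depending on $\epsilon$), of the ratio $d_{G_1}(w)/d_G(w)$ — equivalently, what fraction of $G$'s current capital is ``notionally assigned'' to tracking $G_1$ versus $G_2$. Because all quantities are rational with denominators built from $|\Sigma|$ and the finitely many values in $\beta_1,\beta_2$, and because we only track this ratio up to a factor $1+\eta$ and moreover \emph{cap} it into a fixed finite window $[\eta,\eta^{-1}]$ (clamping when it leaves the window, which only ever helps the lagging sub-gambler and loses at most a bounded constant factor overall), the ratio lives in a finite set and the update rule is finite-state. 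At each step $G$ bets the $\beta_1$-weighted share of its capital according to $\beta_1(q^{(1)}_n)$ and the $\beta_2$-weighted share according to $\beta_2(q^{(2)}_n)$; one checks that the resulting aggregate bet is again a rational distribution in $\Delta_\Q(\Sigma)$ from a finite set, so it is a legitimate betting function, and that after $n$ steps $d_G(w)\geq (1+\eta)^{-n}\cdot\tfrac12\max\{d_{G_1}(w),d_{G_2}(w)\}\cdot(\text{const})$. Choosing $\eta$ so that $(1+\eta)^{n}\leq 2^{\epsilon n}$ for all $n$ (i.e. $\log_2(1+\eta)\leq\epsilon$) then yields $d_G^{(s+\epsilon)}(w)=2^{\epsilon n}d_G^{(s)}(w)\geq(\text{const})\cdot\max\{d_{G_1}^{(s)}(w),d_{G_2}^{(s)}(w)\}$ for every prefix $w$ of length $n$, from which both inclusions follow since $\limsup$ and $\liminf$ are unaffected by a bounded multiplicative constant. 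The main obstacle, as flagged, is verifying that the clamped, discretized mixing-ratio update genuinely closes into a finite state set while still guaranteeing the claimed lower bound on $d_G$; I would isolate this as a short lemma about two martingales and a finite-state ``approximate averaging'' rule, prove the capital bound by a one-step invariant argument, and then read off the theorem.
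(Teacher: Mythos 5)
Your proposal is essentially the paper's own construction: one shared leading head with the disjoint union of the trailing heads ($T=T_1\times T_2$, concatenated movement functions), simulation of both control states, and a discretized capital-allocation ratio $\alpha$ stored in the finite state so that $G$ bets the convex combination $\alpha\beta_1+(1-\alpha)\beta_2$, with the per-step rounding loss absorbed by raising the gale exponent by $\epsilon$. The only differences are implementation details --- the paper rounds $\alpha$ additively to $r$-dyadic rationals in $[0,1]$ (rounding toward $1/2$, so no clamping is needed), whereas you use multiplicative $(1+\eta)$-rounding with clamping --- and your parenthetical claim that clamping costs only a bounded constant factor overall should really be a per-step factor like $(1-\eta)$, which your final $(1+\eta)^{-n}$-style accounting already absorbs.
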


\begin{corollary}\label{cor:stable}
For all $X,Y\subseteq\Sigma^\omega$,
\[\dimFS{\mh}(X \cup Y) = \max \{\dimFS{\mh}(X), \dimFS{\mh}(Y)\}\]
and 
\[\DimFS{\mh}(X \cup Y) = \max \{\DimFS{\mh}(X), \DimFS{\mh}(Y)\}.\]
\end{corollary}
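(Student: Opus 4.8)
The plan is to prove each displayed equality by establishing the inequalities ``$\ge$'' and ``$\le$'' separately; I will carry out the argument for $\dimFS{\mh}$ in detail, the one for $\DimFS{\mh}$ being obtained by replacing $S^\infty$, $\mathcal{G}_h$, and $\dimFS{}$ throughout by $S_{\strong}^\infty$, $\mathcal{G}_h^{\strong}$, and $\DimFS{}$, and using the second (strong) conclusion of Theorem~\ref{thm:appd}. The inequality $\dimFS{\mh}(X\cup Y)\ge\max\{\dimFS{\mh}(X),\dimFS{\mh}(Y)\}$ is immediate from Observation~\ref{obs:mhmonotone}: since $X\subseteq X\cup Y$ and $Y\subseteq X\cup Y$, we get $\dimFS{\mh}(X)\le\dimFS{\mh}(X\cup Y)$ and $\dimFS{\mh}(Y)\le\dimFS{\mh}(X\cup Y)$.

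For the reverse inequality I would first record two routine facts. First, $\dimFS{(h)}(Z)\le 1$ for every $Z\subseteq\Sigma^\omega$ and every $h$, because the $h$-FSG that always bets uniformly has $d_G(w)\equiv c_0$, so for every $s>1$ its $s$-gale $d_G^{(s)}(w)=|\Sigma|^{(s-1)|w|}c_0$ succeeds strongly on all of $\Sigma^\omega$; hence $\dimFS{\mh}(Z)\le 1$ and, by the same gambler, $\DimFS{(h)}(Z)\le 1$ and $\DimFS{\mh}(Z)\le 1$. Second, each set $\mathcal{G}_h(Z)$ is upward closed: if $s\in\mathcal{G}_h(Z)$ is witnessed by an $h$-FSG $G$ and $s'>s$, then $d_G^{(s')}(w)=|\Sigma|^{(s'-s)|w|}d_G^{(s)}(w)\ge d_G^{(s)}(w)$ for all $w$, so $S^\infty[d_G^{(s)}]\subseteq S^\infty[d_G^{(s')}]$ and thus $s'\in\mathcal{G}_h(Z)$; the same argument applies to $\mathcal{G}_h^{\strong}(Z)$. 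In particular, whenever $s>\dimFS{(h)}(Z)$ for some $h$, there is an $h$-FSG $G$ with $Z\subseteq S^\infty[d_G^{(s)}]$.

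Now set $m=\max\{\dimFS{\mh}(X),\dimFS{\mh}(Y)\}$ and fix any $s>m$; I will show $\dimFS{\mh}(X\cup Y)\le s$. If $s>1$ this is immediate from the first fact, so assume $s\le 1$. Put $\epsilon=(s-m)/2$; then $\epsilon\in(0,1)$ (using $m\le 1$), and $s-\epsilon=(s+m)/2\in(m,1]\subseteq[0,1]$. Since $s-\epsilon>m\ge\dimFS{\mh}(X)=\inf_h\dimFS{(h)}(X)$, there is an $h_1$ with $\dimFS{(h_1)}(X)<s-\epsilon$, and hence, by the upward-closure fact, an $h_1$-FSG $G_1$ with $X\subseteq S^\infty[d_{G_1}^{(s-\epsilon)}]$; symmetrically there is an $h_2$-FSG $G_2$ with $Y\subseteq S^\infty[d_{G_2}^{(s-\epsilon)}]$. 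Applying Theorem~\ref{thm:appd} with gale parameter $s-\epsilon\in[0,1]$ and error $\epsilon\in(0,1)$ produces an $(h_1+h_2-1)$-FSG $G$ with
\[S^\infty\big[d_{G_1}^{(s-\epsilon)}\big]\cup S^\infty\big[d_{G_2}^{(s-\epsilon)}\big]\subseteq S^\infty\big[d_G^{(s)}\big],\]
so $X\cup Y\subseteq S^\infty[d_G^{(s)}]$, i.e., $s\in\mathcal{G}_{h_1+h_2-1}(X\cup Y)$; therefore $\dimFS{\mh}(X\cup Y)\le\dimFS{(h_1+h_2-1)}(X\cup Y)\le s$. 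Letting $s$ decrease to $m$ yields $\dimFS{\mh}(X\cup Y)\le m$, as required.

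Since Theorem~\ref{thm:appd} is already in hand, there is no genuine obstacle in the corollary itself; the only points requiring care are the two routine facts above and the bookkeeping with $\epsilon$---choosing it small enough that $s-\epsilon$ still exceeds $m$ yet remains in the interval $[0,1]$ on which Theorem~\ref{thm:appd} is stated---together with the harmless reduction to the case $s\le 1$.
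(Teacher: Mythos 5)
Your proof is correct and follows essentially the same route as the paper: the lower bound comes from Observation~\ref{obs:mhmonotone} and the upper bound from Theorem~\ref{thm:appd}, with your extra steps (the trivial bound $\dimFS{(h)}\le 1$, upward closure of $\mathcal{G}_h$, and the choice of $\epsilon$ so that $s-\epsilon\in[0,1]$) merely filling in details the paper dismisses as immediate.
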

\begin{proof}
It is immediate from Theorem~\ref{thm:appd} that $\dimFS{\mh}(X \cup Y) \le \max \{\dimFS{\mh}(X), \dimFS{\mh}(Y)\}$ and $\DimFS{\mh}(X \cup Y) \le \max \{\DimFS{\mh}(X), \DimFS{\mh}(Y)\}$. The reverse inequalities follow from Observation~\ref{obs:mhmonotone}.
\end{proof}

\section{Conclusion}

To our knowledge, the finite-state dimensions $\dimFS{}$ and $\DimFS{}$ are the most effective (meaning effectivized in the most restrictive way) of the fractal dimensions that have been investigated. It is not hard to see that the multi-head finite-state dimensions $\dimFS{\mh}$ and $\DimFS{\mh}$, while not as effective as the finite-state dimensions, are \emph{highly effective} in the sense that they are more effective than the polynomial-time dimensions $\dim_\textup{p}$ and $\Dim_\textup{p}$ of~\cite{DCC,AHLM07}. 

Other dimensions that are highly effective in the above sense have been investigated. These include pushdown dimensions, Lempel--Ziv dimensions, and perhaps polylogspace dimensions~\cite{AlMaMo2017,AMMP2008,DotNic2007,Lope2006,Nich2004,MaMoPe2011}. It will be interesting to see how multi-head finite-state dimensions are related to these other highly effective dimensions.

It will be noted that our motivating question remains open:  A sequence $S \in\Sigma^\omega$ is \emph{multi-head finite-state random} if there is no multi-head finite-state gambler $G$ such that $S \in S^\infty[d_G]$.  Is every sequence with multi-head finite-state dimension 1 multi-head finite-state random?

\bibliography{mhd}

\begin{thebibliography}{10}

\bibitem{AlMaMo2017}
Pilar Albert, Elvira Mayordomo, and Philippe Moser.
\newblock Bounded pushdown dimension vs {L}empel {Z}iv information density.
\newblock In Adam Day, Michael Fellows, Noam Greenberg, Bakhadyr Khoussainov, Alexander Melnikov, and Frances Rosamond, editors, {\em Computability and Complexity: Essays Dedicated to Rodney G. Downey on the Occasion of His 60th Birthday}, pages 95--114. Springer, Cham, 2017.
\newblock \href {https://doi.org/10.1007/978-3-319-50062-1_7} {\path{doi:10.1007/978-3-319-50062-1_7}}.

\bibitem{AMMP2008}
Pilar Albert, Elvira Mayordomo, Philippe Moser, and Sylvain Perifel.
\newblock Pushdown compression.
\newblock In Susanne Albers and Pascal Weil, editors, {\em 25th International Symposium on Theoretical Aspects of Computer Science}, volume~1 of {\em Leibniz International Proceedings in Informatics (LIPIcs)}, pages 39--48, Dagstuhl, Germany, 2008. Schloss Dagstuhl -- Leibniz-Zentrum f{\"u}r Informatik.
\newblock \href {https://doi.org/10.4230/LIPIcs.STACS.2008.1332} {\path{doi:10.4230/LIPIcs.STACS.2008.1332}}.

\bibitem{AHLM07}
Krishna~B. Athreya, John~M. Hitchcock, Jack~H. Lutz, and Elvira Mayordomo.
\newblock Effective strong dimension in algorithmic information and computational complexity.
\newblock {\em SIAM Journal on Computing}, 37(3):671--705, 2007.
\newblock \href {https://doi.org/10.1137/S0097539703446912} {\path{doi:10.1137/S0097539703446912}}.

\bibitem{BCH2018}
Ver\'onica Becher, Olivier Carton, and Pablo~A. Heiber.
\newblock Finite-state independence.
\newblock {\em Theory of Computing Systems}, 62:1555--1572, 2018.
\newblock \href {https://doi.org/10.1007/s00224-017-9821-6} {\path{doi:10.1007/s00224-017-9821-6}}.

\bibitem{BoHiVi2005}
Chris Bourke, John~M. Hitchcock, and N.~V. Vinodchandran.
\newblock Entropy rates and finite-state dimension.
\newblock {\em Theoretical Computer Science}, 349(3):392--406, 2005.
\newblock \href {https://doi.org/10.1016/j.tcs.2005.09.040} {\path{doi:10.1016/j.tcs.2005.09.040}}.

\bibitem{Buge2012}
Yann Bugeaud.
\newblock {\em Distribution Modulo One and Diophantine Approximation}.
\newblock Cambridge Tracts in Mathematics. Cambridge University Press, 2012.
\newblock \href {https://doi.org/10.1017/CBO9781139017732} {\path{doi:10.1017/CBO9781139017732}}.

\bibitem{ClaRay2025}
Joe Clanin and Matthew Rayman.
\newblock Finite state dimension and the {D}avenport {E}rdős theorem, 2025.
\newblock \href {https://arxiv.org/abs/2506.02332} {\path{arXiv:2506.02332}}.

\bibitem{DDLM04}
Jack~J. Dai, James~I. Lathrop, Jack~H. Lutz, and Elvira Mayordomo.
\newblock Finite-state dimension.
\newblock {\em Theoretical Computer Science}, 310(1):1--33, 2004.
\newblock \href {https://doi.org/10.1016/S0304-3975(03)00244-5} {\path{doi:10.1016/S0304-3975(03)00244-5}}.

\bibitem{DoLuNa2007}
David Doty, Jack~H. Lutz, and Satyadev Nandakumar.
\newblock Finite-state dimension and real arithmetic.
\newblock {\em Information and Computation}, 205(11):1640--1651, 2007.
\newblock \href {https://doi.org/10.1016/j.ic.2007.05.003} {\path{doi:10.1016/j.ic.2007.05.003}}.

\bibitem{DotNic2007}
David Doty and Jared Nichols.
\newblock Pushdown dimension.
\newblock {\em Theoretical Computer Science}, 381(1):105--123, 2007.
\newblock \href {https://doi.org/10.1016/j.tcs.2007.04.005} {\path{doi:10.1016/j.tcs.2007.04.005}}.

\bibitem{Druc2013}
Andrew Drucker.
\newblock High-confidence predictions under adversarial uncertainty.
\newblock {\em ACM Transactions on Computation Theory}, 5(3), August 2013.
\newblock \href {https://doi.org/10.1145/2493252.2493257} {\path{doi:10.1145/2493252.2493257}}.

\bibitem{duris2012note}
Pavol {\v{D}}uri{\v{s}}.
\newblock A note on the hierarchy of one-way data-independent multi-head finite automata, 2012.
\newblock Technical report.
\newblock URL: \url{https://eccc.weizmann.ac.il/report/2012/092/}.

\bibitem{vdurivs2020tight}
Pavol {\v{D}}uri{\v{s}}, Rastislav Kr{\'a}lovi{\v{c}}, and Dana Pardubsk{\'a}.
\newblock Tight hierarchy of data-independent multi-head automata.
\newblock {\em Journal of Computer and System Sciences}, 114:126--136, 2020.
\newblock \href {https://doi.org/10.1016/j.jcss.2020.06.005} {\path{doi:10.1016/j.jcss.2020.06.005}}.

\bibitem{ForLut2005}
Lance Fortnow and Jack~H. Lutz.
\newblock Prediction and dimension.
\newblock {\em Journal of Computer and System Sciences}, 70(4):570--589, 2005.
\newblock Special Issue on COLT 2002.
\newblock \href {https://doi.org/10.1016/j.jcss.2004.10.007} {\path{doi:10.1016/j.jcss.2004.10.007}}.

\bibitem{GLLM2014}
Xiaoyang Gu, Jack~H. Lutz, Elvira Mayordomo, and Philippe Moser.
\newblock Dimension spectra of random subfractals of self-similar fractals.
\newblock {\em Annals of Pure and Applied Logic}, 165(11):1707--1726, 2014.
\newblock \href {https://doi.org/10.1016/j.apal.2014.07.001} {\path{doi:10.1016/j.apal.2014.07.001}}.

\bibitem{Haus19}
Felix Hausdorff.
\newblock Dimension und {\"a}u{\ss}eres {M}a{\ss}.
\newblock {\em Mathematische Annalen}, 79:157--179, 1919.
\newblock \href {https://doi.org/10.1007/BF01457179} {\path{doi:10.1007/BF01457179}}.

\bibitem{Hitc2003}
John~M. Hitchcock.
\newblock Fractal dimension and logarithmic loss unpredictability.
\newblock {\em Theoretical Computer Science}, 304(1):431--441, 2003.
\newblock \href {https://doi.org/10.1016/S0304-3975(03)00138-5} {\path{doi:10.1016/S0304-3975(03)00138-5}}.

\bibitem{HitVin2006}
John~M. Hitchcock and N.~V. Vinodchandran.
\newblock Dimension, entropy rates, and compression.
\newblock {\em Journal of Computer and System Sciences}, 72(4):760--782, 2006.
\newblock \href {https://doi.org/10.1016/j.jcss.2005.10.002} {\path{doi:10.1016/j.jcss.2005.10.002}}.

\bibitem{holzer1998data}
Markus Holzer.
\newblock {\em Data-independent versus data-dependent computations on multi-head automata}.
\newblock Bertz, 1998.
\newblock \href {https://doi.org/10.1007/BFb0029973} {\path{doi:10.1007/BFb0029973}}.

\bibitem{holzer2002multi}
Markus Holzer.
\newblock Multi-head finite automata: data-independent versus data-dependent computations.
\newblock {\em Theoretical Computer Science}, 286(1):97--116, 2002.
\newblock \href {https://doi.org/10.1016/S0304-3975(01)00237-7} {\path{doi:10.1016/S0304-3975(01)00237-7}}.

\bibitem{HKM09}
Markus Holzer, Martin Kutrib, and Andreas Malcher.
\newblock Multi-head finite automata: Characterizations, concepts and open problems.
\newblock In {\em Electronic Proceedings in Theoretical Computer Science (EPTCS)}, volume~1, pages 93--107, 2009.
\newblock \href {https://doi.org/10.4204/EPTCS.1.9} {\path{doi:10.4204/EPTCS.1.9}}.

\bibitem{HLMS2021}
Xiang Huang, Jack~H. Lutz, Elvira Mayordomo, and Donald~M. Stull.
\newblock Asymptotic divergences and strong dichotomy.
\newblock {\em IEEE Transactions on Information Theory}, 67(10):6296--6305, 2021.
\newblock \href {https://doi.org/10.1109/TIT.2021.3085425} {\path{doi:10.1109/TIT.2021.3085425}}.

\bibitem{KozShe2019}
Alexander Kozachinskiy and Alexander Shen.
\newblock Two characterizations of finite-state dimension.
\newblock In Leszek~Antoni G\k{a}sieniec, Jesper Jansson, and Christos Levcopoulos, editors, {\em Fundamentals of Computation Theory}, pages 80--94, Cham, 2019. Springer International Publishing.
\newblock \href {https://doi.org/10.1007/978-3-030-25027-0_6} {\path{doi:10.1007/978-3-030-25027-0_6}}.

\bibitem{KozShe2021}
Alexander Kozachinskiy and Alexander Shen.
\newblock Automatic {K}olmogorov complexity, normality, and finite-state dimension revisited.
\newblock {\em Journal of Computer and System Sciences}, 118:75--107, 2021.
\newblock \href {https://doi.org/10.1016/j.jcss.2020.12.003} {\path{doi:10.1016/j.jcss.2020.12.003}}.

\bibitem{LiVit19}
Ming Li and Paul M.~B. Vit\'{a}nyi.
\newblock {\em An Introduction to {K}olmogorov Complexity and Its Applications}.
\newblock Springer-Verlag, Berlin, fourth edition, 2019.
\newblock \href {https://doi.org/10.1007/978-3-030-11298-1} {\path{doi:10.1007/978-3-030-11298-1}}.

\bibitem{Lope2006}
Mar\'ia L\'opez-Vald\'es.
\newblock {L}empel--{Z}iv dimension for {L}empel--{Z}iv compression.
\newblock In Rastislav Kr{\'a}lovi{\v{c}} and Pawe{\l} Urzyczyn, editors, {\em Mathematical Foundations of Computer Science 2006}, pages 693--703, Berlin, Heidelberg, 2006. Springer Berlin Heidelberg.
\newblock \href {https://doi.org/10.1007/11821069_60} {\path{doi:10.1007/11821069_60}}.

\bibitem{DCC}
Jack~H. Lutz.
\newblock Dimension in complexity classes.
\newblock {\em SIAM Journal on Computing}, 32(5):1236--1259, 2003.
\newblock \href {https://doi.org/10.1137/S0097539701417723} {\path{doi:10.1137/S0097539701417723}}.

\bibitem{DISS}
Jack~H. Lutz.
\newblock The dimensions of individual strings and sequences.
\newblock {\em Information and Computation}, 187(1):49--79, 2003.
\newblock \href {https://doi.org/10.1016/S0890-5401(03)00187-1} {\path{doi:10.1016/S0890-5401(03)00187-1}}.

\bibitem{DFRD}
Jack~H. Lutz.
\newblock A divergence formula for randomness and dimension.
\newblock {\em Theoretical Computer Science}, 412:166--177, 2011.
\newblock \href {https://doi.org/10.1016/j.tcs.2010.09.005} {\path{doi:10.1016/j.tcs.2010.09.005}}.

\bibitem{LutMay2021}
Jack~H. Lutz and Elvira Mayordomo.
\newblock Computing absolutely normal numbers in nearly linear time.
\newblock {\em Information and Computation}, 281:104746, 2021.
\newblock \href {https://doi.org/10.1016/j.ic.2021.104746} {\path{doi:10.1016/j.ic.2021.104746}}.

\bibitem{Lutz26}
Neil Lutz.
\newblock Multi-head finite-state compression, 2025.
\newblock \href {https://arxiv.org/abs/2510.17544} {\path{arXiv:2510.17544}}.

\bibitem{MaMoPe2011}
Elvira Mayordomo, Philippe Moser, and Sylvain Perifel.
\newblock Polylog space compression, pushdown compression, and {Lempel--Ziv} are incomparable.
\newblock {\em Theory of Computing Systems}, 48:731--766, 2011.
\newblock \href {https://doi.org/10.1007/s00224-010-9267-6} {\path{doi:10.1007/s00224-010-9267-6}}.

\bibitem{monien1980two}
Burkhard Monien.
\newblock Two-way multihead automata over a one-letter alphabet.
\newblock {\em RAIRO. Informatique th{\'e}orique}, 14(1):67--82, 1980.
\newblock \href {https://doi.org/10.1051/ita/1980140100671} {\path{doi:10.1051/ita/1980140100671}}.

\bibitem{NanPul2025}
Satyadev Nandakumar and Subin Pulari.
\newblock Real numbers equally compressible in every base.
\newblock {\em ACM Transactions on Computation Theory}, 17(3), June 2025.
\newblock \href {https://doi.org/10.1145/3729207} {\path{doi:10.1145/3729207}}.

\bibitem{NanPul2024}
Satyadev Nandakumar, Subin Pulari, and Akhil S.
\newblock Finite-state relative dimension, dimensions of {A. P.} subsequences and a finite-state van {L}ambalgen's theorem.
\newblock {\em Information and Computation}, 298:105156, 2024.
\newblock \href {https://doi.org/10.1016/j.ic.2024.105156} {\path{doi:10.1016/j.ic.2024.105156}}.

\bibitem{NanVan2016}
Satyadev Nandakumar and Santhosh~Kumar Vangapelli.
\newblock Normality and finite-state dimension of {L}iouville numbers.
\newblock {\em Theory of Computing Systems}, 58(3):392--402, 2016.
\newblock \href {https://doi.org/10.1007/s00224-014-9554-8} {\path{doi:10.1007/s00224-014-9554-8}}.

\bibitem{NanVis2020}
Satyadev Nandakumar and Prateek Vishnoi.
\newblock {Randomness and Effective Dimension of Continued Fractions}.
\newblock In Javier Esparza and Daniel Kr\'{a}l', editors, {\em 45th International Symposium on Mathematical Foundations of Computer Science (MFCS 2020)}, volume 170 of {\em Leibniz International Proceedings in Informatics (LIPIcs)}, pages 73:1--73:13, Dagstuhl, Germany, 2020. Schloss Dagstuhl -- Leibniz-Zentrum f{\"u}r Informatik.
\newblock \href {https://doi.org/10.4230/LIPIcs.MFCS.2020.73} {\path{doi:10.4230/LIPIcs.MFCS.2020.73}}.

\bibitem{Nich2004}
Jared Nichols.
\newblock Pushdown gamblers and pushdown dimension.
\newblock Master's thesis, Iowa State University, Ames, IA, January 2004.
\newblock URL: \url{https://dr.lib.iastate.edu/handle/20.500.12876/97583}.

\bibitem{PipFis79}
Nicholas Pippenger and Michael~J. Fischer.
\newblock Relations among complexity measures.
\newblock {\em Journal of the ACM}, 26(2):361--381, April 1979.
\newblock \href {https://doi.org/10.1145/322123.322138} {\path{doi:10.1145/322123.322138}}.

\bibitem{RabSco1959}
M.~O. Rabin and D.~Scott.
\newblock Finite automata and their decision problems.
\newblock {\em IBM Journal of Research and Development}, 3(2):114--125, 1959.
\newblock \href {https://doi.org/10.1147/rd.32.0114} {\path{doi:10.1147/rd.32.0114}}.

\bibitem{Rose1966}
A.~L. Rosenberg.
\newblock On multi-head finite automata.
\newblock {\em IBM Journal of Research and Development}, 10(5):388--394, 1966.
\newblock \href {https://doi.org/10.1147/rd.105.0388} {\path{doi:10.1147/rd.105.0388}}.

\bibitem{SchSti72}
C.-P. Schnorr and H.~Stimm.
\newblock Endliche automaten und zufallsfolgen.
\newblock {\em Acta Informatica}, 1(4):345--359, 1972.
\newblock \href {https://doi.org/10.1007/BF00289514} {\path{doi:10.1007/BF00289514}}.

\bibitem{Sull84}
D.~Sullivan.
\newblock Entropy, {{H}ausdorff} measures old and new, and limit sets of geometrically finite {Kleinian} groups.
\newblock {\em Acta Mathematica}, 153:259--277, 1984.
\newblock \href {https://doi.org/10.1007/BF02392379} {\path{doi:10.1007/BF02392379}}.

\bibitem{Tric82}
Claude Tricot.
\newblock Two definitions of fractional dimension.
\newblock {\em Mathematical Proceedings of the Cambridge Philosophical Society}, 91:57--74, 1982.
\newblock \href {https://doi.org/10.1017/S0305004100059119} {\path{doi:10.1017/S0305004100059119}}.

\bibitem{YR78}
Andrew~C. Yao and Ronald~L. Rivest.
\newblock $k+1$ heads are better than $k$.
\newblock {\em Journal of the ACM}, 25(2):337--340, 1978.
\newblock \href {https://doi.org/10.1145/322063.322076} {\path{doi:10.1145/322063.322076}}.

\end{thebibliography}

\appendix
\section{Appendix: Proofs from Section~\ref{sec:hierarchy}}\label{app:hierarchy}

\begin{proof}[Proof of Lemma~\ref{lem:dim}]
    Let $S\in\C$, $h \in \mathbb{Z}^+$, and $Y=F_{h+1}(S)$. We define an $(h+1)$-head finite-state gambler $G = (T \times Q, \Sigma, \delta, \mu, \beta, (t_0, q_0), c_0)$ as follows:
    \begin{itemize}
        \item $\Sigma = \{0,1\}.$
        \item $|T\times Q| = 2p_{h+1}-1, T=\{t_0, t_1, t_2,\cdots, t_{p_{h+1}}\}, Q = \{q_0, q_1, \cdots, q_{2p_{h+1}-2}\}$.
        \item $\mu(t_0) = \mu(t_1) = \mu(t_{p_{h+1}}) = \{\overbrace{1\;\cdots\;1}^{h}\}$;
        
        For all integers $i, j, \text{ where } 2 \le i \le h+1, \;p_{i-1}-1<j\le p_i-1, \mu(t_j)=\{\overbrace{0\;\cdots\;0}^{i-1}\;1\;\cdots\;1\}$.
        \item $\forall 0 \le i \le p_{h+1}-1, \delta_T(t_i) = t_{i+1}$ and $\delta_T(t_{p_{h+1}}) = t_0.$
        \item For all $\vec{\sigma} \in \Sigma^{h+1}, \;  \delta_Q(q_0, \vec{\sigma}) = q_1, \delta_Q(q_1, \vec{\sigma}) = q_2$; 
        
        For all $2 < i \le 2p_{h+1}-2$, $q_i$ records the current XOR result, which is $0$ if $i$ is even, $1$ else.
        
        Let $\vec{\sigma}[i]$ be the $i$\textsuperscript{th} element in the vector $\vec{\sigma}$,   
        \[
    \delta_Q(q_2, \vec{\sigma}) = 
    \begin{cases}
    q_3, & \text{if } \vec{\sigma}[1]=1,\\
    q_4,  & \text{if } \vec{\sigma}[1]=0.
    \end{cases}
    \]
    $\forall 2 \le i \le h,\text{ Let } j= 2(p_i-1)$, then
    \[
    \delta_Q(q_j, \vec{\sigma}) = 
    \begin{cases}
    q_{j+2}, & \text{if } \vec{\sigma}[i]=0,\\
    q_{j+1},  & \text{if } \vec{\sigma}[i]=1.
    \end{cases}
    \]
    and 
    \[
    \delta_Q(q_{j-1}, \vec{\sigma}) = 
    \begin{cases}
    q_{j+2}, & \text{if } \vec{\sigma}[i]=1,\\
    q_{j+1},  & \text{if } \vec{\sigma}[i]=0.
    \end{cases}
    \]
    $\forall \vec{\sigma} \in \Sigma^{h+1}, \;  \delta_Q(q_{2p_{h+1}-2}, \vec{\sigma}) = \delta_Q(q_{2p_{h+1}-3}, \vec{\sigma})=q_1$.
    
    For all other states, $\forall \vec{\sigma} \in \Sigma^{h+1}$, $\delta_Q(q_i, \vec{\sigma}) = q_{i+2}$.
    \item $\beta(q_{2p_{h+1}-2})(0)=1$, $\beta(q_{2p_{h+1}-2})(1)=0$, $\beta(q_{2p_{h+1}-3})(0)=0$, $\beta(q_{2p_{h+1}-3})(1)=1$; $\forall q \in Q\setminus\{q_{2p_{h+1}-2}, q_{2p_{h+1}-3}\}, \beta(q)(0) = \beta(q)(1) = \frac{1}{2}$.
   \end{itemize}
    
    Based on our construction,
    \[d_G(Y[0..n-1]) = 2^{\lfloor n/p_{h+1} \rfloor}.\]
    Hence, for all $\epsilon \in \Q \cap (0,1)$, the $(1-\frac{1}{p_{h+1}} + \epsilon)$-gale of $G$ strongly succeeds on $Y$:
    \begin{align*}
        d_{G}^{\Big(1-\frac{1}{p_{h+1}} + \epsilon\Big)}(Y[0..n-1]) &= 2^{\Big(1-\frac{1}{p_{h+1}} + \epsilon -1\Big)n}d_{G}(Y[0..n-1]) \\
        & = 2^{\Big(\epsilon -\frac{1}{p_{h+1}}\Big)n}\times 2^{\lfloor \frac{n}{p_{h+1}} \rfloor}\\
        & \ge 2^{\Big(\epsilon -\frac{1}{p_{h+1}}\Big)n}\times 2^{\frac{n}{p_{h+1}} -1} \\
        & = 2^{\epsilon n -1}.
    \end{align*}
    Therefore, \[\dimFS{(h+1)}(Y)\le \DimFS{(h+1)}(Y) \le 1-\frac{1}{p_{h+1}}.\]
\end{proof}

\begin{proof}[Proof of Lemma~\ref{lem:disjoint}]
    First, to guarantee $b > \sigma_{h-1} n+c$, we will require $\gamma > \sigma_{h-1} + \frac{c-1}{n}$.

    Let $\epsilon = \frac{1-\sigma_{h-1}}{4}$, noting that $\sigma_{h-1} + \epsilon < 1$. Choose $n$ so that $\frac{c-1}{n} < \epsilon.$

    Define $\overline{T}_i$ as
    \[\overline{T}_i=\left\{\Big(\frac{p_i}{p_{h+1}}\Big)^{k'}\cdot \frac{p_{h+1}^{a_1+\ldots+a_h}}{p_1^{a_1}\cdots p_h^{a_h}} \in (0,1)\;\middle|\; k'\in [0,d]\right\}.\]
    
    Define \[T' = \bigcup _{i = 0}^{h-1} \overline{T}_i,\] then $\lvert T' \rvert < \infty$. Then $\exists \zeta, \forall \tau_a, \tau_b \in T', \lvert \tau_a - \tau_b \rvert > \zeta.$

\tikzset{every picture/.style={line width=0.75pt}}   

\begin{tikzpicture}[x=0.75pt,y=0.75pt,yscale=-1,xscale=1]

\draw [line width=1.5]    (67,146) -- (568,146) ;
\draw  [fill={rgb, 255:red, 245; green, 166; blue, 35 }  ,fill opacity=1 ] (565,145.5) .. controls (565,147.99) and (567.01,150) .. (569.5,150) .. controls (571.99,150) and (574,147.99) .. (574,145.5) .. controls (574,143.01) and (571.99,141) .. (569.5,141) .. controls (567.01,141) and (565,143.01) .. (565,145.5) -- cycle ;
\draw  [fill={rgb, 255:red, 245; green, 166; blue, 35 }  ,fill opacity=1 ] (480,146.5) .. controls (480,148.99) and (482.01,151) .. (484.5,151) .. controls (486.99,151) and (489,148.99) .. (489,146.5) .. controls (489,144.01) and (486.99,142) .. (484.5,142) .. controls (482.01,142) and (480,144.01) .. (480,146.5) -- cycle ;
\draw  [fill={rgb, 255:red, 245; green, 166; blue, 35 }  ,fill opacity=1 ] (313,146) .. controls (313,148.49) and (315.01,150.5) .. (317.5,150.5) .. controls (319.99,150.5) and (322,148.49) .. (322,146) .. controls (322,143.51) and (319.99,141.5) .. (317.5,141.5) .. controls (315.01,141.5) and (313,143.51) .. (313,146) -- cycle ;
\draw  [fill={rgb, 255:red, 245; green, 166; blue, 35 }  ,fill opacity=1 ] (254,145.5) .. controls (254,147.99) and (256.01,150) .. (258.5,150) .. controls (260.99,150) and (263,147.99) .. (263,145.5) .. controls (263,143.01) and (260.99,141) .. (258.5,141) .. controls (256.01,141) and (254,143.01) .. (254,145.5) -- cycle ;
\draw  [fill={rgb, 255:red, 245; green, 166; blue, 35 }  ,fill opacity=1 ] (174,145.5) .. controls (174,147.99) and (176.01,150) .. (178.5,150) .. controls (180.99,150) and (183,147.99) .. (183,145.5) .. controls (183,143.01) and (180.99,141) .. (178.5,141) .. controls (176.01,141) and (174,143.01) .. (174,145.5) -- cycle ;
\draw  [fill={rgb, 255:red, 245; green, 166; blue, 35 }  ,fill opacity=1 ] (132,145.5) .. controls (132,147.99) and (134.01,150) .. (136.5,150) .. controls (138.99,150) and (141,147.99) .. (141,145.5) .. controls (141,143.01) and (138.99,141) .. (136.5,141) .. controls (134.01,141) and (132,143.01) .. (132,145.5) -- cycle ;
\draw   (245,163) .. controls (245,167.67) and (247.33,170) .. (252,170) -- (272.5,170) .. controls (279.17,170) and (282.5,172.33) .. (282.5,177) .. controls (282.5,172.33) and (285.83,170) .. (292.5,170)(289.5,170) -- (313,170) .. controls (317.67,170) and (320,167.67) .. (320,163) ;
\draw [color={rgb, 255:red, 208; green, 2; blue, 27 }  ,draw opacity=1 ][line width=1.5]  [dash pattern={on 1.69pt off 2.76pt}]  (245,135) -- (245,164) ;

\draw (111,115) node [anchor=north west][inner sep=0.75pt]   [align=left] {$\displaystyle \tau _{a} \gamma n$};
\draw (162,115) node [anchor=north west][inner sep=0.75pt]   [align=left] {$\displaystyle \tau _{a} n$};
\draw (236,116) node [anchor=north west][inner sep=0.75pt]   [align=left] {$\displaystyle \tau _{b} \gamma n$};
\draw (303,114) node [anchor=north west][inner sep=0.75pt]   [align=left] {$\displaystyle \tau _{b} n$};
\draw (475,116) node [anchor=north west][inner sep=0.75pt]   [align=left] {$\displaystyle \gamma n$};
\draw (563,117) node [anchor=north west][inner sep=0.75pt]   [align=left] {$\displaystyle n$};
\draw (271,186) node [anchor=north west][inner sep=0.75pt]   [align=left] {$\displaystyle \frac{\zeta n}{2}$};

\end{tikzpicture}

For any pair $\tau_a, \tau_b \in T'$, we can find $\gamma$ such that $[\tau_a\gamma n, \tau_a n]$ and $[\tau_b\gamma n, \tau_b n]$ are disjoint.
Without loss of generality, assume $\tau_a < \tau_b$. Then we can choose $\gamma$ so that
\begin{align*}
    \tau_b\gamma n &> \tau_b n - \frac{\zeta n}{2} \\
    \gamma &> 1 - \frac{\zeta}{2 \tau_b}.
\end{align*}
Hence, we can choose $\gamma > 1 - \frac{\zeta}{2 \tau_{max}}$ where $\tau_{max} = \max\{\tau \mid \tau \in T'\}$ so that $\overline{T}_i$ are disjoint.

\medskip
For all $1 \le i \le h-1$, the length of the interval $[\sigma_ib-c,\sigma_in+c]$ can be expressed as 
\begin{align*}
    \sigma_i n +c - (\sigma_i(\lfloor\gamma n\rfloor +1)-c) & = \sigma_i n - \sigma_i(\lfloor\gamma n\rfloor +1) + 2c \\
    & \le \sigma_i n(1 - \gamma) + 2c.
\end{align*}
If this length is less than $\frac{\zeta}{2}n$, then we can guarantee there exists at most one $\tau \in T$ so that $[\sigma_ib-c,\sigma_in+c] \cap [\tau\gamma n, \tau n] \neq \emptyset$. Rearranging, this condition is satisfied if
\[n \ge \frac{4c}{\zeta - 2\sigma_i(1 - \gamma)}.\]

Therefore, we can select rational
\[\gamma > \max\left\{\sigma_{h-1} + \epsilon, 1 - \frac{\zeta}{2 \tau_{max}}, 1 - \frac{\zeta}{2 \sigma_{h-1}}\right\}\]
and every
\[n_0>\max\left\{\frac{4(c-1)}{1-\sigma_{h-1}}, \frac{4c}{\zeta - 2\sigma_{h-1}(1 - \gamma)}\right\},\]
such that, for all $n\geq n_0$, there is some $1\leq j\le h$ satisfying $U\cap \overline{V}_j=\emptyset$.
\end{proof}

\begin{proof}[Proof of Lemma~\ref{lem:3inequ}]
    Define $W=\{0,\ldots,b-1\}\setminus \overline{V}_j$ and $w=Y[W]$. We will prove the following three inequalities.
    \begin{align}
    K(u_h\mid u)&\ge K(u_h\mid w) - O(1) \label{geq:1st} \\
    &\ge  K(v_j\mid w) - O(\log n) \label{geq:2nd} \\
    &\ge (n-b)(1-p_{h+1}^{-d-1}) - O(\log n). \label{geq:3rd}
\end{align}

    For \eqref{geq:1st}, Lemma~\ref{lem:disjoint} gives $U\cap \overline{V}_j=\emptyset$. Since $W=\{0,\ldots,b-1\}\setminus \overline{V}_j$, we have $U\subseteq W$. It follows that $u$ is computable from $w$, so by~\eqref{eq:soi} and~\eqref{eq:dpi}, $K(u_h\mid u) + O(1) \ge K(u_h\mid w)$.
    \[u[i]=\begin{cases}w[i]&\text{if }i\in U\\0&\text{otherwise.}\end{cases}\]

For \eqref{geq:2nd}, we first prove the following claim.
\begin{claim}\label{Claim:comp}
    $(u_h,w)\mapsto v_j$ is a computable mapping.
\end{claim}
\begin{claimproof}
    For each $k\in\mathbb{N}$, let
    \[V^{(k)}_j=\left\{t\in V_j\mid \nu_j(t)\leq k\right\}\]
    and
    $v^{(k)}_j=Y\big[V^{(k)}_j\big]$. We show by induction on $k$ that, for all $k\in\mathbb{N}$, the mapping $(u_h,w)\mapsto v^{(k)}_j$ is computable. For the basis step, simply observe that $V^{(0)}_j\subseteq[b,n]$.
    
    For the inductive step, fix $k\in\mathbb{N}$, assume $(u_h,w)\mapsto v^{(k)}_j$ is computable, and let $r\in V_j^{(k+1)}$. Then $r=s\cdot p_j^e$ for some $e\in\{0,\ldots,k+1\}$ and $s$ satisfying $\nu_{h+1}(s)=0$ and $s\cdot p_{h+1}^e\in[b,n]$. If $e=0$, then $r\in[b,n]$. Otherwise,
    \[Y[s\cdot p_{h+1}^e]=\bigoplus_{i_1,\ldots,i_e\in\{1,\ldots,h\}}Y\left[s\cdot p_{i_1}\cdots p_{i_e}\right],\]
    which can be rearranged to obtain a formula for $Y[r]=Y[s\cdot p_j^e]$:
    \begin{equation}\label{eq:recover}
        Y[r]=Y[s\cdot p_{h+1}^e]\oplus\bigoplus_{\substack{i_1,\ldots,i_e\in\{1,\ldots,h\}\\(i_1,\ldots,i_e)\neq(j,\ldots,j)}}Y\left[s\cdot p_{i_1}\cdots p_{i_e}\right].
    \end{equation}
    Therefore, it suffices to show that every index appearing in~\eqref{eq:recover} belongs to $[b,n]\cup W\cup V^{(k)}_j$.
    
    The index $s\cdot p_{h+1}^e$ of the first term of~\eqref{eq:recover} belongs to $[b,n]$. Each of the remaining indices can be written in the form $t=s\cdot p_j^{\ell}p_{i_1}\cdots p_{i_{e-\ell}}$, for some $0\leq\ell\leq e-1$ and $i_1,\ldots,i_{e-\ell}\in\{1,\ldots,h\}\setminus\{j\}$. If $t\geq b$, then $t<s\cdot p_{h+1}^e\leq n$ implies $t\in[b,n]$; hence, assume $t<b$. If $t\not\in W$, then $t\in\overline{V}_j$, and it follows that $t\in V_j$ since
    \[\nu_{h+1}(t)=\nu_{h+1}(s\cdot p_j^{\ell}p_{i_1}\cdots p_{i_{e-\ell}})=\nu_{h+1}(s)=0.\]
    In particular, since
    \[\nu_j(t)=\nu_j(s)+\ell\leq \nu_j(s)+e-1\leq \nu_j(r)-1\leq k,\]
    we have $t\in V_j^{(k)}$ in this case. Thus, each bit of $Y$ in~\eqref{eq:recover} can be obtained from $u_h$, $w$, or $v_j^{(k)}$. By induction, we conclude that the mapping is computable.
    \end{claimproof}
    
    Therefore,
\begin{align*}
    K(v_j\mid w)
    &\leq K(v_j,w)-K(w)+O(\log n)\tag{symmetry of information}\\
    &\leq K(u_h,w)-K(w)+O(\log n)\tag{by~\eqref{eq:dpi} and Claim~\ref{Claim:comp}}\\
    &\leq K(u_h\mid w)+O(\log n).\tag{symmetry of information}
\end{align*}
For \eqref{geq:3rd}, first, observe that
\begin{align*}
    |V_j|&=\left|\left\{(p_j/p_{h+1})^{\nu_{h+1}(m)}m \;\middle|\; m\in[b,n]\text{ and }\nu_{h+1}(m)\in[0,d]\right\}\right|\\
    &=\left|\left\{m\in[b,n]\mid\nu_{h+1}(m)\in[0,d]\right\}\right|\\
    &=(n-b+1)-\left|\left\{m\in[b,n]\mid\nu_{h+1}(m)\geq d+1\right\}\right|\\
    &\geq (n-b+1)-\left\lceil\frac{n-b+1}{p_{h+1}^{d+1}}\right\rceil\\
    &\geq (n-b)\left(1-p_{h+1}^{-d-1}\right)-O(1).
\end{align*}
It suffices to show that the bits at these indices are nearly incompressible given $w$, in the sense that $K(v_j\mid w)\geq |V_j|-O(\log n)$.

Let $m=\left\lfloor (p_{h+1}-1) \frac{n}{p_{h+1}}\right\rfloor+(n\bmod p_{h+1})$. Since each index $t\in V_j$ has $\nu_{h+1}(t)=0$, each nontrivial bit of $v_j$ is copied directly from a bit in $R$. Let $A\subseteq[0,m]$ be the set of indices in $R$ of those bits, so there is a computable bijection between $v_j$ and $R[A]$. Furthermore, the set $A$ is computable given $m$ and satisfies $|A|=|V_j|$. Let $B=[0,m]\setminus A$. Since $W$ and $\overline{V}_j$ are disjoint, there is a computable mapping from $R[B]$ to $w$. Furthermore, since $R$ is Martin-L\"of random, we have
\begin{equation}\label{eq:rarb}
    K(R[A],R[B])\geq K(R[0..m])-O(1)\geq m-O(1).
\end{equation}
It follows that
\begin{align*}
    K(v_j\mid w)&\geq K(v_j\mid R[B])-O(1)\tag{$R[B]\mapsto w$ is computable}\\
    &\geq K(R[A]\mid R[B])-O(1)\tag{$v_j\mapsto R[A]$ is computable}\\
    &\geq K(R[A],R[B])-K(R[B])-O(\log n)\tag{symmetry of information}\\
    &= m-K(R[B])-O(\log n)\tag{by~\eqref{eq:rarb}}\\
    &\geq m-|B|-O(\log n)\tag{$R[B]$ is computable from $|B|$ bits}\\
    &=|A|-O(\log n)\\
    &=|V_j|-O(\log n).
\end{align*}
\end{proof}

\begin{proof}[Proof of Lemma~\ref{lem:dtok}]
    First observe that given $u$ and the state $(t_b,q_b)$ of $G$ after the leading head has read $y$, we can compute the function $g:\{0,1\}^{(n-b+1)}\to\Q$ defined by
    \[g(z)=\max_{w\sqsubseteq z}\frac{d_G(yw)}{d_G(y)},\]
    even without knowing $y$. This is because, for each $w\sqsubseteq z$, as $u$ contains all information the trailing heads could possibly access while the leading head's position is in $[b,n]\supseteq [b,b+|w|-1]$, we can simulate $G$ to determine its partial state sequence $((t_{b+1},q_{b+1}),\ldots,(t_{b+|w|},q_{b+|w|}))\in (T\times Q)^{|w|}$ as the leading head reads the suffix $w$ of $yw$. The function $g$ can then be calculated as
    \[g(z)=\max_{w\sqsubseteq z}\prod_{i=1}^{|w|}2\beta(q_{b+i})(w[i]).\]
    Hence, we can define a Turing machine $M$ that, given input
    $(j,t,q,u)\in\N\times T\times Q\times\{0,1\}^{n+1}$,
    finds the lexicographically $j$\textsuperscript{th} string $z\in\{0,1\}^{n-b+1}$ such that $g(z)>2^{\varepsilon\cdot (n-b)}$ and prints $0^{b}z\in\{0,1\}^{n+1}$; the rationals $\gamma$ and $\varepsilon$ are included in the machine description, $n=|u|-1$, and $b=\lfloor\gamma n\rfloor$+1.
    
    Suppose now that~\eqref{eq:nowin} does not hold, i.e., that $g(x)>2^{\varepsilon\cdot (n-b)}$, and let $j$ be its lexicographic position among all such strings. Then $M(j,t,q,u)$ prints $u_h$, so we have
    \begin{align}
        K(u_h\mid u)&\leq K(j,t,q)+O(1)\notag\\
        &\leq K(j)+K(t)+K(q)+O(1)\notag\\
        &\leq K(j)+O(1)\notag\\
        &\leq \log j+O(1),\label{eq:Kofj}
    \end{align}
    as $j\leq 2^n$ trivially. We now bound $j$ more carefully. By the martingale property, for each $\ell\in\N$ we have
    \[\sum_{w\in\{0,1\}^{\ell}}\frac{d_G(yw)}{d_G(y)}=2^{\ell},\]
    so there are fewer than $2^{\ell-\varepsilon\cdot(n-b)}$ strings $w$ of length $\ell$ such that $\frac{d_G(yw)}{d_G(y)}>2^{\varepsilon\cdot(n-b)}$. By the union bound, the number of strings $z\in\{0,1\}^{n-b+1}$ with $g(z)>2^{\varepsilon\cdot(n-b)}$ is therefore less than
    \[\sum_{\ell=0}^{n-b+1}2^{\ell-\varepsilon\cdot(n-b)}<2^{(1-\varepsilon)(n-b)+2}\]
    As $j$ cannot exceed this number, we can combine this with~\eqref{eq:Kofj} to get
    \[K(u_h\mid u)<(1-\varepsilon)(n-b)+O(1),\]
    which is less than $(1-\varepsilon/2)(n-b)$ when $n$ is sufficiently large. By contrapositive, this completes the proof of the lemma.
\end{proof}

\begin{proof}[Proof of Lemma~\ref{lem:dbound}]
    Let $\gamma$ and $n_0$ be as in Lemma~\ref{lem:disjoint}, and choose a constant $N_0\geq n_0$ that is sufficiently large to apply Lemma~\ref{lem:dtok}. Let $y_0= Y[0..N_0-1].$ For each $k \in \Z^+,$ let $N_k = \lfloor \frac{N_0}{\gamma^k} \rfloor$ and $y_k = Y[N_{k-1}..N_k-1]$.
    
    Now let $n\geq N_0$, and let $k$ be such that $N_{k-1}<n\leq N_{k+1}$. Then
    \[Y[0..n-1]=y_0y_1\cdots y_k Y[N_{k}..n-1].\]
    We bound the value of the martingale $d_G$ on this prefix by iterative application of Lemma~\ref{lem:dtok}.
    \begin{align*}
        d_G(Y[0..n-1]) & = d_G(y_0)\frac{d_G(Y[0..N_1-1])}{d_G(y_0)}\cdots \frac{d_G(Y[0..N_k-1])}{d_G(Y[0..N_{k-1}-1])}\frac{d_G(Y[0..n-1])}{d_G(Y[0..N_{k}-1])} \\ 
    & \leq d_G(y_0)2^{\epsilon(|y_1|+ \cdots |y_{k+1}|)} \tag{by Lemma~\ref{lem:dtok}}\\
    & \leq d_G(y_0) 2^{\epsilon N_{k+1}} \\
    & \le d_G(y_0) 2^{\epsilon n/\gamma} \\
    & \le d_G(y_0) 2^{2 \epsilon n}.\tag{$\gamma\in(1/2,1)$}
    \end{align*}
    Hence, 
    \begin{align*}
        d_G^{(1-2\epsilon)}(Y[0..n-1]) & = 2^{(1-2\epsilon -1)n}d_G(Y[0..n-1]) \\
    & \le d_G(y_0),
    \end{align*}
    which is a constant.
\end{proof}

\section{Appendix: Lemmas and Proofs from Section~\ref{sec:instability}}\label{app:instability}

To prove that $\dimFS{(h)}$ is unstable, we first prove several lemmas about the sequences $X=F'_{h+1}(R)$ and $Z=F''_{h+1}(R)$, where $R$ is a Martin-L\"of random sequence. These lemma statements and proofs are very similar to those of Section~\ref{sec:hierarchy}, but we include them here for completeness.

Fix trailing head speeds $0<\sigma_1<\ldots<\sigma_{h-1}<1$ and a buffer constant $c\in\N$.
Let $n\in\N$. For every set $A\subseteq[0..n]$, define the string $X[A]\in\{0,1\}^{n+1}$ by
\[x[A][i]=\begin{cases}X[i]&\text{if }i\in A\\0&\text{otherwise,}\end{cases}\]
and 
$Z[A]\in\{0,1\}^{n+1}$ by
\[z[A][i]=\begin{cases}Z[i]&\text{if }i\in A\\0&\text{otherwise,}\end{cases}\]
for all $i\in\{0,\ldots,n\}$. 

Let $b=\lfloor\gamma n\rfloor+1$, where $\gamma\in(0,1) \cap \Q$ is a parameter whose value will be defined in Lemma~\ref{lem:disjoint}. Define
\[x_h=x[b..n] \text{ and } z_h=z[b..n],\]
\[U=\bigcup_{i=1}^{h-1}[\sigma_ib-c,\sigma_in+c]\]
and
\[x_u=x[U] \text{ and } z_u=z[U]\]
Intuitively, the string $x_u$ contains all bits that might be read by any trailing head while the leading head $h$ reads $x_h$. The same holds for string $z$.

Let $d\in\mathbb{N}$. For each $1\leq i\leq h$, define
\[V_i=\left\{(p_i/p_{h+1})^{\nu_{h+1}(m)}m \;\middle|\; m\in[b,n]\text{ and }\nu_{h+1}(m)\in[0,d]\right\}\]
and its \emph{closure}
\[\overline{V}_i=\left\{m\cdot \frac{p_{h+1}^{a_1+\ldots+a_h}}{p_1^{a_1}\cdots p_h^{a_h}}\;\middle|\; m\in V_i\right\}.\] 

Define $x_i=x[V_i]$ and $\overline{x}_i=x[\overline{V}_i]$, correspondingly, $z_i=z[V_i]$ and $\overline{z}_i=z[\overline{V}_i]$.

\begin{lemma}\label{lem:3inequxz}
Fix $\gamma$ and $j$ as in Lemma~\ref{lem:disjoint}, and assume that $n$ is sufficiently large. Define
\[W=\{0,\ldots,b-1\}\setminus \overline{V}_j.\]
and
\[w_x=x[W] \text{ and } w_z=z[W].\]
If $1\le j \le h-1$, then we have:
    \begin{align}
    K(x_h\mid x_u)&\ge K(x_h\mid w_x) - O(1) \label{geq:1stx} \\
    &\ge  K(x_j\mid w_x) - O(\log n) \label{geq:2ndx} \\
    &\ge (n-b)(1-p_{h+1}^{-d-1}) - O(\log n). \label{geq:3rdx}
\end{align}
Similarly, if $2 \le j \le h$, we have:
\begin{align}
    K(z_h\mid z_u)&\ge K(z_h\mid w_z) - O(1) \label{geq:1stz} \\
    &\ge  K(z_j\mid w_z) - O(\log n) \label{geq:2ndz} \\
    &\ge (n-b)(1-p_{h+1}^{-d-1}) - O(\log n). \label{geq:3rdz}
\end{align}
\end{lemma}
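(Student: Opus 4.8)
The plan is to mirror the proof of Lemma~\ref{lem:3inequ} essentially verbatim, observing that the only structural feature of $Y=F_{h+1}(S)$ that was used in that proof is the recursive parity relation at multiples of $p_{h+1}$, together with the fact that the parity at index $q p_{h+1}$ involves a set of $h$ ``child'' indices $q p_1,\ldots,q p_h$. For $X=F'_{h+1}(R)$ the children are $q p_1,\ldots,q p_{h-1}$ and for $Z=F''_{h+1}(R)$ they are $q p_2,\ldots,q p_h$; in each case the recursion still only ever moves indices by factors $p_k/p_{h+1}$ with $k$ ranging over the relevant child set, so the definitions of $V_i$, $\overline V_i$, and $W$ behave in exactly the same way. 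The restriction $1\le j\le h-1$ in the $X$ case (resp.\ $2\le j\le h$ in the $Z$ case) is precisely what guarantees that the index $j$ we choose via Lemma~\ref{lem:disjoint} is actually one of the child indices used in the recursion for that sequence, so that the rearrangement identity~\eqref{eq:recover} still holds with $j$ among the summation indices.

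Concretely, I would carry out the three inequalities for $X$ (the $Z$ case being identical with $\{1,\ldots,h-1\}$ replaced by $\{2,\ldots,h\}$). First, \eqref{geq:1stx}: Lemma~\ref{lem:disjoint} gives a $j$ with $U\cap\overline V_j=\emptyset$, hence $U\subseteq W$, hence $x_u$ is computable from $w_x$ by zeroing out coordinates outside $U$; then~\eqref{eq:soi} and~\eqref{eq:dpi} give $K(x_h\mid x_u)+O(1)\ge K(x_h\mid w_x)$. Second, \eqref{geq:2ndx}: I would reprove Claim~\ref{Claim:comp} in this setting, i.e.\ that $(x_h,w_x)\mapsto x_j$ is computable, by the same induction on $\nu_j$; the inductive step uses~\eqref{eq:recover} with the summation over tuples in $\{1,\ldots,h-1\}^e$ (resp.\ $\{2,\ldots,h\}^e$), and the same case analysis shows every index appearing there lies in $[b,n]\cup W\cup V_j^{(k)}$ — here is where $j$ being a valid child index matters, since otherwise the term with $(i_1,\ldots,i_e)=(j,\ldots,j)$ could not be isolated. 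Then symmetry of information and~\eqref{eq:dpi} chain to $K(x_j\mid w_x)\le K(x_h\mid w_x)+O(\log n)$. Third, \eqref{geq:3rdx}: the counting bound $|V_j|\ge (n-b)(1-p_{h+1}^{-d-1})-O(1)$ is unchanged since it only depends on $\nu_{h+1}$; and the incompressibility argument — biject $x_j$ with $R[A]$ for a computable $A$, biject $R[B]\mapsto w_x$ using disjointness of $W$ and $\overline V_j$, invoke Martin-L\"of randomness of $R$ to get $K(R[A],R[B])\ge m-O(1)$, then apply symmetry of information — goes through word for word.

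The main obstacle, such as it is, is verifying that Lemma~\ref{lem:disjoint} still supplies an index $j$ in the required \emph{restricted} range ($1\le j\le h-1$ for $X$, $2\le j\le h$ for $Z$), rather than merely $1\le j\le h$. In the statement of Lemma~\ref{lem:3inequxz} the two cases are split on exactly this hypothesis, so strictly speaking the lemma as stated is conditional on $j$ landing in the right range; the genuine content, which would need to be addressed in whatever argument follows (the instability proof proper), is that the \emph{obliviousness} of the heads forces the same head-speed profile on both $X$ and $Z$, so that whichever of $\overline V_1,\ldots,\overline V_h$ is disjoint from $U$, it must miss a child index of at least one of the two sequences — giving the required $j$ for that sequence. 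Within the proof of Lemma~\ref{lem:3inequxz} itself, however, one simply assumes the stated hypothesis on $j$ and the rest is a routine transcription of Section~\ref{sec:hierarchy}; I would write it that way, flagging that the choice of $j$ (and which of $X$, $Z$ it applies to) is deferred to the place where Lemma~\ref{lem:disjoint} is actually invoked.
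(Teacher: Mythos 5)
Your proposal matches the paper's own proof essentially verbatim: the paper proves this lemma by transcribing the argument of Lemma~\ref{lem:3inequ} with the child index set $\{1,\ldots,h\}$ replaced by $\{1,\ldots,h-1\}$ for $X$ and $\{2,\ldots,h\}$ for $Z$, establishing the analogue of Claim~\ref{Claim:comp} by the same induction and chaining the same three inequalities via symmetry of information and the Martin-L\"of randomness of $R$. Your observation that the restricted range of $j$ is taken as a hypothesis here, with the choice of $j$ deferred to the invocation of Lemma~\ref{lem:disjoint} in the instability argument, is also exactly how the paper structures it.
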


\begin{proof}
    For \eqref{geq:1stx}, Since $U\cap \overline{V}_j=\emptyset$ by using Lemma~\ref{lem:disjoint}, and $W=\{0,\ldots,b-1\}\setminus \overline{V}_j,$ then $U\subseteq W.$

Then by using~\eqref{eq:dpi} and~\eqref{eq:soi}, we have $ K(x_h\mid x_u) + O(1) \ge K(x_h\mid w_x)$.

The same reasoning works for \eqref{geq:1stz}, hence, we have $ K(z_h\mid z_u) + O(1) \ge K(z_h\mid w_z)$.

\bigskip
For \eqref{geq:2ndx} and \eqref{geq:2ndz}, first, we need to show the following claim:
\begin{claim}\label{Claim:comp_app}
    $(x_h,w_x)\mapsto x_j$ is a computable mapping. Correspondingly, $(z_h,w_z)\mapsto z_j$ is also a computable mapping.
\end{claim}
\begin{claimproof}
    Note here, for computing $x_j$, $1\le j \le h-1$.

For each $k\in\mathbb{N}$, let
    \[V^{(k)}_j=\left\{t\in V_j\mid \nu_j(t)\leq k\right\}\]
    and
    $x^{(k)}_j=x\big[V^{(k)}_j\big]$. We show by induction on $k$ that, for all $k\in\mathbb{N}$, the mapping $(x_h,w_x)\mapsto x^{(k)}_j$ is computable. For the basis step, simply observe that $V^{(0)}_j\subseteq[b..n]$.
    
    For the inductive step, fix $k\in\mathbb{N}$, assume $(x_h,w_x)\mapsto x^{(k)}_j$ is computable, and let $r\in V_j^{(k+1)}$. Then $r=s\cdot p_j^e$ for some $e\in\{0,\ldots,k+1\}$ and $s$ satisfying $\nu_{h+1}(s)=0$ and $s\cdot p_{h+1}^e\in[b,n]$. If $e=0$, then $r\in[b,n]$. Otherwise,
    \begin{equation}\label{eq:orgx}
        x[s\cdot p_{h+1}^e]=\bigoplus_{i_1,\ldots,i_e\in\{1,\ldots,h-1\}}x\left[s\cdot p_{i_1}\cdots p_{i_e}\right],
    \end{equation}
    which can be rearranged to obtain a formula for $x[r]=x[s\cdot p_j^e]$:
    \begin{equation}\label{eq:recoverx}
        x[s\cdot p_{h+1}^e]\oplus\bigoplus_{\substack{i_1,\ldots,i_e\in\{1,\ldots,h-1\}\\(i_1,\ldots,i_e)\neq(j,\ldots,j)}}x\left[s\cdot p_{i_1}\cdots p_{i_e}\right].
    \end{equation}
    Therefore, it suffices to show that every index appearing in~\eqref{eq:recoverx} belongs to
    \[[b,n]\cup W\cup V^{(k)}_j.\]
    The index $s\cdot p_{h+1}^e$ of the first term of~\eqref{eq:recoverx} belongs to $[b,n]$. Each of the remaining indices can be written in the form $t=s\cdot p_j^{\ell}p_{i_1}\cdots p_{i_{e-\ell}}$, for some $0\leq\ell\leq e-1$ and $i_1,\ldots,i_{e-\ell}\in\{1,\ldots,h-1\}\setminus\{j\}$. If $t\geq b$, then $t<s\cdot p_{h+1}^e\leq n$ implies $t\in[b,n]$; hence, assume $t<b$. If $t\not\in W$, then $t\in\overline{V}_j$, and it follows that $t\in V_j$ since
    \[\nu_{h+1}(t)=\nu_{h+1}(s\cdot p_j^{\ell}p_{i_1}\cdots p_{i_{e-\ell}})=\nu_{h+1}(s)=0.\]
    In particular, since
    \[\nu_j(t)=\nu_j(s)+\ell\leq \nu_j(s)+e-1\leq \nu_j(r)-1\leq k,\]
    we have $t\in V_j^{(k)}$ in this case. Thus, each bit of $x$ in~\eqref{eq:recoverx} can be obtained from $x_h$, $w_x$, or $x_j^{(k)}$. By induction, we conclude that the mapping $(x_h,w_x)\mapsto x_j$ is computable.

    For showing $(z_h,w_z)\mapsto z_j$ is computable, the idea is identical, we only need to note that $2\le j \le h$ here.

    Equation~\eqref{eq:orgx} needs to be changed to: 
    \[z[s\cdot p_{h+1}^e]=\bigoplus_{i_1,\ldots,i_e\in\{2,\ldots,h\}}z\left[s\cdot p_{i_1}\cdots p_{i_e}\right],\]
    and equation~\eqref{eq:recoverx} needs to be changed to:
    \[z[s\cdot p_{h+1}^e]\oplus\bigoplus_{\substack{i_1,\ldots,i_e\in\{2,\ldots,h\}\\(i_1,\ldots,i_e)\neq(j,\ldots,j)}}z\left[s\cdot p_{i_1}\cdots p_{i_e}\right].\]
    The reasoning is the same.
\end{claimproof}
   
    Therefore,
\begin{align*}
    K(x_j\mid w_x)
    &\leq K(x_j,w_x)-K(w_x)+O(\log n)\tag{symmetry of information}\\
    &\leq K(x_h,w_x)-K(w_x)+O(\log n)\tag{by~\eqref{eq:dpi} and Claim~\ref{Claim:comp_app}}\\
    &\leq K(x_h\mid w_x)+O(\log n).\tag{symmetry of information}
\end{align*}
Similarly, $K(z_j\mid w_z) \leq K(z_h\mid w_z)+O(\log n)$.

\bigskip
For \eqref{geq:3rdx}, first, observe that
\begin{align*}
    |V_j|&=\left|\left\{(p_j/p_{h+1})^{\nu_{h+1}(m)}m \;\middle|\; m\in[b,n]\text{ and }\nu_{h+1}(m)\in[0,d]\right\}\right|\\
    &=\left|\left\{m\in[b,n]\mid\nu_{h+1}(m)\in[0,d]\right\}\right|\\
    &=(n-b+1)-\left|\left\{m\in[b,n]\mid\nu_{h+1}(m)\geq d+1\right\}\right|\\
    &\geq (n-b+1)-\left\lceil\frac{n-b+1}{p_{h+1}^{d+1}}\right\rceil\\
    &\geq (n-b)\left(1-p_{h+1}^{-d-1}\right)-O(1).
\end{align*}
It suffices to show that the bits at these indices are nearly incompressible given $w_x$, in the sense that $K(x_j\mid w_x)\geq |V_j|-O(\log n)$.

Let $m=\left\lfloor (p_{h+1}-1) \frac{n}{p_{h+1}}\right\rfloor+(n\bmod p_{h+1})$. Since each index $t\in V_j$ has $\nu_{h+1}(t)=0$, each nontrivial bit of $x_j$ is copied directly from a bit in $R$. Let $A\subseteq[0,m]$ be the set of indices in $R$ of those bits, so there is a computable bijection between $x_j$ and $R[A]$. Furthermore, the set $A$ is computable given $m$ and satisfies $|A|=|V_j|$. Let $B=[0,m]\setminus A$. Since $W$ and $\overline{V}_j$ are disjoint, there is a computable mapping from $R[B]$ to $w_x$. Furthermore, since $R$ is random, we have
\begin{equation}\label{eq:rarbx}
    K(R[A],R[B])\geq K(R[0..m])-O(1)\geq m-O(1).
\end{equation}
It follows that
\begin{align*}
    K(x_j\mid w_x)&\geq K(x_j\mid R[B])-O(1)\tag{$R[B]\mapsto w_x$ is computable}\\
    &\geq K(R[A]\mid R[B])-O(1)\tag{$x_j\mapsto R[A]$ is computable}\\
    &\geq K(R[A],R[B])-K(R[B])-O(\log n)\tag{symmetry of information}\\
    &= m-K(R[B])-O(\log n)\tag{by~\eqref{eq:rarbx}}\\
    &\geq m-|B|-O(\log n)\tag{$R[B]$ is computable from $|B|$ bits}\\
    &=|A|-O(\log n)\\
    &=|V_j|-O(\log n).
\end{align*}

For \eqref{geq:3rdz}, we use the same reasoning to show that $K(z_j\mid w_z)\geq |V_j|-O(\log n)$. Only need to pay attention that the computable functions are between $z_j$ and $R[A]$, between $R[B]$ and $w_z$.
\end{proof}

\begin{lemma}\label{lem:dtokx}
    Let $h\in\Z^+$, $G$ an $h$-FSG, $\varepsilon>0$ a rational constant, and $n\in\N$ sufficiently large. Define $b$, $x_u$, and $x_h$ as above, let $v=x_h[b..n]$, and let $x=X[0..b-1]$. If
    \[K(x_h\mid x_u)\geq (n-b)(1-\frac{\varepsilon}{2})\]
    then
    \begin{equation}\label{eq:nowinx}
        \max_{w\sqsubseteq v}\frac{d_G(xw)}{d_G(x)}\leq 2^{\varepsilon\cdot(n-b)}.
    \end{equation}
\end{lemma}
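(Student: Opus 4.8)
The plan is to adapt the proof of Lemma~\ref{lem:dtok} essentially verbatim, with $Y$ replaced by $X=F'_{h+1}(R)$, the trailing-head information $u$ replaced by $x_u$, the string $u_h$ replaced by $x_h$, the prefix $Y[0..b-1]$ replaced by $x=X[0..b-1]$, and the suffix replaced by $v=x_h[b..n]$. The reason this transfers is that the proof of Lemma~\ref{lem:dtok} never used the parity structure of $F_{h+1}$: it relied only on the fact that the head movements of an $h$-FSG are \emph{oblivious}, hence data-independent, which holds equally when the gambler is run on $X$. I would argue by contraposition, assuming~\eqref{eq:nowinx} fails --- i.e.\ $g(v)>2^{\varepsilon(n-b)}$, where $g(z)=\max_{w\sqsubseteq z}d_G(xw)/d_G(x)$ for $z\in\{0,1\}^{n-b+1}$ --- and deriving $K(x_h\mid x_u)<(1-\frac{\varepsilon}{2})(n-b)$ for all sufficiently large $n$, contradicting the hypothesis.

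The crucial point is that $g$ is computable from $x_u$ together with the state $(t_b,q_b)$ that $G$ is in after its leading head has scanned $x$, \emph{without} knowledge of $x$ itself. Indeed, by Observation~\ref{obs:speed} every position visited by a trailing head while the leading head occupies a position in $[b,n]$ lies in $U$, so $x_u$ records all bits those heads can see over that stretch; starting from $(t_b,q_b)$ one simulates $G$ along any suffix $w$, tracks the states $(t_{b+i},q_{b+i})$, and evaluates $g(z)=\max_{w\sqsubseteq z}\prod_{i=1}^{|w|}2\beta(q_{b+i})(w[i])$. I would then define a prefix machine $M$ which, on input $(j,t,q,x_u)$, computes $n=|x_u|-1$ and $b=\lfloor\gamma n\rfloor+1$, reconstructs $g$, enumerates the strings $z\in\{0,1\}^{n-b+1}$ with $g(z)>2^{\varepsilon(n-b)}$, selects the lexicographically $j$\textsuperscript{th} such string, and outputs $0^bz$. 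If~\eqref{eq:nowinx} fails then $v$ is the $j$\textsuperscript{th} such string for some $j$, and since $x_h=0^bv$ we get $M(j,t_b,q_b,x_u)=x_h$, hence $K(x_h\mid x_u)\le K(j)+K(t_b)+K(q_b)+O(1)\le\log j+O(\log n)$, using finiteness of $T,Q$ and~\eqref{eq:k}.

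Finally I would bound $j$ via the martingale identity $\sum_{w\in\{0,1\}^{\ell}}d_G(xw)/d_G(x)=2^{\ell}$: fewer than $2^{\ell-\varepsilon(n-b)}$ strings of length $\ell$ have ratio above $2^{\varepsilon(n-b)}$, so by a union bound over $0\le\ell\le n-b+1$ the number of $z$ with $g(z)>2^{\varepsilon(n-b)}$ is less than $2^{(1-\varepsilon)(n-b)+2}$; thus $\log j<(1-\varepsilon)(n-b)+2$, and combining with the previous step yields $K(x_h\mid x_u)<(1-\varepsilon)(n-b)+O(\log n)$, which is below $(1-\frac{\varepsilon}{2})(n-b)$ once $n$ is large. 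I do not anticipate a genuine obstacle here; the only thing that warrants a careful sentence is confirming that the oblivious definition of $U$, together with Observation~\ref{obs:speed}, makes $x_u$ a faithful transcript of the trailing heads' view on $X$ as well --- which is immediate, since head motion is data-independent and therefore identical on $X$ and on any $F_{h+1}$-image.
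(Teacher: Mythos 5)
Your proposal is correct and follows essentially the same route as the paper's own proof of Lemma~\ref{lem:dtokx}, which is likewise a near-verbatim adaptation of Lemma~\ref{lem:dtok}: simulate $G$ from $x_u$ and the state $(t_b,q_b)$ to compute $g$, enumerate the strings with $g(z)>2^{\varepsilon(n-b)}$ via a machine $M(j,t,q,x_u)$ printing $0^bz$, count them using the martingale identity and a union bound, and conclude by contraposition. Your explicit remark that obliviousness makes $x_u$ a faithful transcript of the trailing heads' view on $X$ matches the paper's (implicit) justification, so no gap remains.
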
 

\begin{proof}
    First observe that given $u$ and the state $(t_b,q_b)$ of $G$ after the leading head has read $x$, we can compute the function $g:\{0,1\}^{(n-b+1)}\to\Q$ defined by
    \[g(z)=\max_{w\sqsubseteq z}\frac{d_G(xw)}{d_G(x)},\]
    even without knowing $x$. This is because, for each $w\sqsubseteq z$, as $u$ contains all information the trailing heads could possibly access while the leading head's position is in $[b,n]\supseteq [b,b+|w|]$, we can simulate $G$ to determine its partial state sequence $((t_{b+1},q_{b+1}),\ldots,(t_{b+|w|},q_{b+|w|}))\in (T\times Q)^{|w|}$ as the leading head reads the suffix $w$ of $xw$. The function $g$ can then be calculated as
    \[g(z)=\max_{w\sqsubseteq z}\prod_{i=1}^{|w|}2\beta(q_{b+i})(w[i]).\]
    Hence, we can define a Turing machine $M$ that, given input
    \[(j,t,q,x_u)\in\N\times T\times Q\times\{0,1\}^{n+1},\]
    finds the lexicographically $j$\textsuperscript{th} string $z\in\{0,1\}^{n-b+1}$ such that $g(z)>2^{\varepsilon\cdot (n-b)}$ and prints $0^{b}z\in\{0,1\}^{n+1}$; the rationals $\gamma$ and $\varepsilon$ are included in the machine description, $n=|x_h|-1$, and $b=\lfloor\gamma n\rfloor$+1.
    
    Suppose now that~\eqref{eq:nowinx} does not hold, i.e., that $g(v)>2^{\varepsilon\cdot (n-b)}$, and let $j$ be its lexicographic position among all such strings. Then $M(j,t,q,x_u)$ prints $x_h$, so we have
    \begin{align}
        K(x_h\mid x_u)&\leq K(j,t,q)+O(1)\notag\\
        &\leq K(j)+K(t)+K(q)+O(1)\notag\\
        &\leq K(j)+O(1)\notag\\
        &\leq \log j+O(1),\label{eq:Kofjx}
    \end{align}
    as $j\leq 2^n$ trivially. We now bound $j$ more carefully. By the martingale property, for each $\ell\in\N$ we have
    \[\sum_{w\in\{0,1\}^{\ell}}\frac{d_G(xw)}{d_G(x)}=2^{\ell},\]
    so there are fewer than $2^{\ell-\varepsilon\cdot(n-b)}$ strings $w$ of length $\ell$ such that $\frac{d_G(xw)}{d_G(x)}>2^{\varepsilon\cdot(n-b)}$. By the union bound, the number of strings $z\in\{0,1\}^{n-b+1}$ with $g(z)>2^{\varepsilon\cdot(n-b)}$ is therefore less than
    \[\sum_{\ell=1}^{n-b+1}2^{\ell-\varepsilon\cdot(n-b)}<2^{(1-\varepsilon)(n-b)+2}\]
    As $j$ cannot exceed this number, we can combine this with~\eqref{eq:Kofjx} to get
    \[K(x_h\mid x_u)<(1-\varepsilon)(n-b)+O(1),\]
    which is less than $(1-\varepsilon/2)(n-b)$ when $n$ is sufficiently large. By contrapositive, this completes the proof of the lemma.
\end{proof}

\begin{lemma}\label{lem:dtokz}
    Let $h\in\Z^+$, $G$ an $h$-FSG, $\varepsilon>0$ a rational constant, and $n\in\N$ sufficiently large. Define $b$, $z_u$, and $z_h$ as above, let $v=z_h[b..n]$, and let $z=Z[0..b-1]$. If
    \[K(z_h\mid z_u)\geq (n-b)(1-\frac{\varepsilon}{2})\]
    then
    \begin{equation}\label{eq:nowinz}
        \max_{w\sqsubseteq v}\frac{d_G(zw)}{d_G(z)}\leq 2^{\varepsilon\cdot(n-b)}.
    \end{equation}
\end{lemma}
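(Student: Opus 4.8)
The plan is to reproduce the proof of Lemma~\ref{lem:dtokx} essentially verbatim, with the substitutions $Z$ for $X$, $z_h$ for $x_h$, $z_u$ for $x_u$, $z=Z[0..b-1]$ for $x=X[0..b-1]$, and $v=z_h[b..n]$ throughout. The two facts that drove the earlier argument carry over without change: (i) the string $z_u$ records the value of $Z$ at every position in $U=\bigcup_{i=1}^{h-1}[\sigma_i b-c,\sigma_i n+c]$, which by Observation~\ref{obs:speed} contains every cell visited by a trailing head while the leading head occupies a position in $[b,n]$; and (ii) the martingale identity $\sum_{w\in\{0,1\}^{\ell}}d_G(zw)/d_G(z)=2^{\ell}$, which does not reference the input at all. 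Neither fact depends on whether the parity cells of the underlying sequence are built from $p_1,\dots,p_{h-1}$ (as in $F'_{h+1}$) or from $p_2,\dots,p_h$ (as in $F''_{h+1}$), so no structural property of $Z$ beyond what is shared with $X$ is needed.

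Concretely, I would first observe that from $z_u$ together with the state $(t_b,q_b)$ reached by $G$ after its leading head scans $z$, one can compute the function $g:\{0,1\}^{n-b+1}\to\Q$ given by $g(z')=\max_{w\sqsubseteq z'}d_G(zw)/d_G(z)$ without knowing $z$ itself: for each prefix $w\sqsubseteq z'$ one simulates $G$ step by step as the leading head reads $w$, and all trailing-head readings needed in this simulation lie in $U$ and are therefore available in $z_u$. This yields a machine $M$ that, on input $(j,t,q,z_u)$, locates the lexicographically $j$th string $z'\in\{0,1\}^{n-b+1}$ with $g(z')>2^{\varepsilon(n-b)}$ and outputs $0^{b}z'$.

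Next, assuming toward the contrapositive that \eqref{eq:nowinz} fails, so $g(v)>2^{\varepsilon(n-b)}$ for $v=z_h[b..n]$, let $j$ be the lexicographic rank of $v$ among all such strings. Then $M(j,t,q,z_u)$ prints $z_h$, giving $K(z_h\mid z_u)\le K(j)+O(1)\le\log j+O(1)$. To bound $j$, the martingale identity limits to fewer than $2^{\ell-\varepsilon(n-b)}$ the number of length-$\ell$ strings $w$ with $d_G(zw)/d_G(z)>2^{\varepsilon(n-b)}$; summing over $\ell=0,\dots,n-b+1$ and applying the union bound shows that fewer than $2^{(1-\varepsilon)(n-b)+2}$ strings $z'$ satisfy $g(z')>2^{\varepsilon(n-b)}$. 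Hence $K(z_h\mid z_u)<(1-\varepsilon)(n-b)+O(1)$, which is below $(1-\varepsilon/2)(n-b)$ once $n$ is large enough, contradicting the hypothesis; taking the contrapositive establishes the lemma.

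The only point requiring care is the same one as in Lemma~\ref{lem:dtokx}: verifying that $U$ genuinely contains every trailing-head position occurring during the reading of $z_h$. This is immediate from Observation~\ref{obs:speed}, since when the leading head is at position $m\in[b,n]$, trailing head $i$ lies in $[\sigma_i m-c,\sigma_i m+c]\subseteq[\sigma_i b-c,\sigma_i n+c]$; and because this is a statement about head trajectories rather than tape contents, there is no obstacle specific to $Z$. I do not anticipate any genuine difficulty beyond transcribing the argument with the new names.
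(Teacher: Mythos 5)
Your proposal is correct and matches the paper exactly: the paper's own proof of this lemma simply states that the proof of Lemma~\ref{lem:dtokx} carries over verbatim, which is precisely the transcription you describe. Your observation that neither the simulation from $z_u$ nor the martingale counting argument depends on whether the parity structure uses $p_1,\dots,p_{h-1}$ or $p_2,\dots,p_h$ is the right justification for why the substitution is harmless.
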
 
\begin{proof}
    The same proof for Lemma~\ref{lem:dtokx} works here.
\end{proof}

\begin{lemma}\label{lem:dboundxz}
    Given $X = F'_{h+1}(R)$ and $Z = F''_{h+1}(R)$, where $R$ is a Martin-L\"of random sequence and $h \in \mathbb{Z}^+$. For every $\epsilon \in \Q \cap (0,1)$ and every $h$-head finite-state gambler $G$, exactly one of the following is true: $\{d_G^{(1-2\epsilon)}(X[1..n])\mid n \in \N \}$ is bounded, $\{d_G^{(1-2\epsilon)}(Z[1..n])\mid n \in \N \}$ is bounded, or both of them are bounded.
\end{lemma}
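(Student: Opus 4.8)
The three listed cases are mutually exclusive, and together they describe exactly the situation in which \emph{at least one} of the two sets is bounded by a constant; so it suffices to show that at least one of $\{d_G^{(1-2\epsilon)}(X[0..n-1])\mid n\in\N\}$ and $\{d_G^{(1-2\epsilon)}(Z[0..n-1])\mid n\in\N\}$ is bounded. The plan is to recycle the machinery of Section~\ref{sec:hierarchy}, inserting one new combinatorial observation that decides \emph{which} of $X=F'_{h+1}(R)$, $Z=F''_{h+1}(R)$ the gambler $G$ must fail on.

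First I would apply Observation~\ref{obs:speed} to extract the trailing head speeds $\sigma_1<\cdots<\sigma_{h-1}$ of $G$; because head movements are oblivious, these same speeds govern $G$ both on $X$ and on $Z$. The key point is a pigeonhole argument. To catch, in each parity round, all $h-1$ bits referenced by $F'_{h+1}$, the trailing head speeds would have to be exactly $\{p_1/p_{h+1},\ldots,p_{h-1}/p_{h+1}\}$; to catch all $h-1$ bits referenced by $F''_{h+1}$ they would have to be exactly $\{p_2/p_{h+1},\ldots,p_h/p_{h+1}\}$. These are two distinct $(h-1)$-element sets (the first contains $2/p_{h+1}$ and not $p_h/p_{h+1}$, the second vice versa), so they cannot both equal $\{\sigma_1,\ldots,\sigma_{h-1}\}$. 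Hence, without loss of generality, the $F'_{h+1}$ set differs from $\{\sigma_1,\ldots,\sigma_{h-1}\}$, and since both have $h-1$ elements there is an index $j$ with $1\le j\le h-1$ and $p_j/p_{h+1}\notin\{\sigma_1,\ldots,\sigma_{h-1}\}$. (If instead it is the $F''_{h+1}$ set that differs, the remainder of the argument runs identically with $Z$, $F''_{h+1}$, and some $j$ with $2\le j\le h$.) This is the only place obliviousness enters, and it is the conceptual core of the proof.

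With this $j$ fixed, I would run the Section~\ref{sec:hierarchy} argument specialized to $X$. Following the proof of Lemma~\ref{lem:disjoint} --- but now with $j$ chosen in advance, rather than produced by the lemma --- I would pick $\gamma\in(1/2,1)\cap\Q$ close enough to $1$ and a threshold $n_0$ so that for all $n\ge n_0$ the trailing head reach $U=\bigcup_i[\sigma_i b-c,\sigma_i n+c]$ is disjoint from $\overline V_j$: since $p_j/p_{h+1}$ is not among the $\sigma_i$, the finitely many relevant ratios (the head speeds together with the recursion ratios defining $\overline V_j$) have a positive minimum gap, and taking $\gamma$ near $1$ shrinks each interval $[\sigma_i\gamma n,\sigma_i n]$ enough to separate it from the scaled copies of $\overline V_j$. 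Given this disjointness, Lemma~\ref{lem:3inequxz} yields $K(x_h\mid x_u)\ge(n-b)(1-p_{h+1}^{-d-1})-O(\log n)$; choosing the constant $d$ so that $p_{h+1}^{-d-1}<\epsilon/4$ makes this at least $(n-b)(1-\epsilon/2)$ once $n$ is large. Lemma~\ref{lem:dtokx} then bounds $\max_{w\sqsubseteq X[b..n]}d_G(X[0..b-1]w)/d_G(X[0..b-1])$ by $2^{\epsilon(n-b)}$, and the iteration over geometric breakpoints $N_k=\lfloor N_0/\gamma^k\rfloor$ exactly as in the proof of Lemma~\ref{lem:dbound} (using $\gamma>1/2$) gives $d_G^{(1-2\epsilon)}(X[0..n-1])\le d_G(X[0..N_0-1])$ for every $n$, a constant. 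In the complementary case the same chain with $Z$ and Lemma~\ref{lem:dtokz} bounds $\{d_G^{(1-2\epsilon)}(Z[0..n-1])\mid n\in\N\}$ instead. Either way at least one of the two sets is bounded, which is what the lemma asserts.

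I expect the disjointness step to be the main obstacle: one must check carefully that the shared head speeds genuinely force a ``missing prime'' for at least one of $X$ and $Z$ (the two target ratio sets are distinct, but the two sequences share one speed vector), and then reprove the interval-separation estimate of Lemma~\ref{lem:disjoint} with the sequence and the bad index $j$ supplied by hand rather than by the lemma. Everything downstream --- the Kolmogorov-complexity lower bound, the no-win bound, and the geometric iteration --- is a direct transcription of Lemmas~\ref{lem:3inequxz}, \ref{lem:dtokx}, \ref{lem:dtokz}, and the proof of Lemma~\ref{lem:dbound}.
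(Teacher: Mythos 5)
Your overall reduction (``at least one of the two sets is bounded'') and the downstream machinery (complexity lower bound, Lemma~\ref{lem:dtokx}/\ref{lem:dtokz}, geometric iteration with $N_k=\lfloor N_0/\gamma^k\rfloor$, and the explicit choice of $d$ with $p_{h+1}^{-d-1}<\epsilon/4$) match the paper. But your ``conceptual core''---choosing $j$ by comparing the speed set $\{\sigma_1,\ldots,\sigma_{h-1}\}$ with the two prime-ratio sets and then claiming $U\cap\overline V_j=\emptyset$ for $\gamma$ near $1$---has a genuine gap. The condition $p_j/p_{h+1}\notin\{\sigma_1,\ldots,\sigma_{h-1}\}$ does not yield the needed disjointness, because the footprint of $\overline V_j$ below $b$ is not a single scaled copy of $[b,n]$ at ratio $p_j/p_{h+1}$: by the closure (and by the exponents $\nu_{h+1}(m)\le d$ in $V_j$ itself), it consists of scaled copies at all the finitely many ratios $(p_j/p_{h+1})^{k'}\,p_{h+1}^{a_1+\cdots+a_h}/(p_1^{a_1}\cdots p_h^{a_h})\in(0,1)$, e.g.\ $(p_j/p_{h+1})^2$ or $p_j^2/(p_\ell p_{h+1})$. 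An adversarial gambler may have a trailing head whose speed equals one of these other ratios while avoiding $p_j/p_{h+1}$ itself; then the head interval $[\sigma_i b-c,\sigma_i n+c]$ overlaps that copy of $\overline V_j$ no matter how close $\gamma$ is to $1$ (indeed the head then reads roughly $(n-b)p_{h+1}^{-2}$ actual bits of $v_j$), so there is no ``positive minimum gap,'' the hypothesis of Lemma~\ref{lem:3inequxz} fails for your $j$, and you cannot reach $K(x_h\mid x_u)\ge(n-b)(1-\epsilon/2)$ for small $\epsilon$. The heuristic ``to catch all referenced bits the speeds must be exactly the prime ratios'' is also only a heuristic; nothing in the formal argument uses or licenses it.

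The paper avoids this entirely: it does not pre-select which of $X,Z$ the gambler fails on. It invokes Lemma~\ref{lem:disjoint} as stated, whose pigeonhole is carried out over all $h$ closure families $\overline V_1,\ldots,\overline V_h$ simultaneously (precisely so that coincidences of head speeds with \emph{any} of the closure ratios are accounted for), obtaining some $1\le j\le h$ with $U\cap\overline V_j=\emptyset$. Then the case split is trivial: if $1\le j\le h-1$ the $X$-side of Lemma~\ref{lem:3inequxz} applies and iterating Lemma~\ref{lem:dtokx} bounds $d_G^{(1-2\epsilon)}$ along $X$; if $2\le j\le h$ the $Z$-side applies via Lemma~\ref{lem:dtokz}; every $j\in[1,h]$ falls in at least one range, which is exactly the ``at least one is bounded'' conclusion. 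To repair your write-up, drop the speed-versus-prime-set comparison and let the index supplied by Lemma~\ref{lem:disjoint} decide which sequence is handled; the rest of your argument then goes through as you describe.
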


\begin{proof}
    Let $\gamma$ and $n_0$ be as in Lemma~\ref{lem:disjoint}, and choose a constant $N_0\geq n_0$ that is sufficiently large to apply Lemma~\ref{lem:dtokx}. Let $x_0= X[0..N_0-1].$ For each $k \in \Z^+,$ let $N_k = \lfloor \frac{N_0}{\gamma^k} \rfloor$ and $x_k = X[N_{k-1}..N_k-1]$.
    
    Now let $n\geq N_0$, and let $k$ be such that $N_{k-1}<n\leq N_{k+1}$. Then
    \[X[0..n-1]=x_0x_1\cdots x_k X[N_{k}..n-1].\]
    When $1 \le j \le h-1$, we bound the value of the martingale $d_G$ on this prefix by iterative application of Lemma~\ref{lem:dtokx}.
    \begin{align*}
        d_G([X[0..n-1]) & = d_G(x_0)\frac{d_G(X[0..N_1-1])}{d_G(x_0)}\cdots \frac{d_G(X[0..N_k-1])}{d_G(X[0..N_{k-1}-1])}\frac{d_G(X[0..n-1])}{d_G(X[0..N_{k}-1])} \\ 
    & \leq d_G(x_0)2^{\epsilon(|x_1|+ \cdots |x_{k+1}|)} \tag{by Lemma~\ref{lem:dtokx}}\\
    & \leq d_G(x_0) 2^{\epsilon N_{k+1}} \\
    & \le d_G(x_0) 2^{\epsilon n/\gamma} \\
    & \le d_G(x_0) 2^{2 \epsilon n}.\tag{$\gamma\in(1/2,1)$}
    \end{align*}
    Hence, 
    \begin{align*}
        d_G^{(1-2\epsilon)}([X[0..n-1]) & = 2^{(1-2\epsilon -1)n}d_G([X[0..n-1]) \\
    & \le d_G(x_0),
    \end{align*}
    which is a constant.

    Use the same definition for $Z$, \[Z[0..n-1]=z_0z_1\cdots z_k Z[N_{k}..n-1].\] We can bound the value of the martingale $d_G$ on this prefix by iterative application of Lemma~\ref{lem:dtokz}. By the same reasoning, we have $d_G^{(1-2\epsilon)}([Z[0..n-1]) \le d_G(z_0),$ which is a constant.    
\end{proof}

\begin{lemma}\label{lem:dimxz}
   Given $X = F'_{h+1}(R)$ and $Z = F''_{h+1}(R)$, where $R$ is a Martin-L\"of random sequence and $h \in \mathbb{Z}^+$, \[\dimFS{(h)}(X) \le \DimFS{(h)}(X) \le 1-\frac{1}{p_{h+1}}\] and \[\dimFS{(h)}(Z) \le \DimFS{(h)}(Z) \le 1-\frac{1}{p_{h+1}}.\]
\end{lemma}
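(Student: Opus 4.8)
The plan is to mirror the proof of Lemma~\ref{lem:dim}: for each rational $\epsilon\in(0,1)$ I will exhibit an explicit $h$-FSG whose $(1-1/p_{h+1}+\epsilon)$-gale strongly succeeds on $X=F'_{h+1}(R)$, and a second $h$-FSG doing the same for $Z=F''_{h+1}(R)$. Since neither construction uses the randomness of $R$, in fact the first strongly succeeds on $F'_{h+1}(S)$ and the second on $F''_{h+1}(S)$ for every $S\in\C$. Letting $\epsilon\to 0$ and applying Definitions~\ref{mathcalG} and~\ref{hdim} then yields $\DimFS{(h)}(X)\le 1-1/p_{h+1}$ and $\DimFS{(h)}(Z)\le 1-1/p_{h+1}$, and the corresponding inequalities for $\dimFS{(h)}$ follow at once since $S_{\strong}^{\infty}[d]\subseteq S^{\infty}[d]$ for every gale $d$.

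For $X=F'_{h+1}(R)$, recall that $X[qp_{h+1}]$ is the parity of the $h-1$ bits $X[qp_1],\ldots,X[qp_{h-1}]$. I take the $(h+1)$-FSG built in the proof of Lemma~\ref{lem:dim} and delete its fastest trailing head, the one that tracks index $qp_h$; this leaves one leading head together with $h-1$ trailing heads. Trailing head $i$ is assigned the movement schedule that advances it exactly $p_i$ times in each round of $p_{h+1}$ leading-head steps, so that whenever the leading head is at $qp_{h+1}$, head $i$ reports the bit $X[qp_i]$; this is consistent because $p_{h-1}<p_{h+1}$, so no trailing head ever overtakes the leader. The $Q$-component of the state accumulates the XOR of the symbols reported by heads $1,\ldots,h-1$ at the relevant steps of the round, and the betting function wagers the whole capital on the resulting prediction of $X[qp_{h+1}]$ while betting $(1/2,1/2)$ on every other bit. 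Exactly as in Lemma~\ref{lem:dim}, one obtains $d_G(X[0..n-1])=2^{\lfloor n/p_{h+1}\rfloor}$, hence $d_G^{(1-1/p_{h+1}+\epsilon)}(X[0..n-1])\ge 2^{\epsilon n-1}$, which diverges; so this gale succeeds strongly on $X$.

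For $Z=F''_{h+1}(R)$, the bit $Z[qp_{h+1}]$ is instead the parity of $Z[qp_2],\ldots,Z[qp_h]$. I now delete the \emph{slowest} trailing head of the Lemma~\ref{lem:dim} gambler, the one tracking $qp_1=2q$, and relabel the $h-1$ surviving trailing heads so that head $i$ advances $p_{i+1}$ times per round, placing it at $qp_{i+1}$ when the leader is at $qp_{h+1}$; since $p_h<p_{h+1}$, the trailing heads again stay behind the leader. The XOR-accumulating state component and the betting function are adjusted to consult these $h-1$ heads, and the capital computation and the conclusion of strong success go through verbatim.

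The one point demanding genuine care is the head-speed bookkeeping: checking that the reduced family of $h-1$ trailing heads, with their reassigned per-round advance counts, lands precisely on the indices named in the parity rule of $F'_{h+1}$ (resp.\ $F''_{h+1}$) at exactly the moment the leading head reaches a multiple of $p_{h+1}$, and that the positional-state cycle synchronizes all of this. This is the same kind of verification already carried out in the proof of Lemma~\ref{lem:dim}, so I anticipate no real obstacle---only attention to the indexing.
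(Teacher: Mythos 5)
Your proposal is correct and follows essentially the same route as the paper's proof: the paper likewise builds the two $h$-FSGs by taking the $(h+1)$-FSG from Lemma~\ref{lem:dim} and omitting, respectively, the trailing head tracking $qp_h$ (for $F'_{h+1}$) and the one tracking $qp_1$ (for $F''_{h+1}$), obtaining $d_G(\cdot)=2^{\lfloor n/p_{h+1}\rfloor}$ and hence the $1-\frac{1}{p_{h+1}}$ bound on both predimension and strong predimension. The only difference is that the paper writes out the movement/transition/betting functions explicitly, which is the indexing bookkeeping you deferred.
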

\begin{proof}
    Let $G_x, G_z$ be $h$-head finite-state gamblers and let $d_{G_x}, d_{G_z}$ be their corresponding martingales. 
    
    The idea of constructing $G_x$ and $G_z$ is similar to the idea of constructing $G$ for proving Lemma~\ref{lem:dim}. 
    
    Define $G_x = (T \times Q, \Sigma, \delta, \mu, \beta, (t_0, q_0), c_0)$ as below:
    \begin{itemize}
        \item $\Sigma = \{0,1\}.$
        \item $|T\times Q| = 2p_{h+1}-1, T=\{t_0, t_1, t_2,\cdots, t_{p_{h+1}}\}, Q = \{q_0, q_1, \cdots, q_{2p_{h+1}-2}\}$.
        \item $\mu(t_0) = \mu(t_1) = \mu(t_{p_{h+1}}) = \{\overbrace{1\;\cdots\;1}^{h-1}\}$;
        
        $\forall 2 \le i \le h, \; \forall p_{i-1}-1<j\le p_i-1, \mu(t_j)=\{\overbrace{0\;\cdots\;0}^{i-1}\;1\;\cdots\;1\}$.
        
        When $i=h+1, \; \forall p_{i-1}-1<j\le p_i-1, \mu(t_j)=\{\overbrace{0\;\cdots\;0}^{h-1}\}$.
        \item $\forall 0 \le i \le p_{h+1}-1, \delta_T(t_i) = t_{i+1}$ and $\delta_T(t_{p_{h+1}}) = t_0.$
        \item For all $\vec{\sigma} \in \Sigma^h, \; \delta_Q(q_0, \vec{\sigma}) = q_1, \delta_Q(q_1, \vec{\sigma}) = q_2$; 

        Let $\vec{\sigma}[i]$ be the $i$\textsuperscript{th} element in the vector $\vec{\sigma}$,
        \[
    \delta_Q(q_2, \vec{\sigma}) = 
    \begin{cases}
    q_3, & \text{if } \vec{\sigma}[1]=1,\\
    q_4,  & \text{if } \vec{\sigma}[1]=0.
    \end{cases}
    \]
    $\forall 2 \le i \le h-1,\text{ Let } j= 2(p_i-1)$, then
    \[
    \delta_Q(q_j, \vec{\sigma}) = 
    \begin{cases}
    q_{j+2}, & \text{if } \vec{\sigma}[i]=0,\\
    q_{j+1},  & \text{if } \vec{\sigma}[i]=1.
    \end{cases}
    \]
    and 
    \[
    \delta_Q(q_{j-1}, \vec{\sigma}) = 
    \begin{cases}
    q_{j+2}, & \text{if } \vec{\sigma}[i]=1,\\
    q_{j+1},  & \text{if } \vec{\sigma}[i]=0.
    \end{cases}
    \]
    $\forall \vec{\sigma} \in \Sigma^h, \;  \delta_Q(q_{2p_{h+1}-2}, \vec{\sigma}) = \delta_Q(q_{2p_{h+1}-3}, \vec{\sigma})=q_1$.
    
    For all other states, $\forall \vec{\sigma} \in \Sigma^h, \;  \delta_Q(q_i, \vec{\sigma}) = q_{i+2}$.
    \item $\beta(q_{2p_{h+1}-2})(0)=1, \beta(q_{2p_{h+1}-2})(1)=0, \beta(q_{2p_{h+1}-3})(0)=0, \beta(q_{2p_{h+1}-3})(1)=1$; $\forall q \in Q\setminus\{q_{2p_{h+1}-2}, q_{2p_{h+1}-3}\}, \beta(q)(0) = \beta(q)(1) = \frac{1}{2}$.
   \end{itemize}
    
    Based on our construction, \[d_{G_x}(X[1..n]) = 2^{\lfloor n/p_{h+1} \rfloor}\]
    Then we can show $\forall\epsilon \in \Q \cap (0,1), \; \exists \text{ a }(1-\frac{1}{p_{h+1}} + \epsilon)$-gale that succeeds on $X$.

    Since 
    \begin{align}
        d_{G_x}^{\Big(1-\frac{1}{p_{h+1}} + \epsilon\Big)}(X[1..n]) &= 2^{\Big(1-\frac{1}{p_{h+1}} + \epsilon -1\Big)n}d_{G_x}(X[1..n]) \notag\\
        & = 2^{\Big(\epsilon -\frac{1}{p_{h+1}}\Big)n}\times 2^{\lfloor \frac{n}{p_{h+1}} \rfloor} \notag\\
        & \ge 2^{\Big(\epsilon -\frac{1}{p_{h+1}}\Big)n}\times 2^{\frac{n}{p_{h+1}} -1} \notag\\
        & = 2^{\epsilon n -1}.\label{eq:dgw}
    \end{align}
    Therefore, \[\dimFS{(h)}(X) \le \DimFS{(h)}(X) \le 1-\frac{1}{p_{h+1}}.\]
        
    Define $G_z = (T \times Q, \Sigma, \delta, \mu, \beta, (t_0, q_0), c_0)$ as below:
    \begin{itemize}
        \item $\Sigma = \{0,1\}.$
        \item $|T\times Q| = 2p_{h+1}-2, T=\{t_0, t_1, t_2,\cdots, t_{p_{h+1}}\}, Q = \{q_0, q_1, \cdots, q_{2p_{h+1}-3}\}$.
        \item $\mu(t_0) = \mu(t_1) = \mu(t_2)=\mu(t_{p_{h+1}}) = \{\overbrace{1\;\cdots\;1}^{h-1}\}$;
        
        $\forall 3 \le i \le h+1, \; \forall p_{i-1}-1<j\le p_i-1, \mu(t_j)=\{\overbrace{0\;\cdots\;0}^{i-2}\;1\;\cdots\;1\}$.
        \item $\forall 0 \le i \le p_{h+1}-1, \delta_T(t_i) = t_{i+1}$ and $\delta_T(t_{p_{h+1}}) = t_0.$
        \item $\forall \vec{\sigma} \in \Sigma^h, \;  \delta_Q(q_0, \vec{\sigma}) = q_1, \delta_Q(q_1, \vec{\sigma}) = q_2,\delta_Q(q_2, \vec{\sigma}) = q_3$; 
        
         Let $\vec{\sigma}[i]$ be the $i$\textsuperscript{th} element in the vector $\vec{\sigma}$,  
        \[
    \delta_Q(q_3, \vec{\sigma}) = 
    \begin{cases}
    q_4, & \text{if } \vec{\sigma}[1]=0,\\
    q_5,  & \text{if } \vec{\sigma}[1]=1.
    \end{cases}
    \]
    $\forall 2 \le i \le h-1,\text{ Let } j= 2(p_{i+1}-2)$, then
    \[
    \delta_Q(q_j, \vec{\sigma}) = 
    \begin{cases}
    q_{j+2}, & \text{if } \vec{\sigma}[i]=0,\\
    q_{j+3},  & \text{if } \vec{\sigma}[i]=1.
    \end{cases}
    \]
    and 
    \[
    \delta_Q(q_{j+1}, \vec{\sigma}) = 
    \begin{cases}
    q_{j+2}, & \text{if } \vec{\sigma}[i]=1,\\
    q_{j+3},  & \text{if } \vec{\sigma}[i]=0.
    \end{cases}
    \]
    $\forall \vec{\sigma} \in \Sigma^h, \;  \delta_Q(q_{2p_{h+1}-3}, \vec{\sigma}) = \delta_Q(q_{2p_{h+1}-4}, \vec{\sigma})=q_1$.
    
    For all other states, $\forall \vec{\sigma} \in \Sigma^h, \;  \delta_Q(q_i, \vec{\sigma}) = q_{i+2}$.
    \item $\beta(q_{2p_{h+1}-3})(0)=0, \beta(q_{2p_{h+1}-3})(1)=1, \beta(q_{2p_{h+1}-4})(0)=1, \beta(q_{2p_{h+1}-4})(1)=0$; $\forall q \in Q\setminus\{q_{2p_{h+1}-3}, q_{2p_{h+1}-4}\}, \beta(q)(0) = \beta(q)(1)=\frac{1}{2}$.
   \end{itemize}
    
    Based on our construction, \[d_{G_z}(Z[1..n]) = 2^{\lfloor n/p_{h+1} \rfloor}\]
    Then we can show $\forall\epsilon \in \Q \cap (0,1), \; \exists \text{ a }(1-\frac{1}{p_{h+1}} + \epsilon)$-gale that succeeds on $Z$. The computation is identical to~\eqref{eq:dgw}.

    Therefore,
    \[\dimFS{(h)}(Z) \le \DimFS{(h)}(Z) \le 1-\frac{1}{p_{h+1}}.\]
\end{proof}

\begin{proof}[Proof of Theorem~\ref{thm:unstable}]
Fix any $h\in\N$ and any Martin-L\"of random sequence $R\in\C$, and let $X=F'_{h+1}(R)$ and $Z=F''_{h+1}(R)$.

Based on Lemma~\ref{lem:dimxz}, we can find $h$-head finite-state gambler $G_X$ and $G_Z$ such that their martingales can succeed on $X$ and $Z$ correspondingly.

Fix $\gamma \text{ and } j$ by Lemma~\ref{lem:disjoint}, such that $U\cap \overline{V}_j=\emptyset$ for some $0\leq j\le h$.

We need to talk about three cases:
\begin{itemize}
    \item $j = 1$;
    \item $2 \le j \le h-1$;
    \item $j = h$.
\end{itemize}

By using Lemma~\ref{lem:dboundxz}, $\forall \epsilon \in \Q \cap (0,1), \forall h$-head finite-state gambler $G$, exactly one of the following is true: $\{d_G^{(1-2\epsilon)}(X[1..n])\mid n \in \N \}$ is bounded, $\{d_G^{(1-2\epsilon)}(Z[1..n])\mid n \in \N \}$ is bounded, both of them are bounded. Letting $\epsilon$ approach 0, we have
\[\dimFS{(h)}(\{X, Z\})=\DimFS{(h)}(\{X,Z\})=1,\]
Consequently,
\[\dimFS{(h)}(\{X, Z\})>\max\{\dimFS{(h)}(\{X\}), \dimFS{(h)}(\{Z\})\}\]
and
\[\DimFS{(h)}(\{X, Z\})>\max\{\DimFS{(h)}(\{X\}), \DimFS{(h)}(\{Z\})\}.\]
\end{proof}

\section{Appendix: Proof of Theorem~\ref{thm:appd}}\label{app:stability}

\begin{proof}[Proof of Theorem~\ref{thm:appd}]
    For $j \in \{1,2\}$, set $G_j = (T_j \times Q_j, \Sigma,\delta_j, \mu_j, \beta_j, (t_j, q_j), c_j)$, also the corresponding transition functions $\delta_{T_j}, \delta_{Q_j}$ and $d_j = d_{G_j}$. Assume without loss of generality that $T_1 \cap T_2 = \emptyset$, $Q_1 \cap Q_2 = \emptyset$, and $c_1=c_2=1$.

    Let $h = h_1+h_2-1$ and let $\epsilon$ be as given, let $r = \lceil 1- \log_2 (1 - 2^{-\frac{\epsilon}{2}})\rceil$,
    and let $\mathbb{Q}_r$ be the set of all $r$-dyadic rational numbers, i.e., all rational numbers $\zeta$ such that $2^r \zeta \in \mathbb{Z}$.

    Define the rounding operator $[\cdot]_r: [0,1] \rightarrow \mathbb{Q}_r$ by 
    \[
    [x]_r = 
    \begin{cases}
    2^{-r} \lceil 2^rx \rceil, & \text{if } x \le \frac{1}{2},\\
    2^{-r} \lfloor 2^rx \rfloor,  & \text{if } x > \frac{1}{2}.
    \end{cases}
    \]

    Note that given the exact capital allocation ratio $\frac{d_1(w)}{d_1(w)+d_2(w)}$, we could compute the betting function that implements the exact average $\frac{d_1(w) + d_2(w)}{2}$. Unfortunately, this ratio cannot, in general, be stored within finite states. Hence, we maintain an approximate allocation ratio $\alpha$ as part of the gambler’s state.

    Define an $h$-FSG $G = (T \times Q, \Sigma,\delta, \mu, \beta, ((t_1,t_2), (q_1,q_2, \frac{1}{2})), c_0)$, whose components are defined as follows:
    \begin{itemize}
        \item $T = T_1 \times T_2$, $Q = Q_1 \times Q_2 \times (\mathbb{Q}_r \cap [0,1])$.
        \item $\delta: T \times Q \times \Sigma^h \rightarrow T \times Q$ is defined, for all $(t', q') \in T_1 \times Q_1, (t'', q'') \in T_2 \times Q_2$, and $\alpha \in \mathbb{Q}_r \cap [0,1]$, by
        \begin{align*}
            \delta_T(t', t'') &= (\delta_{T_1}(t'), \delta_{T_2}(t'')),\\
            \delta_Q((q',q'',\alpha), \vec\sigma_1[1..h_1-1]\cdot \vec\sigma_2)&= \big(\delta_{Q_1}(q',\vec\sigma_1), \delta_{Q_2}(q'', \vec\sigma_2), \alpha' \big),
           \end{align*}
           where $\vec\sigma_1[1..h_1-1]\cdot \vec\sigma_2$ is the concatenation of those two vectors and
           \[\alpha'=\Bigg[\frac{\alpha \beta_1(q')(\vec\sigma_2[h_2])}{\alpha \beta_1(q')(\vec\sigma_2[h_2]) + (1-\alpha)\beta_2(q'')(\vec\sigma_2[h_2])}\Bigg]_r.\]
           Note that $\vec\sigma_1[h_1] = \vec\sigma_2[h_2]$ because both gamblers' leading heads are in the same position.

        \item $\mu: T_1 \times T_2 \rightarrow \{0,1\}^{h_1 + h_2 -2}$ is defined, for all $t' \in T_1, t'' \in T_2$, by
        \[\mu(t', t'') = \mu_1(t') \cdot \mu_2(t'').\]
        \item $\beta:Q\to\Delta_\Q(\Sigma)$ is defined, for all $q' \in Q_1$, $q'' \in Q_2$, $\alpha \in \mathbb{Q}_r \cap [0,1]$, and $b\in \Sigma$, by
        \[\beta(q',q'',\alpha)(b)= \alpha \beta_1(q')(b) + (1-\alpha)\beta_2(q'')(b).\]
        \item The initial state is $((t_1,t_2), (q_1,q_2, \frac{1}{2})) \in T \times Q$.
        \item The initial capital is $c_0=\frac{c_1+c_2}{2}=1$.
    \end{itemize} 
    Let $d = d_{G}$ and $k = |\Sigma|$.
$d(w)$ is recursively defined by $d(\lambda)= c_0$ and, for all $w \in \Sigma^*$ and $b \in \Sigma$ if $G$ reaches the state $(q',q'', \alpha)$ after reading $w$, then
\[d(wb) = k d(w)\beta(q',q'',\alpha)(b).\]

Define $\tilde{d}_1: \Sigma^* \rightarrow \mathbb{Q} \cap [0,\infty)$ and $\tilde{d}_2: \Sigma^* \rightarrow \mathbb{Q} \cap [0,\infty)$ as follows:
\[\tilde{d}_1(wb) = 2d(wb) \times \frac{\alpha \beta_1(q')(b)}{\alpha \beta_1(q')(b) + (1-\alpha)\beta_2(q'')(b)},\]
and 
\begin{align*}
    \tilde{d}_2(wb)
    &= 2d(wb) \times \frac{(1-\alpha) \beta_2(q'')(b)}{\alpha \beta_1(q')(b) + (1-\alpha)\beta_2(q'')(b)},
\end{align*}
where $\tilde{d}_1(\lambda)= \tilde{d}_2(\lambda) = 1$.

We now analyze the relationship between $d$, $d_1$, and $d_2$. Let
\[\hat{\alpha}= \frac{\alpha \beta_1(q')(b)}{\alpha \beta_1(q')(b) + (1-\alpha)\beta_2(q'')(b)}.\]
Each $\alpha$ has a corresponding $\hat{\alpha}$ which is in the same iteration, and $\alpha = [\hat{\alpha}]_r$. In the initial state, when the value of $\alpha$ is $1/2$, we set the corresponding $\hat{\alpha}$ to be $1/2$ as well. Hence, we can also write the formulas for $\tilde{d}_1$ and $\tilde{d}_2$ as
\[\tilde{d}_1(wb) = 2d(wb) \times \hat{\alpha}\quad\text{and}\quad\tilde{d}_2(wb) = 2d(wb) \times (1-\hat{\alpha}).\]

Now we show that for all $w \in \Sigma^*$ and $b \in \Sigma$,
\[\tilde{d}_1(wb) = k \tilde{d}_1(w) \frac{\alpha}{\hat{\alpha}} \beta_1(q')(b)\quad\text{and}\quad\tilde{d}_2(wb) = k \tilde{d}_2(w) \frac{1-\alpha}{1-\hat{\alpha}} \beta_2(q'')(b).\]
For the first,
\begin{align*}
    \tilde{d}_1(wb) &= 2d(wb) \times \frac{\alpha \beta_1(q')(b)}{\alpha \beta_1(q')(b) + (1-\alpha)\beta_2(q'')(b)} \tag{def. of $\tilde{d_1}$}\\
    &= 2 k d(w)\beta(q',q'',\alpha)(b) \times \frac{\alpha \beta_1(q')(b)}{\alpha \beta_1(q')(b) + (1-\alpha)\beta_2(q'')(b)}\tag{def. of  $d$}\\
    &= 2 k d(w) \alpha \beta_1(q')(b)\tag{def. of $\beta$}\\
    &= 2k \frac{\tilde{d_1}(w)}{2\hat{\alpha}} \alpha \beta_1(q')(b) \tag{def. of $\tilde{d_1}$}\\
    &= k \tilde{d}_1(w) \frac{\alpha}{\hat{\alpha}} \beta_1(q')(b),
\end{align*}
and similarly for the second.

Based on the definition of $[\cdot]_r$, we know that $\hat{\alpha}$ is rounded down to $\alpha$ only when $\hat{\alpha}> \frac{1}{2}$. Hence, $\alpha > \hat{\alpha} - 2^{-r}$. Therefore, both $\frac{\alpha}{\hat{\alpha}}$ and $\frac{1-\alpha}{1-\hat{\alpha}}$ are strictly greater than $1-2^{1-r}$.

Next, we want to show the following:
    \begin{align}
    \tilde{d}_1(w) & \ge (1-2^{1-r})^{|w|}d_1(w) \label{eq:dd1},\\
    \tilde{d}_2(w) & \ge (1-2^{1-r})^{|w|}d_2(w) \label{eq:dd2}.
    \end{align}
We use induction to show that \eqref{eq:dd1} holds for all $w$; the argument for \eqref{eq:dd2} is identical. The base case is that \eqref{eq:dd1} holds if $w = \lambda$, since $\tilde{d}_1(w) = 1$ and
\[(1-2^{1-r})^{|w|}d_1(w) = (1-2^{1-r})^0 \cdot 1 = 1.\]
For the inductive step, let $\ell\in\N$ and assume \eqref{eq:dd1} holds for all $w \in \{0,1\}^\ell$. Let $w'= wb$, where $w \in \Sigma^\ell$ and $b \in \Sigma$. Then we have the following:
\begin{align*}
    \tilde{d}_1(wb) &= k \tilde{d}_1(w) \frac{\alpha}{\hat{\alpha}} \beta_1(q')(b) \\
    & \ge k\times (1-2^{1-r})^{|w|}d_1(w) \frac{\alpha}{\hat{\alpha}} \beta_1(q')(b) \\
    &= kd_1(w)\beta_1(q')(b)\times (1-2^{1-r})^{|w|}\frac{\alpha}{\hat{\alpha}} \\
    &= d_1(wb) \times (1-2^{1-r})^{|w|}\frac{\alpha}{\hat{\alpha}} \\
    & \geq  d_1(wb) \times (1-2^{1-r})^{|w|} (1-2^{1-r})\\
    & = d_1(wb) \times (1-2^{1-r})^{|w|+1} \\
    & = d_1(wb) \times (1-2^{1-r})^{|wb|}.
 \end{align*}

Therefore, when we choose $w$ such that its length is sufficiently large, satisfying $\frac{1}{|w|} \le \frac{\epsilon}{2}$ and choose $r =\lceil 1- \log_2 (1 - 2^{-\frac{\epsilon}{2}})\rceil$, then we have:
\begin{align*}
    d(w) &= \frac{\tilde{d}_1(w) + \tilde{d}_2(w)}{2} \\
    & \ge \frac{(1-2^{1-r})^{|w|}}{2}(d_1(w) + d_2(w)) \\
    & \ge 2^{-\epsilon |w|}(d_1(w) + d_2(w)).
\end{align*}

It follows that
$2^{(s+\epsilon -1)|w|}d(w) \ge 2^{(s-1)|w|} d_1(w)$
and $2^{(s+\epsilon -1)|w|}d(w) \ge 2^{(s-1)|w|} d_2(w)$ hold for all sufficiently long strings $w$. Hence,
\[S^{\infty}\big[d_{G_1}^{(s)}\big] \cup S^{\infty}\big[d_{G_2}^{(s)}\big] \subseteq S^{\infty}\big[d_{G}^{(s+\epsilon)}\big]\quad\text{and}\quad S_{\strong}^{\infty}\big[d_{G_1}^{(s)}\big] \cup S_{\strong}^{\infty}\big[d_{G_2}^{(s)}\big] \subseteq S_{\strong}^{\infty}\big[d_{G}^{(s+\epsilon)}\big].\]
\end{proof}

\end{document}